\documentclass[acmsmall,screen]{acmart}%

\setcopyright{none}

\usepackage{amsfonts}
\usepackage{mathrsfs}
\usepackage{amsmath}
\usepackage{mathtools}
\usepackage{mathpartir}
\usepackage{bbding}
\usepackage{color}
\usepackage[capitalize,nameinlink]{cleveref}
\usepackage{xparse}
\usepackage{ifthen}
\usepackage{thmtools}
\usepackage{thm-restate}
\usepackage{subcaption}
\usepackage{tikz}
\usepackage{newfile}
\usepackage{totcount}

\regtotcounter{@todonotes@numberoftodonotes}

\usepackage{complexity-mod}

\usepackage{listings}

  \lstdefinestyle{tinyc}{
    basicstyle=\scriptsize\ttfamily,
    keywordstyle=\color{blue}
  }

  \lstdefinestyle{normalc}{
    basicstyle=\ttfamily,
    numbers=none,
    keywordstyle=\color{blue}
  }

  \lstdefinestyle{inlinec}{
    basicstyle=\ttfamily
  }

  \lstset{
    language=C,
    tabsize=2,
    numbers=left,
    numberstyle=\tiny,
    numbersep=1em,
    morekeywords={spawn,global,begin,end,delete},
    commentstyle=\color[rgb]{0,0.5,0.2},
    escapeinside={(*}{*)},
    literate={<=}{{$\leq$}}1 {>=}{{$\geq$}}1 {<>}{{$\neq$}}1
    {==>}{{$\Rightarrow$}}1 {forall}{{$\forall$}}1
    {equal}{{$=$}}1 {_1}{{$_1$}}1
    {<==>}{{$\Leftrightarrow$}}1 {lnot}{{$\lnot$}}1 {blank}{{$\ $}}1
    {&&}{{$\land$}}1,
    style=inlinec
  }

\usetikzlibrary{backgrounds}
\tikzstyle{every picture}+=[remember picture]

\usepackage{todonotes}

\newcommand{\N}{\mathbb{N}}

\DeclareDocumentCommand{\DCPS}{O{}}{\mathsf{DCPS}\ifthenelse{\equal{#1}{}}{}{[#1]}}
\DeclareDocumentCommand{\SRP}{O{}}{\mathsf{SRP}\ifthenelse{\equal{#1}{}}{}{[#1]}  }
\DeclareDocumentCommand{\HP}{O{}}{\mathsf{HP}\ifthenelse{\equal{#1}{}}{}{[#1]}} %

\newclass{\TWOEXPSPACE}{2EXPSPACE}
\newclass{\THREEEXPSPACE}{3EXPSPACE}
\newclass{\FOUREXPSPACE}{4EXPSPACE}

\usepackage{tikz}
\usetikzlibrary{petri,automata}
\tikzstyle{every place}=[minimum size=5mm]
\tikzstyle{every transition}=[minimum size=5mm]
\tikzstyle{every state}=[minimum size=5mm]
\usetikzlibrary{backgrounds,calc}

 \usetikzlibrary{ shapes, fit, arrows}
 \usetikzlibrary{decorations} %
\DeclareMathSymbol{\mdot}{\mathord}{symbols}{"01}
\usepackage{enumerate}
\usepackage{lscape}
\usepackage[utf8]{inputenc}
\usepackage[T1]{fontenc}
\usepackage{microtype}
\clubpenalty = 10000
\widowpenalty = 10000
\displaywidowpenalty = 10000

\def\@envspa{\hspace{0.3em}}
\def\@sa{\hspace{-0.2em}}
\def\@sb{\hspace{0.5em}}
\def\@sc{\hspace{-0.1em}}
\mathchardef\mhyphen="2D

\newtheorem{remark}{Remark}{\bfseries\upshape}{\rm}
{\bfseries\upshape}{\itshape}
{\bfseries\upshape}{\itshape}

\def\set#1{{\{ #1 \}}}
\def\tuple#1{{\langle #1 \rangle }}

\def\multi#1{{[\![ #1 ]\!]}}

\def\nats{{\mathbb{N}}}
\def\integs{{\mathbb{Z}}}

\def\mmap{\mathbf{m}}

\def\tbuf{\mathbf{t}}

\def\card#1{\lvert {#1} \rvert}

\newcommand{\multiset}[1]{{\mathbb{M}[ #1 ]}}
\def\cN{\mathcal{N}}

\def\prod{\mathcal{P}}

\newcommand{\cC}{\mathcal{C}}
\newcommand{\cA}{\mathcal{A}}
\newcommand{\cB}{\mathcal{B}}
\newcommand{\cT}{\mathcal{T}}
\newcommand{\cV}{\mathcal{V}}
\newcommand{\cP}{\mathcal{P}}

\newcommand{\cD}{\mathcal{D}}%
\newcommand{\cE}{\mathcal{E}}

\newcommand{\rnp}{\mathsf{RNP}} %
\newcommand{\srnp}{\mathsf{sRNP}} %
\newcommand{\TDPN}{\mathsf{TDPN}} %
\newcommand{\sTDPN}{\mathsf{sTDPN}} %
\newcommand{\TCVASS}{\mathsf{TCVASS}}
\newcommand{\PDA}{\mathsf{PDA}}

\newcommand{\NFA}{\mathsf{NFA}}
\newcommand{\dsNFA}{\mathsf{dsNFA}}
\newcommand{\DCBPR}{\mathsf{DCBP}}

\newcommand{\scrC}{\mathscr{C}}

\newcommand{\rev}[1]{#1^{\mathsf{rev}}}
\DeclareDocumentCommand{\langof}{O{} m}{%
  \mathsf{L}_{#1}(#2)%
  }
\DeclareDocumentCommand{\autstep}{O{}}{%
        \xrightarrow{#1}%
        }

\DeclareDocumentCommand{\autsteps}{O{}}{%
        \mathrel{{\xrightarrow{#1}}{}^*}%
      }
\DeclareDocumentCommand{\pdastep}{O{} m m}{%
  \ifthenelse{\equal{#3}{}}{%
    \xRightarrow[#1]{#2}%
  }{%
    \mathrel{{\xRightarrow[#1]{#2}}{}_{#3}}%
  }%
}

\DeclareDocumentCommand{\powerset}{O{} m}{
  \mathbb{P}_{#1}(#2)
}

\mathchardef\mhyphen="2D
\DeclareDocumentCommand{\EXPTIME}{O{}}{%
  \ifthenelse{\equal{#1}{}}{%
    \mathsf{EXPTIME}%
  }{%
    {#1}\mhyphen\mathsf{EXPTIME}%
  }%
}

\DeclareDocumentCommand{\NTERM}{O{}}{\mathsf{NTERM}\ifthenelse{
\equal{#1}{}}{}{[#1]}  }
\DeclareDocumentCommand{\UBOUND}{O{}}{\mathsf{UBOUND}\ifthenelse{
\equal{#1}{}}{}{[#1]}  }
\DeclareDocumentCommand{\REACH}{O{}}{\mathsf{REACH}\ifthenelse{
\equal{#1}{}}{}{[#1]}  }
\DeclareDocumentCommand{\FAIRNTERM}{O{}}{\mathsf{FNTERM}
\ifthenelse{
\equal{#1}{}}{}{[#1]}  }
\DeclareDocumentCommand{\STARV}{O{}}{\mathsf{STARV}\ifthenelse{
\equal{#1}{}}{}{[#1]}  }
\DeclareDocumentCommand{\SREACH}{O{}}{\mathsf{SREACH}\ifthenelse{
\equal{#1}{}}{}{[#1]}  }

\def \cP{\mathcal{P}}
\def \cD{\mathcal{D}}

\newcommand{\cM}{\mathcal{M}}

\def \max{\mathsf{max}}

\def \TYPES{\mathsf{TYPES}}

\newcommand{\ltr}[1]{\mathsf{#1}}

\DeclareDocumentCommand\PDA{}{\mathsf{PDA}}
\DeclareDocumentCommand\VASS{}{\mathsf{VASS}}

\DeclareDocumentCommand{\dcl}{o m}{\IfNoValueTF{#1}{#2\mathop{\downarrow}}{{#2\mathop{\downarrow}}_{#1}}}
\newcommand{\subw}{\preccurlyeq}

\newcommand{\init}{\mathsf{init}}
\newcommand{\final}{\mathsf{final}}
\newcommand{\move}{\mathsf{move}}
\newcommand{\ctr}{\mathsf{ctr}}

\newcommand{\glob}{\mathsf{gl}}
\newcommand{\loca}{\mathsf{loc}}

\newcommand{\mdotcup}{\mathbin{\dot\cup}}
\newcommand{\mdotbigcup}{\mathbin{\dot\bigcup}}
\newcommand{\SigmaE}{\Sigma_{\varepsilon}}
\newcommand{\hGamma}{\hat{\Gamma}}

\newcommand{\hati}{\hat{I}}
\newcommand{\hatie}{\hat{I}_{\varepsilon}}
\newcommand{\barI}{\overline{I}}
\newcommand{\bari}{\overline{i}}

\newcommand{\clean}{\mathsf{clean}}
\newcommand{\term}{\mathsf{term}}
\newcommand{\inter}{\mathsf{inter}}
\newcommand{\resump}{\mathsf{resump}}
\newcommand{\emp}{\mathsf{empty}}
\newcommand{\start}{\mathsf{start}}

\newcommand{\vecz}{\vec{\mathbf{0}}}

\usepackage[linesnumbered,boxed]{algorithm2e}

\bibliographystyle{ACM-Reference-Format}
\citestyle{acmauthoryear}   %

\title{
Context-Bounded Verification of Thread Pools}  

\newcommand{\OurInstitution}{Max Planck Institute for Software Systems (MPI-SWS)}
\newcommand{\OurStreet}{Paul-Ehrlich-Stra{\ss}e, Building G26}
\newcommand{\OurCity}{Kaiserslautern}
\newcommand{\OurPostcode}{67663}
\newcommand{\OurCountry}{Germany}

\author{Pascal Baumann}
\orcid{0000-0002-9371-0807}             %
\affiliation{
  \institution{\OurInstitution}            %
  \streetaddress{\OurStreet}
  \city{\OurCity}
  \postcode{\OurPostcode}
  \country{\OurCountry}                    %
}
\email{pbaumann@mpi-sws.org}          %

\author{Rupak Majumdar}
\orcid{0000-0003-2136-0542}             %
\affiliation{
  \institution{\OurInstitution}            %
  \streetaddress{\OurStreet}
  \city{\OurCity}
  \postcode{\OurPostcode}
  \country{\OurCountry}                    %
}

\email{rupak@mpi-sws.org}          %

\author{Ramanathan S. Thinniyam}
\orcid{0000-0002-9926-0931}             %
\affiliation{
  \institution{\OurInstitution}            %
  \streetaddress{\OurStreet}
  \city{\OurCity}
  \postcode{\OurPostcode}
  \country{\OurCountry}                    %
}
\email{thinniyam@mpi-sws.org}          %

\author{Georg Zetzsche}
\orcid{0000-0002-6421-4388}             %
\affiliation{
  \institution{\OurInstitution}            %
  \streetaddress{\OurStreet}
  \city{\OurCity}
  \postcode{\OurPostcode}
  \country{\OurCountry}                    %
}
\email{georg@mpi-sws.org}          %

\begin{abstract}
\emph{Thread pooling} is a common programming idiom in which a fixed set of worker threads
are maintained to execute tasks concurrently.
The workers repeatedly pick tasks and execute them to completion.
Each task is sequential, with possibly recursive code, and tasks
communicate over shared memory.
Executing a task can lead to more new tasks being spawned.
We consider the safety verification problem for thread-pooled programs.
We parameterize the problem with two parameters: 
the size of the thread pool as well as the number of context switches for each task.
The size of the thread pool determines the number of workers running concurrently.
The number of context switches determines how many times a worker can be swapped out while executing
a single task---like many verification problems for multithreaded recursive programs, the context bounding is important
for decidability.

We show that the safety verification problem for thread-pooled, context-bounded, Boolean programs
is $\EXPSPACE$-complete, even if the size of the thread pool and the context bound are given in binary.
Our main result, the $\EXPSPACE$ upper bound, is derived using a sequence of new 
succinct encoding techniques of independent language-theoretic interest.
In particular, we show a polynomial-time construction of downward closures
of languages accepted by succinct pushdown automata as doubly succinct nondeterministic finite automata. 
While there are explicit doubly exponential lower bounds on the size of nondeterministic
finite automata accepting the downward closure, our result shows these automata can be compressed.
We show that thread pooling significantly reduces computational power:
in contrast, if only the context bound is provided in binary, but there
is no thread pooling, the safety verification problem becomes $\THREEEXPSPACE$-complete.
Given the high complexity lower bounds of related problems involving binary parameters, 
the relatively low complexity of safety verification with thread-pooling comes as a surprise.
\end{abstract}

\begin{CCSXML}
<ccs2012>
   <concept>
       <concept_id>10003752.10003753.10003761</concept_id>
       <concept_desc>Theory of computation~Concurrency</concept_desc>
       <concept_significance>500</concept_significance>
       </concept>
   <concept>
       <concept_id>10011007.10010940.10010992.10010998.10010999</concept_id>
       <concept_desc>Software and its engineering~Software verification</concept_desc>
       <concept_significance>500</concept_significance>
       </concept>
 </ccs2012>
\end{CCSXML}

\ccsdesc[500]{Theory of computation~Concurrency}
\ccsdesc[500]{Software and its engineering~Software verification}

\keywords{verification, safety, multithreaded programs,
thread pool, context bounded, 
computational complexity}

\begin{document}

\maketitle

\sloppy %

\section{Introduction}

\emph{Thread pooling} is a common programming idiom in which a fixed set of worker threads
are maintained to execute tasks concurrently.
The worker threads repeatedly pick tasks from a task buffer and execute them to completion.
Each task is sequential code, possibly recursive, and tasks communicate over shared memory.
Executing a task can lead to more new tasks being spawned; these tasks get added to the task buffer for execution.
When a worker thread finishes executing a task, it goes to the task buffer and picks another task non-deterministically.
Thread pools reduce latency by executing tasks concurrently without paying the cost of creating and destroying new threads
for each, possibly short-lived, task.
Additionally, they improve system stability because resource requirements are more stable compared to unbounded creation
and destruction of threads.
Thus, most languages and runtimes for concurrency explicitly support thread pooling.

We consider the safety verification problem for thread-pooled programs modeled as reachability of some global state.
Even if the global state is finite, the system is unbounded in multiple dimensions: each executing task can have an unbounded stack, and the task
buffer of pending tasks can be unbounded.
The safety verification problem is undecidable as soon as there are two workers in the thread pool and threads are recursive.
Thus, we take the usual approach of bounding the number of context switches for each task \cite{QR05}: 
the context bound is an upper bound on the number of times a worker thread can be interrupted by other threads while executing a specific task.
The context-bounded safety verification problem for thread pools takes as input a program description and two parameters written in binary:
the size of the thread pool $N$ (i.e., the number of worker threads executing concurrently) and the context switch bound $K$.
Our main result shows that the safety verification problem for this model can be decided in $\EXPSPACE$,
even when we assume the parameters are given in binary. 
In fact, our $\EXPSPACE$ upper bound holds for the model of \emph{Boolean programs}
\cite{BR00,BR01,GodefroidY13},
where the global state of the program is specified succinctly in the form of 
Boolean variables and each thread is allowed its own set of local Boolean variables. 

When there are no local variables, the global state is specified explicitly rather than succinctly, and the size of the thread pool is one, the model degenerates to the well-studied model of asynchronous programs \cite{SenV06,JhalaM07,GantyM12}.
Safety verification for asynchronous programs is already $\EXPSPACE$-complete \cite{GantyM12}.
Thus, we get $\EXPSPACE$-completeness for our problem---hardness holding already for a single executing thread in the thread pool.
(We note that we consider a model in which Boolean parameters are only allowed to be passed to
recursive calls within a thread, but not to newly spawned tasks. 
Allowing formal parameters to spawns as well would result in a 
$\TWOEXPSPACE$-complete problem: membership follows from our methods, 
whereas hardness can be shown by applying
the techniques of %
\citet{BaumannMajumdarThinniyamZetzsche2020a}
to our slightly different problem and setting.)

The $\EXPSPACE$ upper bound requires us to develop a number of new techniques, of independent interest, to deal with \emph{succinct}
representations of computations.
Our goal is to reduce the decision problem to the coverability problem of a polynomial-sized vector addition system with states ($\VASS$),
which can be solved in $\EXPSPACE$ \cite{Rackoff78}.
Our starting point is the $\TWOEXPSPACE$ algorithm for context-bounded reachability of \citet{AtigBQ2009}.
Their proof has the following steps.
First, they observe that safety verification is preserved under ``downward closures,'' where some spawned tasks are lost.
Second, they show that the language of spawns and context switches of a single task execution can be represented as a pushdown automaton ($\PDA$),
and from this $\PDA$, they can construct a non-deterministic finite automaton ($\NFA$) for the downward closure of the language of the $\PDA$ that preserves the context switches
and only loses spawns.
Using the $\NFA$ for each thread, they construct a $\VASS$ of polynomial size that counts the number of threads in each location
of the $\NFA$; safety verification in the original program reduces to coverability in this $\VASS$.

A careful accounting of this algorithm gives a $\THREEEXPSPACE$ algorithm for our problem.
This is because the $\NFA$ representation for the downclosure of a $\PDA$ can be exponentially large \cite{Courcelle1991}; in fact, there are lower bounds showing that an
exponential blow-up is necessary \cite{BachmeierLS2015}.
Second, since the number of context switches is given in binary, the context switch preserving downclosure is in fact \emph{doubly exponential}
in the size of the $\PDA$ and $K$; again, there are examples showing that the blow-up is necessary.
While the size $N$ of the thread pool is given in binary, the $\VASS$ representation does not incur a further
blow-up since we can precisely track the states of $N$ threads with a mutiplicative cost.
Additionally, specifying the global state succinctly via global Boolean variables increases the size of the downclosure in a way that is multiplicative to the size increase caused by a binary encoded $K$, therefore incurring no further blow-up either.
Thus, a naive reduction to a $\VASS$ is doubly exponential, giving an overall triply exponential algorithm.

We overcome the complexity obstacles by developing algorithms for succinct machines.
We introduce \emph{succinct} representations of $\PDA$s and $\NFA$s.
The states of a succinct $\PDA$ are encoded as strings over an alphabet, and the transitions are encoded as transducers.
(Equivalently, one could represent them using Boolean circuits; our choice of transducers simplifies some language theoretic constructions.)
A succinct $\PDA$ encodes an exponentially larger $\PDA$.
We also define \emph{doubly succinct} $\NFA$s, whose states are encoded as exponentially long strings,
which represent $\NFA$s that are doubly exponentially larger.

Our first technical result is that downclosures of succinctly represented $\PDA$s are ``well-compressible'': 
they can be represented by \emph{doubly succinct} $\NFA$ of polynomial size.
The result strengthens a result of \citet{MajumdarTZ21} that shows a single exponential compression of the downclosure of normal $\PDA$s.
Our result is of independent interest: it implies, for example, that downclosures of Boolean programs are doubly succinctly representable.
It is surprising, because the emptiness problem for succinct $\PDA$s is $\EXPTIME$-complete and one would expect an exponential blowup.
Indeed, our construction carries out computations of alternating polynomial space Turing machines represented succinctly within the automaton.
We further strengthen the result to show that even when we preserve the $K$ context switches, the downclosure is still representable by
a doubly succinct $\NFA$.

Our next obstacle is to obtain the language of a thread pool containing $N$ threads, each of which is given by a doubly succinct $\NFA$.
Here, an explicit representation of $N$ separate states of the doubly succinct $\NFA$s will lead to an exponential blowup.
We introduce a succinct representation for $\VASS$, where the control states are represented doubly succinctly---such a succinct $\VASS$ represents a $\VASS$ with
doubly exponentially many control states. 
We show that the language of a thread pool can be represented as a $\VASS$ with a doubly succinct control. 

Finally, we show that the language of a doubly succinct $\NFA$ can be seen as the coverability language of a (normal) $\VASS$. 
The idea of the proof goes back to Lipton's encoding of doubly exponential counter machines succinctly using a $\VASS$ \cite{lipton1976reachability,Esp98a}; 
however, instead of proving a lower bound, as Lipton did, we use the succinct encoding to show a better upper bound.
The overall construction implies that the coverability language of the doubly succinct $\VASS$ is the same as the  coverability language of
a normal $\VASS$ that can be constructed from the succinct representation.
Thus, coverability of doubly succinct $\VASS$ can be decided in $\EXPSPACE$.
 
Overall, a composition of these steps gives us the desired $\EXPSPACE$ upper bound for the thread-pooled reachability problem. 

Our singly exponential space upper bounds are unexpected, since moving from unary to binary parameters usually involves an exponential jump
in the complexity of verification.
For example, in the special case where threads do not spawn new tasks and states are specified explicitly without Boolean variables, \citet{QR05} showed that context-bounded
reachability is $\NP$-hard when the number of context switches is fixed or given in unary.
We show that, even when the thread pool has just two threads ($N=2$), the problem is $\NEXP$-complete if the number of context switches is given in binary.
(The upper bound follows from \citet{QR05}, even when Boolean variables are present, and the lower bound can be shown by reducing from
the following $\NEXP$-complete problem: given context free grammars $G_1$ and $G_2$ and a number $k$ in binary,
are there words $w_1\in\langof{G_1}$ and $w_2 \in \langof{G_2}$ that agree on the first $k$ letters?)

Similarly, \citet{AtigBQ2009} study the context-bounded safety verification problem for multi-threaded shared memory programs with neither thread pooling nor Boolean variables,
that is, in a model where the global state is specified explicitly and each task is executed on a separately created thread.
They show that the problem is in $\TWOEXPSPACE$ for a fixed context bound, or if the context bound is given in {unary}.
\citet{BaumannMajumdarThinniyamZetzsche2020a} show a matching lower bound for this case.
We improve upon these results to show that safety verification (without thread pooling) is $\THREEEXPSPACE$-complete when the context bound is given in binary.
The hard part is the lower bound.
We extend the encoding of \citet{BaumannMajumdarThinniyamZetzsche2020a} to perform an ``exponentially larger'' computation when $K$ is in binary.
Interestingly, succinct computations are now used in this proof to show \emph{lower bounds} and hardness!

Given the above lower bounds, it is indeed surprising that fixing the thread pool parameter in binary 
leads to a doubly exponential reduction in complexity of safety verification (or, seen through a lens of 
complexity theory, in the reduction in expressiveness of languages computable by these machines).
Indeed, the complexity is no higher than the special case of one thread that executes tasks to completion!

\subsubsection*{Related Work}
Programming with thread pools is a ubiquitous concurrency pattern, and almost every language or library supporting concurrent programming supports thread pooling.
Usually, the thread pool needs to be configured by programmers; there are real-world instances that show that such configurations may be tricky to get right.\footnote{
	See, e.g., discussions at 
	\url{https://engineering.zalando.com/posts/2019/04/how-to-set-an-ideal-thread-pool-size.html},
	\url{https://developer.android.com/guide/background/threading}, 
	\url{https://developer.android.com/topic/performance/threads}, and especially
	\url{https://www.techyourchance.com/threadposter-explicit-unit-testable-multi-threading-library-for-android/} for discussions of the complexity of thread pool
	configuration and resulting crashes in the Android settings application. 
}
Since thread pools appear in many large-scale systems, dynamic analysis tools often provide support for thread pools \cite{LiCLLZGGLL18}.
However, the complexity implications of thread pools on static verification had not been considered.

There are, by now, many decidability results for context bounded verification in multi-threaded settings 
\cite{QR05,LalReps,MusuvathiQadeer,TorreMP09,LaTorre,AtigBQ2009,BaumannMajumdarThinniyamZetzsche2021a,MeyerMuskallaZetzsche2018}.
The work of \citet{AtigBQ2009} is closest to ours in the programming model: they consider safety verification for a multithreaded shared memory model with \emph{dynamic thread spawns},
the same as us.
Our model additionally has global and local variables and we consider the effect of thread pools. 
Prior results on context bounded reachability focused on a fixed number of threads \cite{QR05,LalReps,ChiniKKMS17}; interestingly, in most of these papers,
the context bound parameter was assumed to be given in unary, and as we stated before, the complexity results are ``one exponential higher'' when the parameter
is assumed to be binary.
\citet{ChiniKKMS17} carry out a multi-parameter analysis of bounded context switching for a fixed number of threads.
Their main results connect the complexity of this problem to hypotheses in fine-grained complexity.
However, they do not consider binary encodings of the context switch bound.

For a class of graph algorithms, one can prove a metatheorem that shows that succinct encodings give an exponential blowup: $\NP$ becomes $\NEXP$, and so on~\cite{papadimitriou1986note}.
Our $\EXPSPACE$ result shows that the space of safety verification problems is more nuanced.

The core of our results provide new and efficient constructions on \emph{succinct} machines.
Specifically, we show that the downward closure of the language of a succinct pushdown automaton has a small representation as a doubly succinct finite automaton.
It is well known that the downward closure of a context free language is effectively computable \cite{vanLeeuwen78,Courcelle1991}.
\citet{BachmeierLS2015}, following results by \citet{GruberHolzerKutrib}, show exponential lower bounds on the state complexity of an NFA representing the downward
closure.
Later, \citet{MajumdarTZ21} showed that the NFA is ``compressible'', meaning that there is a polynomial succinctly represented NFA for the downward closure.
Our results improve the succinct representability for succinctly defined PDAs: there is a polynomial (doubly succinct) NFA representation, even though there is a
tight doubly exponential lower bound for an NFA representation.
Further, we show that the same result holds for a stronger variant of the downward closure, where a subset of the alphabet is preserved.

While the use of thread pools is usually motivated by performance concerns, the fact that it might lead to a verification problem of lower complexity despite binary encodings,
and indeed the \emph{same} complexity as the special case of one thread and execute-to-completion, came as a surprise.
\section{A Model of Thread Pooling} %
\label{sec:preliminaries}

\subsection{Dynamic Networks of Concurrent Boolean Programs with Recursion ($\DCBPR$)} \label{sec:dcps}

\begin{figure}[t]
\begin{center}
\begin{minipage}{.45\textwidth}
  \begin{lstlisting}[style=tinyc,name=bitstream]
global lock l;
main() { spawn handler(); spawn main(); }
handler() {
  if * { a(0); }
  else { a(1); }
  unlock(l);
}
  \end{lstlisting}
\end{minipage} \hfill
\begin{minipage}{.45\textwidth}
  \begin{lstlisting}[style=tinyc,name=bitstream]
a(bool i) {
  if * { a(0); }
  else if * { a(1); }
  else { lock(l); }
  write(i); // critical section
  return i;
}
  \end{lstlisting}
\end{minipage}
\end{center}
\caption{An example concurrent program}
\label{fig:example}
\end{figure}

Figure~\ref{fig:example} shows a simple example of the different features of the programs that we want to verify.
In the example, the main program \emph{spawns} new tasks when it is executed: an instance of a handler task
and a further instance of itself.
A task executes sequential code, with \emph{(recursive) function calls}, \emph{parameters}, and \emph{local variables}.
They can also read and write \emph{shared global variables} (e.g., the lock~$\mathtt{l}$). 
In the example, a handler iteratively stores the value of some outside condition (modeled as non-determinism) in 
the parameter to a recursive function call. 
When the recursion ends (non-deterministically), the handler acquires the global lock $\mathtt{l}$ 
to write out the stored values (in-reverse, due to how recursion works) and then returns the lock.

Note that the program can spawn an unbounded number of handlers, that can all execute in parallel.
We shall consider a \emph{thread-pooled} execution, where a fixed number of worker threads repeatedly pick and execute
the tasks.
The threads run in an interleaved fashion and can be swapped in and out. However, each task is executed to completion
before a new spawned task is picked.

A possible safety property for the program is to ensure mutual exclusion for the write operation on line~12.
The property does hold for this program, but proving this is difficult as the state space is unbounded in several
directions: the unbounded number of spawned tasks and the unbounded stack of an executing task.

To model programs like the one above, we consider an abstract, language-theoretic model of thread-pooled shared memory programs, 
following \cite{AtigBQ2009} and subsequent work.
The model involves tasks, which can be spawned dynamically during execution, and a stack per thread to capture recursion. We furthermore augment this model with global and thread-local Boolean variables, which is a new addition when compared to prior work. The lock from the example program can be easily modeled as a global Boolean variable, and the stack and thread-local variables allow us to simulate features like Boolean function parameters. We will go into more detail about how this model captures program behavior at the end of this subsection.

\subsubsection*{Towards Syntax: Defining transducers.}
To introduce the syntax of our model, we first need to define the notion of transducers.
For $k \in \nats$, a \emph{(length preserving)} \textbf{$k$-ary transducer} 
$T = (Q,\Delta,q_0,Q_f,E)$ consists of a finite set of \emph{states} $Q$,
an alphabet $\Delta$, an \emph{initial state} $q_0 \in Q$, a set of \emph{final states} 
$Q_f \subseteq Q$, and a \emph{transition relation} $E \subseteq Q \times (\Delta^k \cup \set{\varepsilon}^k) \times Q$.
For a \emph{transition} $(q,a_1,\ldots,a_k,q') \in E$, we write 
$q \xrightarrow{(a_1,\ldots,a_k)} q'$.
The \emph{size} of $T$ is defined as $|T| = k\cdot|E|$.

The \emph{language} of $T$ is the $k$-ary relation $L(T) \subseteq (\Delta^*)^k$ 
containing precisely those $k$-tuples $(w_1,\ldots,w_k)$, for which there is a transition sequence
$q_0 \xrightarrow{(a_{1,1},\ldots,a_{k,1})} q_1 \xrightarrow{(a_{1,2},\ldots,a_{k,2})} \ldots \xrightarrow{(a_{1,m},\ldots,a_{k,m})} q_m$
with $q_m \in Q_f$ and $w_i = a_{i,1}a_{i,2} \cdots a_{i,m}$ for
all $i \in \{1, \ldots, k\}$. Such a transition sequence is called an
\emph{accepting run} of $T$.   

We note that in the more general (i.e., non-length-preserving)
definition of a transducer (see, e.g., \cite{Ginsburg}), the transition relation $E$ is a subset
of $Q \times (\Delta_\varepsilon)^k \times Q$, where $\Delta_\varepsilon = \Delta \cup \{\varepsilon\}$. All transducers we
consider in this paper are length-preserving.

\subsubsection*{Syntax}
A \emph{Dynamic Network of Concurrent Boolean Programs with Recursion} ($\DCBPR$) $\cD=(V_{\glob},V_{\loca},\Gamma,\cT_c,\cT_s,\cT_r,\cT_t,\vec{a}_0,\gamma_0)$ consists of
a finite set of \emph{global Boolean variables} $V_{\glob}=\{ v_1,v_2,\ldots,v_m\}$,
a finite set of \emph{local Boolean variables} $V_{\loca}=\{ v'_1,v'_2,\ldots,v'_n\}$,
a finite alphabet of \emph{stack symbols} $\Gamma$, 
an \emph{initial assignment} $\vec{a}_0 \in \{0,1\}^{V_{\glob}}$ of the global  (Boolean) variables,
an \emph{initial stack symbol} $\gamma_0 \in \Gamma$,
and four sets of transducers $\cT_c,\cT_s,\cT_r,\cT_t$ which succinctly describe the allowed transitions in the program, described below.

Let $\bar{\Gamma} = \set{\bar{\gamma} \mid \gamma \in \Gamma}$. 
The four sets of transducers have the following form:
\begin{enumerate}
  \item For each tuple $a=(\gamma,v)$ where $\gamma \in \Gamma \cup \{\varepsilon\}$, $v \in \Gamma \cup \bar{\Gamma} \cup \{\varepsilon\}$, there exists a $2$-ary transducer $T_a \in \cT_c$ which works over the alphabet $\Delta=\{ 0,1\}$ and accepts strings from $(\Delta \times \Delta)^{m+n}$.
  \item For each $\gamma \in \Gamma \cup \{\varepsilon\}$, there exists a $2$-ary transducer $T_\gamma \in \cT_s$ which works over the alphabet $\Delta=\{ 0,1\}$ and accepts strings from $(\Delta \times \Delta)^{m+n}$.
  \item For each $\gamma \in \Gamma$,  there exists a $2$-ary transducer $T_{\gamma} \in \cT_r$ which works over the alphabet $\Delta=\{ 0,1\}$ and accepts strings from $(\Delta \times \Delta)^{m}$.
  \item $\cT_t$ contains one $2$-ary transducer $T_t$ which works over the alphabet $\Delta=\{ 0,1\}$ and accepts strings from $(\Delta \times \Delta)^{m}$.
\end{enumerate}
The \emph{size} $|\cD|$ of $\cD$ is defined as $m+n + |\Gamma| + \sum_{T \in (\cT_c \cup \cT_s \cup \cT_r \cup \cT_t)}|T|$:
the number of global variables, the number of local variables, the stack alphabet, and the sizes of the transducers which define the valid transitions.

\subsubsection*{Intuition}
A $\DCBPR$ represents a multi-threaded program with a thread pool of fixed size and a shared global memory represented by the set $V_{\glob}$ 
of global Boolean variables. 
Computation is carried out by tasks.
The tasks can read and write global variables;
in addition, every task has its own copy of the local variables $V_{\loca}$.
Tasks make potentially recursive function calls, and we use the stack alphabet $\Gamma$ to maintain their stacks.
Each running task can manipulate the global variables, its local variables, and its stack. 
It can also spawn new tasks into a task buffer. 

Since the set of possible assignments to the global and local variables is exponentially large, the 
transitions between two different configurations of a thread, which potentially involve a change in the assignments of these variables,
are given succinctly using transducers. 

Execution of tasks is coordinated by a thread pool, which is an a priori fixed number of concurrent threads that are used to execute the tasks.
A nondeterministic scheduler schedules the threads in the thread pool.
When some thread in the thread pool is idle, it can pick one of the pending tasks from the task buffer
and start executing it.
The task is executed concurrently with other threads in the thread pool, until completion.
On completion, the executing thread will nondeterministically pick another task from the task buffer.

Intuitively, the transducers represent ``transition relations'' relating the assignments to global and local variables
of a program when it takes a step.
Each set of transducers $\cT_c,\cT_s,\cT_r,\cT_t$ corresponds to different types of transitions applicable to the $\DCBPR$. 
The transitions are divided into two major kinds: \emph{creation} transitions applicable to the running of a single thread given by $\cT_c$ 
and \emph{swap}, \emph{resumption}, and \emph{termination} transitions corresponding to the actions of the scheduler, 
given by $\cT_s$, $\cT_r$ and $\cT_t$ respectively. 
A transducer in $\cT_c$ describes the updates to the global and local variables during a single step of a thread.
A transducer in $\cT_s$ describes the updates when a thread is swapped out.
Finally, transducers in $\cT_r$ and $\cT_t$ describe how the global variables are updated when a thread is resumed and on thread termination, respectively.
We explain these in detail below.

\subsubsection*{Towards Semantics: Preliminary Definitions}
In order to provide the semantics of $\DCBPR$, we need a few definitions and notation.
A \emph{multiset} $\mmap\colon S\rightarrow\nats$ over a set $S$ maps each
element of $S$ to a natural number.
Let $\multiset{S}$ be the set of all multisets over $S$.
We treat sets as a special case of multisets 
where each element is mapped onto $0$ or $1$.
We write
$\mmap=\multi{a_1,a_1,a_3}$ for the multiset
$\mmap\in\multiset{S}$ such that $\mmap(a_1)=2$, $\mmap(a_3)=1$, and $\mmap(a) = 0$ for each $a \in S\backslash\set{a_1,a_3}$. 
The empty multiset is denoted \(\emptyset\).
The \emph{size} of $\mmap\in\multiset{S}$, denoted $\card{\mmap}$, is
given by $\sum_{a\in S}\mmap(a)$.
This definition applies to sets as well.
 
Given two multisets $\mmap,\mmap'\in\multiset{S}$ we define $\mmap +
\mmap'\in\multiset{S}$ to be a multiset such that for all $a\in S$,
we have $(\mmap + \mmap')(a)=\mmap(a)+\mmap'(a)$.
For $c\in\nats$, we define $c\mmap$ as the multiset that maps each $a\in S$ to $c\cdot \mmap(a)$.
We also define the natural order
$\preceq$ on $\multiset{S}$ as follows: $\mmap\preceq\mmap'$ if{}f there
exists $\mmap^{E}\in\multiset{S}$ such that
$\mmap + \mmap^{E}=\mmap'$. 
We also define $\mmap - \mmap'$ for $\mmap' \preceq \mmap$ analogously: for all $a\in S$,
we have $(\mmap - \mmap')(a)=\mmap(a)-\mmap'(a)$.

\subsubsection*{Semantics}
The set of configurations of $\cD$ is 
\[
\underbrace{\set{0,1}^{V_{\glob}}}_{\text{\small global state}} 
\times 
\underbrace{\big((\set{0,1}^{V_{\loca}} \times \Gamma^* \times \nats) \cup \set{\#}\big)}_{\text{\small local config or schedule pt}} 
\times
\;
\underbrace{\multiset{\set{0,1}^{V_{\loca}} \times \Gamma^* \times \nats}}_{\text{\small thread pool}}
\;
\times 
\underbrace{\multiset{\Gamma}}_{\text{\small task buffer}}
\] 
A \emph{configuration} of a $\DCBPR$ consists of 
(a) an assignment $\vec{a}$ to the global variables,
(b) a \emph{local configuration} of the active thread (or a special symbol ``$\#$'' signifying a schedule point),
(c) a \emph{thread pool}, which is a multiset of local configurations of inactive, partially executed threads,
and
(d) a \emph{task buffer} of pending (that is, unstarted) tasks.
The local configuration of a thread is a tuple $(\vec{b},w,i)$ where $\vec{b}$ is the assignment of the local variables, $w$ is the stack content,
and $i$ is a context switch number. 
The context switch number tracks how many times an executing task has already been context switched by the underlying scheduler.

In order for us to define state transitions succinctly using transducers, we need to impose an order on the set of global and local variables in 
order to form a well-defined string from $\{0,1\}^*$ given the variable values. 
Hence for a set of variables $V = \{v_1, \ldots, v_m\}$ and an assignment $\vec{a} \in \{0,1\}^V$ we identify $\vec{a}$ with 
the string $\vec{a}(v_1) \cdots \vec{a}(v_m)$.
We shall write $\langle \vec{a}, (\vec{b},w,i), \mmap, \tbuf\rangle$ or $\langle \vec{a}, \#, \mmap, \tbuf\rangle$
for configurations.
We shall assume that the thread pool $\mmap$ is bounded by a number $N > 0$: this represents an a priori 
fixed number of threads used for executing tasks.
We shall also consider the special case of $N = \infty$, that corresponds to no thread pooling
and in which every pending task can start executing immediately.

The initial configuration of $\cD$ is
$\langle \vec{a}_0, \#, \multi{(\vecz,\gamma_0,0)}, \emptyset\rangle$. For a configuration
$c$ of $\cD$, we will sometimes write $c.\vec{a}$ for the state of $c$ and
$c.\mmap$ for the multiset of threads of $c$ (both active and
inactive).  The \emph{size} of a configuration
$c = \langle \vec{a}, (\vec{b},w,i), \mmap,\tbuf \rangle$ is defined as
\[|c| = m + n + |w| + \sum_{(\vec{b}',w',j) \in \mmap}(n + |w'|) + \sum_{\gamma \in \Gamma} \tbuf(\gamma).\]

First let us explain the transitions corresponding to the action of a single thread.
The steps of a single executing task define the following
\emph{thread step} relation $\rightarrow$ on
configurations of $\cD$: 
we have $\langle \vec{a}, (\vec{b}, w,i), \mmap, \tbuf \rangle \rightarrow \langle \vec{a}', (\vec{b}',w',i), \mmap', \tbuf' \rangle$ 
for all $w \in \Gamma^*$ iff

(1) $(\vec{a}\vec{b},\vec{a}'\vec{b}') \in \langof{T_{\varepsilon,v}}$ where $T_{\varepsilon,v} \in \cT_c$, $\mmap' = \mmap$, $\tbuf' = \tbuf$ and one of the following conditions hold:
\begin{enumerate}[i]
  \item $v \in \Gamma \cup \set{\varepsilon}$ and $w' = vw$, or
  \item $v = \bar{\gamma} \in \bar{\Gamma}$ and $\gamma w' = w$.
\end{enumerate}
or (2) $(\vec{a}\vec{b},\vec{a}'\vec{b}') \in \langof{T_{\gamma',v}}$ where $T_{\gamma',v} \in \cT_c$ and $\mmap' = \mmap$ and $\tbuf' = \tbuf + \multi{\gamma'}$. Additionally, one of the stack conditions i or ii from (1) above hold.

We extend the \emph{thread step} relation $\rightarrow^+$ to be the irreflexive-transitive closure of 
$\rightarrow$; thus $c \rightarrow^+c'$ if there is a sequence $c
\rightarrow c_1 \rightarrow \ldots c_k \rightarrow c'$ for some $k\geq 0$.

Secondly, we have the actions of the non-deterministic scheduler which switches between concurrent threads.
The active thread is the one currently being executed and the
multiset $\mmap$ keeps partially executed tasks in the thread pool; the size of $\mmap$ is bounded by the 
size $N$ of the thread pool.
All other pending tasks that have not been picked for execution yet remain in the task buffer $\tbuf$.
The scheduler may interrupt a thread based on the interruption transitions, non-deterministically
resume a thread based on the resumption transitions, and terminate a thread based on the termination transitions. Picking a new task to execute is handled independently of the transducers and does not change any variable assignments.

The actions of the scheduler define the
\emph{scheduler step} relation $\mapsto$ on configurations of $\cD$: 
\[
\small
\begin{array}{c}
\inferrule[Swap]{
(\vec{a}\vec{b},\vec{a}'\vec{b}') \in \langof{T_\gamma} \text{ where } T_\gamma \in \cT_s
}{
\tuple{\vec{a}, (\vec{b}, w, i), \mmap, \tbuf} \mapsto \tuple{\vec{a}', \#, \mmap+ \multi{\vec{b}', \gamma w, i+1}, \tbuf}
}
\quad
\inferrule[Resume]{
(\vec{a},\vec{a}') \in \langof{T_{\gamma}} \text{ where } T_{\gamma} \in \cT_r
}{
\tuple{\vec{a}, \#, \mmap + \multi{\vec{b},\gamma w, i},\tbuf} \mapsto \tuple{\vec{a}', (\vec{b},\gamma w, i), \mmap, \tbuf}
}
\\
\smallskip
\\
\inferrule[Terminate]{
  (\vec{a},\vec{a}') \in \langof{T_t} \text{ where } T_{t} \in \cT_t
}{
  \tuple{\vec{a}, (\vec{b},\varepsilon, i), \mmap, \tbuf} \mapsto \tuple{\vec{a}', \#, \mmap, \tbuf}
}
\quad
\inferrule[Pick]{
  |\mmap| < N
}{
  \tuple{\vec{a}, \#, \mmap, \tbuf + \multi{\gamma}} \mapsto \tuple{\vec{a}, \#, \mmap + \multi{\vecz,\gamma, 0}, \tbuf}
}
\end{array}
\]
If a thread can be interrupted, then {\sc Swap} swaps it out and increases the
context switch number of the executing task. The global and local variables can both be changed and the corresponding transducer accepts words of length $m+n$. In the {\sc Resume} and {\sc Terminate} rules, only the global variables are modified and so the transducers accept words of length $m$.
The rule {\sc Resume} picks a thread that is ready to run based on the current assignment of
global variables and its top of stack symbol and makes it active. 
The rule {\sc Terminate} removes a task on termination (empty stack), freeing up the thread that was executing it. 
The rule {\sc Pick} picks a pending task for execution, if there is space in the thread pool.
Note that if $N = \infty$, the rule is always enabled.

A \emph{run} of a $\DCBPR$ is a finite or infinite sequence of
alternating thread execution and scheduler step relations 
\[
  c_0 \rightarrow^+ c_0' \mapsto^+ c_1 \rightarrow^+ c_1' \mapsto^+ \ldots 
\]
such that $c_0$ is the initial configuration.
The run is \emph{$N$-thread-pooled} if the thread pool is bounded by $N$.
The run is \emph{$K$-context switch bounded} if, moreover, for each $j\geq 0$, the
configuration $c_j = (\vec{a}, (\vec{b},w, i), \mmap,\tbuf)$
satisfies $i \leq K$.
In a $K$-context switch bounded run, each thread is context switched
at most $K$ times and the scheduler never schedules a
thread that has already been context switched $K+1$ times.
When the distinction between thread and scheduler steps is not
important, we write a run as a sequence $c_0 \Rightarrow c_1 \ldots$.

\subsubsection*{Modelling programs}

To model programs like the example at the start of this section, there are two main challenges that we need to overcome. Firstly, $\DCBPR$ define possible changes to the variables via transducers, while programs instead use simple conditional statements and assignments. Secondly, $\DCBPR$ have thread-local variables while programs usually have function-local ones.

Let us begin by arguing that for conditionals and assignments, it suffices to simulate \texttt{if}-statements over only single variables and assignments involving only the constant values $0$ and $1$. Regarding the latter, note that we can rewrite an assignment involving an arbitrary Boolean expression $\varphi$ to two constant assignments by using an \texttt{if}-\texttt{else}-block over $\varphi$. Furthermore, \texttt{else}-statements can be easily rewritten to \texttt{if}-statements over a conjunction of the negations of all preceding Boolean expressions in the block. Finally, to rewrite an \texttt{if}-statement over $\varphi$, we can assume that $\varphi$ is in negation normal form. Then $\varphi_1 \wedge \varphi_2$ can be rewritten as two nested \texttt{if}-statements over the two respective parts, while $\varphi_1 \vee \varphi_2$ can be rewritten as two non-nested \texttt{if}-blocks right after one another. We do not need to rewrite $\neg \varphi$, as in this case $\varphi$ will be a single variable. To not duplicate the code inside of an \texttt{if}-block in the $\vee$-case we can make use of \texttt{goto}-statements, which our transducers can also simulate. This way we get at most a polynomial blow-up.

We continue by explaining how a transducer simulates the resulting \texttt{if}-statements, assignments, and \texttt{goto}-statements. Firstly, we extend the $\DCBPR$ by finitely many additional local Boolean variables, to store the current line number of any given thread. Then an accepting run $\rho$ in the transducer for a particular statement in the program takes the binary representation of the corresponding line number as input, and outputs either the line number directly below the statement, or the target line number in case of a \texttt{goto}. For an \texttt{if}-statement on variable $v$ (respectively $\neg v$) the input on the remainder of $\rho$ matches the output, while the transducer checks that there is a $1$ (respectively $0$) at the position corresponding to $v$. For an assignment of $1$ (respectively $0$) to $v$, the transducer instead outputs $1$ (respectively $0$) at the position corresponding to $v$, while keeping the rest of the input unchanged.

Now we also need to argue that we can simulate function-local variables by thread-local ones. The main trick is to push all values of local variables to the stack before pushing a function call, upon which we reset all values of local variables. When we return, we save the return value in a special variable followed by popping the function call from the stack, upon which we pop all variable values from the stack and store each one in its corresponding variable.

More formally, $V_{\glob}$ would contain the names of all global Boolean variables, and $V_{\loca}$ would contain all local variable names in all functions, all line numbers in the code, and one special variable \texttt{return}. The stack alphabet $\Gamma$ would contain all function names and two copies of each function-local variable in the program, one for each truth value. New tasks would simply be spawned with one function name on the stack. Note that new tasks are always initialised with all local variables set to $0$. Each task's current line number would be stored by setting the corresponding variable to true and all other such variables to false. The transducers would be defined in a way to facilitate all the program operations.

\subsection{Decision Problems and Main Results}

The \emph{reachability problem} for $\DCBPR$ asks, given an assignment $\vec{a}$ to the global variables of $\cD$ (henceforth also called a \emph{global state}),
if there is a run $c_0 \Rightarrow c_1 \ldots \Rightarrow c_\ell$ such
that $c_\ell.\vec{a} = \vec{a}$.
Reachability of multi-threaded recursive programs is undecidable already with two fixed threads (see Section 3.2 in \cite{Ramalingam}).

We consider \emph{context switch bounded} decision questions.
Given $K \in \nats$, a global state $\vec{a}$ of $\cD$
is $K$-context switch bounded reachable
if there is a $K$-context switch bounded run $c_0
\Rightarrow \ldots \Rightarrow c_\ell$ with $c_\ell.\vec{a} = \vec{a}$.

The \emph{thread-pooled context-bounded reachability problem} is the following:
\begin{description}
  \item[Given] A $\DCBPR$ $\cD$, a global state $\vec{a}$, and two numbers $K$ and $N$ in \emph{binary};
  \item[Question] Is $\vec{a}$ reachable in $\cD$ in an $N$-thread-pooled, $K$-context switch bounded execution?
\end{description}

We show the following result.

\begin{theorem}[Safety Verification with Thread-Pooling]
\label{th:complexity-thread-pooling}
The thread-pooled context-bounded reachability problem is $\EXPSPACE$-complete.
\end{theorem}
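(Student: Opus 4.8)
The plan is to prove matching upper and lower bounds. The lower bound is the easier direction. I would observe that when there are no local variables ($V_{\loca} = \emptyset$), the global state is consequently given explicitly rather than succinctly, and the thread pool has size $N = 1$, a $\DCBPR$ degenerates exactly into an asynchronous program. Since safety verification for asynchronous programs is already $\EXPSPACE$-complete \cite{GantyM12}, $\EXPSPACE$-hardness transfers immediately to our problem; in particular, a single executing thread in the pool already suffices for hardness, so the binary parameters $N$ and $K$ play no role here.

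For the upper bound, the overall strategy is to reduce thread-pooled context-bounded reachability to the \emph{coverability} problem for a polynomial-sized ordinary $\VASS$, which is decidable in $\EXPSPACE$ by \cite{Rackoff78}. The conceptual starting point is downward-closure invariance: whether the target global state $\vec{a}$ is reachable is unaffected if spawned tasks may be nondeterministically dropped, because dropping tasks only shrinks the reachable behavior and a witnessing run never needs a task it did not pick. Hence it suffices to reason about the downward closure, taken in the spawn alphabet, of the language describing a single task's execution, while \emph{faithfully preserving} the at most $K$ context-switch markers—that is, a downward closure that preserves a designated subalphabet.

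The technical core is to represent these downward closures succinctly. I would first encode the behavior of a single task—the symbols of its spawns together with its guessed context-switch jumps—as a succinct $\PDA$, whose states are strings encoding the exponentially many global/local assignments and whose transitions are exactly the given transducers $\cT_c, \cT_s, \cT_r, \cT_t$. Applying the paper's first main construction, the context-switch-preserving downward closure of this succinct $\PDA$ is representable by a doubly succinct $\NFA$ of polynomial size. Aggregating over the at most $N$ concurrently running threads together with the task buffer, the entire thread pool is then captured by a $\VASS$ with doubly succinct control, whose counters record how many threads currently sit in each $\NFA$ state and how many pending tasks of each stack symbol wait in the buffer. Because $N$ and $K$ are binary and the global state is succinct, these contribute only multiplicatively to the already doubly exponential state space, so no further exponential blow-up is incurred.

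The final, and I expect hardest, step is to collapse both levels of succinctness back to a polynomial-sized ordinary $\VASS$ while preserving the coverability semantics exactly. Here I would adapt Lipton's simulation of doubly exponential counters by a polynomial $\VASS$ \cite{lipton1976reachability,Esp98a}, but use it \emph{constructively for an upper bound} rather than for a lower bound: the doubly succinct control of the $\VASS$ is simulated by auxiliary Lipton counters so that the coverability language of the doubly succinct $\VASS$ coincides with that of a polynomial ordinary $\VASS$. The main obstacle will be ensuring that this simulation is simultaneously size-preserving (polynomial) and semantics-preserving (exact coverability, not merely over-approximation), since the entire $\EXPSPACE$ bound depends on never materializing the doubly exponential state space. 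Composing the downward-closure reduction, the doubly succinct $\NFA$ construction, the doubly succinct $\VASS$, and this Lipton-style collapse yields an $\EXPSPACE$ decision procedure, matching the lower bound and giving $\EXPSPACE$-completeness.
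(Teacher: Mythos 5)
Your overall route is the same as the paper's: the lower bound via asynchronous programs with $N=1$, $K=0$ and no local variables, and the upper bound via (i)~a succinct task-$\PDA$ whose transitions come from the given transducers, (ii)~a doubly succinct $\NFA$ for the context-switch-preserving downward closure (Corollary~\ref{coro:succinct-task-dcl}), (iii)~a $\VASS$ with doubly succinct control capturing the whole pool (Theorem~\ref{dcps-to-tcvass}), and (iv)~a Lipton-style simulation turning the doubly succinct control into an ordinary polynomial-size $\VASS$ whose coverability is decided within Rackoff's bound (Theorem~\ref{thm:cover_tcvass}). All four ingredients are present and in the right order.

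One point in your step~(iii) is garbled in a way that matters. You write that the counters of the doubly succinct $\VASS$ ``record how many threads currently sit in each $\NFA$ state.'' Taken literally this is the encoding of \citet{AtigBQ2009}, which requires one counter per $\NFA$ state---here doubly exponentially many counters---and that destroys the $\EXPSPACE$ bound; it also contradicts your own claim that the control is doubly succinct, since if the counters held the thread states there would be no need for a large control. The point of thread pooling is precisely that only $N$ threads run concurrently, so their $N$ current $\dsNFA$ states are written side by side, as $N$ separated segments, \emph{inside} the doubly succinct control state (a tape of length roughly $N\cdot M$, still only exponential, hence representable by a $\TCVASS$); the counters are reserved exclusively for the unbounded task buffer and number only $|\Gamma|$, one per stack symbol, because an unstarted task is fully described by its initial stack symbol (tasks are spawned without parameters). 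With that correction, the remainder of your argument---including the constructive use of Lipton's encoding together with the ``local simulation'' property so that the coverability language is preserved exactly rather than over-approximated---matches the paper's proof.
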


$\EXPSPACE$-hardness holds already for a $\DCBPR$ $\cD$, $K = 0$, and $N=1$, even when the global state is restricted
to be an explicitly given finite set (rather than as valuations to variables) and there are no additional local variables.
This follows from results on \emph{asynchronous programs} \cite{GantyM12}, 
which can be seen as the special case of a simpler model where the global states are specified as a finite set of possible values, 
and there is a single thread, which executes each task to completion before picking the next task.
In other words, the case with non-succinctly defined states, a thread pool of size $N = 1$, and no context-switching (i.e.\ $K = 0$).
Thus, our main aim is to show the upper bound.

In case $N = \infty$, there is no thread pooling and an arbitrary number of tasks can execute simultaneously.
Then, the general \emph{context-bounded reachability problem} takes as input a $\DCBPR$, a global state, and a context bound $K$ in binary,
and asks if the global state is reachable in a $K$-context switch bounded run, assuming there is no bound on the thread pool. 
If furthermore the global states are defined explicitly and there are no local variables, 
then this results in a setting, which has been studied before: \citet{AtigBQ2009} showed that the context-bounded reachability problem is in $\TWOEXPSPACE$ 
for a fixed $K$ (or when $K$ is given in unary).
A matching lower bound for all fixed $K \geq 1$ (or equivalently, for $K$ given in unary)
was provided by \citet{BaumannMajumdarThinniyamZetzsche2020a}.  
We study the problem when $K$ is given in binary and states are defined in a succinct manner via global and local Boolean variables. 
We show the following result.

\begin{theorem}[Safety Verification without Thread Pooling]
\label{th:complexity-no-thread-pooling}
The context-bounded reachability problem is $\THREEEXPSPACE$-complete. 
\end{theorem}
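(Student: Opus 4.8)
The plan is to establish $\THREEEXPSPACE$-completeness of the context-bounded reachability problem (the $N = \infty$ case) by proving the two bounds separately. The upper bound should follow by adapting the pipeline developed for Theorem~\ref{th:complexity-thread-pooling}, but accounting carefully for where the thread-pool bound $N$ was used to collapse complexity. Concretely, I would first show that the language of a single thread's execution---its spawns and context switches---is captured by a succinct $\PDA$ $\cP_{(\vec{a},\gamma)}$ of polynomial size, exactly as in the thread-pooled case. By the paper's first technical result (the ``well-compressibility'' of downclosures of succinct $\PDA$s), its $K$-context-switch-preserving downclosure is representable by a \emph{doubly succinct} $\NFA$ of polynomial size. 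The crucial difference from the thread-pooled setting is that with $N = \infty$ one cannot bound the number of simultaneously active threads, so the reduction target is no longer a $\VASS$ with doubly succinct control whose place count stays singly exponential; instead, counting unboundedly many threads each in one of doubly-exponentially-many $\NFA$ states forces a $\VASS$ whose number of places is itself doubly exponential. Coverability in such a $\VASS$ costs one exponential above its place count by Rackoff's bound \cite{Rackoff78}, yielding $\THREEEXPSPACE$. I would formalize this by building the doubly-succinct $\VASS$ and then arguing, via the Lipton-style encoding referenced in the excerpt \cite{lipton1976reachability,Esp98a}, that its coverability language coincides with that of an explicitly-represented $\VASS$ decidable in $\THREEEXPSPACE$.

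The harder and more novel half is the matching $\THREEEXPSPACE$ lower bound, and this is where I expect the main obstacle to lie. The starting point is the $\TWOEXPSPACE$-hardness construction of \citet{BaumannMajumdarThinniyamZetzsche2020a} for the fixed/unary-$K$ case, which simulates a $\TWOEXPSPACE$ Turing machine using the network of pushdown threads. The goal is to ``lift'' this by one exponential: I would show that when $K$ is supplied in \emph{binary}, a context bound of $K = 2^{n}$ lets the threads simulate a Turing machine running in space triply exponential in $n$. The engine of the original lower bound is a scheme for generating and verifying very long indexed tape contents across context switches; with a binary $K$ one has exponentially many context switches available per thread, so each thread can now ``count'' up to a doubly-exponential index using its context-switch counter together with its stack, effectively adding one more level to the counter hierarchy that addresses the tape cells. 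The delicate step will be maintaining consistency between the exponentially many context-switch phases---ensuring that the global-variable protocol (which in the succinct-state model is itself an exponentially large space of valuations) correctly encodes the tape so that adjacent-cell and matching-index checks are enforced across the blow-up.

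I would organize the lower bound as follows. First, I would fix a $\THREEEXPSPACE$ (equivalently, $\FOUREXPSPACE$-time under the standard relationship, but I will use the appropriate space-bounded) Turing machine $M$ and an input $x$, and design a $\DCBPR$ $\cD$ together with $K$ and a target global state $\vec{a}$ such that $\vec{a}$ is $K$-context-bounded reachable iff $M$ accepts $x$. The tape contents of $M$, of triply-exponential length, are represented as a sequence of spawned tasks communicating through the global Boolean variables; each configuration-to-successor transition of $M$ is checked locally between a cell and its neighbors. Second, I would implement a \emph{succinct address} for each tape cell using the combination of the binary-encoded context-switch budget and the recursion stack: the key lemma is that a single thread, across its $K+1$ scheduled phases, can emit and verify a counter that ranges over $\{0,1,\dots,2^{2^{n}}-1\}$, one exponential more than the $\TWOEXPSPACE$ construction achieves with constantly many phases. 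Third, I would argue a correctness reduction---soundness (any accepting computation of $M$ induces a legal $K$-context-bounded run) and completeness (any run reaching $\vec{a}$ forces a faithful simulation), with the address-consistency checks ruling out cheating. The main obstacle, and the part requiring the most care, is exactly this address-and-consistency machinery: making the exponentially-many context switches enforce a globally coherent triply-exponential tape while keeping $\cD$ of size polynomial in $n$ and $K$ genuinely of value $2^{\mathrm{poly}(n)}$ written in binary. I expect the rest of the argument to follow the template of \citet{BaumannMajumdarThinniyamZetzsche2020a} with the counter hierarchy extended by one level.
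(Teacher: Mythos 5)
Your upper bound is fine in spirit but more elaborate than what the paper does: the paper simply notes that converting the Boolean variables into explicit states and the binary $K$ into a unary one are two exponential blow-ups that combine multiplicatively into a single exponential, after which the $\TWOEXPSPACE$ algorithm of \citet{AtigBQ2009} applies as a black box. (Your variant---rerunning the succinct pipeline and landing on a $\VASS$ with doubly exponentially many counters, then invoking Rackoff---would also work, but it is more machinery than needed.)

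The lower bound is where your plan has a genuine gap, in fact two. First, a quantitative one: with $K=2^{\mathrm{poly}(n)}$ context switches, a single thread can exchange at most $K\cdot\mathrm{poly}(n)=2^{\mathrm{poly}(n)}$ bits with the rest of the system through the global variables, so it can emit or verify addresses of at most exponential length, i.e., distinguish at most $2^{2^{\mathrm{poly}(n)}}$ positions. Your own key lemma concedes this: the counter you propose ranges over $\{0,\dots,2^{2^{n}}-1\}$, so a tape addressed by it has only doubly exponentially many cells, and a direct Turing-machine simulation built on it yields only $\TWOEXPSPACE$-hardness---one exponential short of the claim. Second, a structural one: the task buffer is an unordered multiset and reachability is invariant under losing spawns (the very downclosure-invariance that powers the upper bound), so a ``tape as a sequence of tasks'' with locally checked adjacency cannot be enforced; hardness in this model has to be routed through a coverability-monotone target. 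The paper accordingly does not simulate a Turing machine. It keeps addresses at exponential length---which exponentially many context switches can handle, transferring one stack symbol per switch---and reduces from coverability of succinct transducer-defined Petri nets ($\sTDPN$) whose doubly exponentially many places are these exponential-length strings, realized as task stack contents with token counts given by task multiplicities; the third exponential is supplied by the $\EXPSPACE$-hardness of Petri-net coverability in the number of places, not by longer addresses. The chain runs from exponentially bounded counter programs through succinct recursive net programs with doubly exponential recursion depth to $\sTDPN$ and finally to $\DCPS$, and the genuinely new gadgets at the last step are (i) simulating a transducer transition symbol-by-symbol across exponentially many context switches, (ii) reversing an exponentially long stack by shuttling it to a fresh task, and (iii) producing the exponentially long initial place name via a polynomial-size straight-line grammar run on the stack. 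To salvage your plan you would have to replace the Turing machine by a counter/Petri-net target so that unbounded task multiplicities, rather than cell addresses, carry the last exponential.
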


The $\THREEEXPSPACE$ upper bound follows from a relatively simple analysis of the upper bound of \citet{AtigBQ2009}.
For the lower bound, we extend the $\TWOEXPSPACE$-hardness result of \citet{BaumannMajumdarThinniyamZetzsche2020a} with new encodings.

\subsection{Outline of the Rest}

As discussed in the Introduction, our $\EXPSPACE$ upper bound has three parts, following the general outline of the proof of decidability
for context-bounded reachability \cite{AtigBQ2009}.

The \textbf{first step} is to distill a succinct pushdown automaton for each task, and to construct a finite-state automaton that represents the downward closure
of its language.
In Section~\ref{sec:bounded_task_buffer}, we prove general results about succinct pushdown automata and doubly succinct representations of their downward closures.
We then apply these results to $\DCBPR$ in Section~\ref{sec:succinct_dcl_tasks} to obtain succinct representations of the downward closures of tasks,
where the number of context switches is preserved.

The \textbf{second step} is to construct a vector addition system with states ($\VASS$), a model equivalent to Petri nets, that represents the state of a $\DCBPR$.
Again, in order to overcome the exponential blow-up of representing all $N$ threads in the thread pool (remember, $N$ is in binary), we first introduce
a succinct version of $\VASS$ and show the coverability problem for the succinct version can be solved in $\EXPSPACE$ (Section~\ref{sec:TCVASS}).

In the \textbf{third step}, we show a reduction from $\DCBPR$ to succinct $\VASS$, where we use the doubly succinct representation for the downward closures of all task languages.

Together, we obtain an $\EXPSPACE$ upper bound, as promised in Theorem~\ref{th:complexity-thread-pooling}.

Finally, in Section~\ref{sec:context_bound_as_binary_input}, we sketch the $\THREEEXPSPACE$ lower bound in case the thread pool is unrestricted, and also give a very short argument as to why the matching upper bound follows from the results of \citet{AtigBQ2009}.

\begin{remark}
  We note that our model of $\DCBPR$ assumes that tasks are not spawned with parameters that initialize the local variables; each local variable is always initialized to $0$ instead.
  This in particular means that while we can model passing Boolean parameters as part of recursive function calls, we cannot model passing such parameters as part of spawning new tasks.
  
  If we allowed parameters in task creation, the thread-pooled context-bounded reachability problem would become $\TWOEXPSPACE$-complete.
  For the membership, observe that there would be now exponentially many types of tasks in the buffer, one for each stack symbol $\gamma$ and each assignment $\vec{b}$ to the local variables.
  Since our $\EXPSPACE$ algorithm for this problem relies on a polynomially-sized task buffer, we get a blow-up when using the same methods.
  For the hardness, one can apply a slightly altered version of the result from \cite[Remark~6]{BaumannMajumdarThinniyamZetzsche2020a}.
  While the unaltered version of this result does require a very large number of threads running at the same time, note that this is only necessary because each task needs to start its execution to receive its initial assignment to the local variables.
  With passing of parameters in task creation, initial assignments can be set for unstarted tasks as well, eliminating the need for a large thread pool.
\end{remark}

\section{Succinct Pushdown Automata and Succinct Downward Closures} %
\label{sec:bounded_task_buffer}

Some of the detailed constructions for Sections \ref{sec:bounded_task_buffer}-\ref{sec:TCVASS} are quite technical, and therefore provided in the full version.

\subsection{Preliminaries: Pushdown Automata}

For an alphabet $\Gamma$, we write
$\bar{\Gamma}=\{\bar{\gamma} \mid \gamma\in\Gamma\}$.  Moreover, if
$x=\bar{\gamma}$, then we define $\bar{x}=\gamma$.
A \emph{pushdown automaton} ($\PDA$)
$\cP = (Q, \Sigma, \Gamma, E, q_0, \gamma_0, Q_F)$ consists of 
a finite set of \emph{states} $Q$, 
a finite input alphabet $\Sigma$, 
a finite alphabet of \emph{stack symbols} $\Gamma$, 
an \emph{initial state} $q_0 \in Q$, 
an \emph{initial stack symbol} $\gamma_0 \in \Gamma$, 
a set of \emph{final states} $Q_F \subseteq Q$,
and a transition relation $E \subseteq (Q \times \Sigma_\varepsilon \times \hat{\Gamma} \times Q)$, 
where $\Sigma_\varepsilon = \Sigma \cup \{\varepsilon\}$ and 
$\hat{\Gamma}=\Gamma \cup \overline{\Gamma} \cup \{\varepsilon\}$. 
For $(q,a,v,q') \in E$ we also write $q \xrightarrow{a|v} q'$.

The set of \emph{configurations} of $\cP$ is $Q \times \Gamma^*$.
For a configuration $(q, w)$ we call $q$ its \emph{state} and $w$ its \emph{stack content}.
The \emph{initial configuration} is $(q_0, \gamma_0)$. 
The set of \emph{final configurations} is $Q_F \times \Gamma^*$. 
For configurations $(q,w)$ and $(q',w')$, we write
$(q,w)\xRightarrow{a}(q',w')$ if there is an edge $(q,a,v,q')$ in $E$
such that (i)~if $v=\varepsilon$, then $w'=w$, 
(ii)~if $v\in\Gamma$, then $w'=wv$, and 
(iii)~if $v=\bar{\gamma}$ for $\gamma\in\Gamma$, then $w=w'\gamma$.

Informally, an edge $(q, a, v, q')\in E$ with $v\in\Gamma$ denotes the ``push'' of $v$ onto the stack,
and, for every $\gamma\in \Gamma$, the letter $\bar{\gamma}$ denotes the ``pop'' of $\gamma$ from the stack. 

For two configurations $c, c'$ of $\cP$, we write $c \Rightarrow c'$ if $c \xRightarrow{a} c'$ for some $a$. 
Furthermore, we write $c \xRightarrow{\smash{u}}^* c'$ for some $u \in \Sigma^*$ if there is a sequence of configurations $c_0$ to $c_n$ with 
\[
  c = c_0 \xRightarrow{a_1} c_1 \xRightarrow{a_1} c_2 \cdots c_{n-1} \xRightarrow{a_n} c_n = c', 
\]
such that $a_1 \ldots a_n = u$. We then call this sequence a \emph{run} of $\cP$ over $u$.
We also write $c \Rightarrow^* c'$ if the word $u$ does not matter.
If $w_i$ is the stack content of $c_i$, then the \emph{stack
  height} of the run is defined as $\max\{|w_i| \mid 0\le i\le n\}$. If the run has stack height at most
  $h$ then we write $c \pdastep{u}{h} c'$.
  A run of $\cP$ is \emph{accepting} if $c$ is initial and $c'$ is final.
Given two configurations $c, c'$ of $\cP$ with $c \Rightarrow^* c'$, we say that $c'$ is \emph{reachable} from $c$ and 
that $c$ is \emph{backwards-reachable} from $c'$. If $c$ is the initial configuration, we simply say that $c'$ is reachable. 

The \emph{language} accepted by $\cP$, denoted $L(\cP)$
is the set of words in $\Sigma^*$, for which $\cP$ has an accepting run. 
We also define a language accepted with bounded stack height. 
For $h\in\nats$, we define 
\[ \langof[h]{\cP}=\{u\in\Sigma^* \mid 
  (q_0,\varepsilon)\pdastep{u}{h}(q_f,w) \text{ where } q_f \in Q_F \}. \]
Note that the \emph{bounded stack language} $\langof[h]{\cP}$ is regular for any $\PDA$ $\cP$ and any fixed $h$.

A \emph{nondeterministic finite state automaton} ($\NFA$) is a $\PDA$ where every edge $q \autstep[a|v]q'$
satisfies $v=\varepsilon$. Equivalently, an $\NFA$ $\cA=(Q,\Sigma,E,q_0,Q_F)$ is obtained by removing the stack 
and having edges of the form $q \autstep[a] q'$ for $q,q' \in Q$ and
$a \in \Sigma_{\varepsilon}$.
The configurations now consist only of the state, and the concepts of runs, language, and
reachability are appropriately modified.

\subsection{Succinct $\PDA$s and Doubly Succinct $\NFA$s}

We shall consider \emph{succinct} representations of $\PDA$s and $\NFA$s.
There are different possible choices for
succinct representations; ours are based on \emph{transducers}. 
A succinct representation is parameterized by a size parameter $n$, given in unary, and the goal is to define
a $\PDA$ or $\NFA$ whose states are words of length $n$ over a fixed alphabet $\Sigma$ and whose transitions
are succinctly described using transducers (of size at most logarithmically dependent on $n$). 
Thus, the number of states of the underlying machine is exponential in the size parameter.
We call such a specification \emph{singly succinct} or just \emph{succinct}. 
When the words used to specify the states are of length $2^n$ (and transducers are of size at most doubly logarithmically dependent on $n$), then we call the specification \emph{doubly succinct.}

For a stack alphabet $\Gamma$ and an input alphabet $\Sigma$, let $
\scrC=\{ (a,\gamma) \mid a \in \Sigma_\varepsilon, \gamma \in \Gamma \cup \overline{\Gamma} \cup \{\varepsilon\} \}$. 
A \emph{succinct $\PDA$} is a tuple $\mathcal{C}=((T_c)_{c \in \scrC},\Sigma,\Gamma,\Delta, w_0,w_f)$,
where $\Sigma$ is an input alphabet, $\Gamma$ is a stack alphabet, $\Delta$ is a transducer alphabet,
$w_0,w_f \in \Delta^n$ for some $n \geq 0$, and
$T_c$, for each $c\in \scrC$, 
is a $2$-ary transducer over the alphabet $\Delta$. Note that in particular, $n$ is implicitly given as the length of the words $w_0$ and $w_f$ and is thus polynomial in the size of the succinct $\PDA$.

A succinct $\PDA$ represents an (explicit) $\PDA$ $\cE(\cC)$ with 
\begin{itemize}
  \item states $Q=\Delta^n$,
  \item initial state $w_0$ and final state $w_f$,
  \item input alphabet $\Sigma$ and stack alphabet $\Gamma$, and 
  \item transition relation $E\subseteq Q\times \Sigma_\varepsilon \times \hat{\Gamma} \times Q$,
  where $q \xrightarrow{a | \gamma } q'$ belongs to $E$ iff $(q,q') \in L(T_{(a,\gamma)})$.
\end{itemize}
The runs, accepting runs, language, etc.\ of a succinct $\PDA$ $\cC$ refer to those of the explicit $\PDA$ $\cE(\cC)$.
The size of $\cC$ is defined as $|\cC| = |w_0| + |w_f| + \sum_{c \in \scrC}|T_c|$.

A \emph{succinct $\NFA$} is a succinct $\PDA$ where the stack is immaterial, and defined in the expected way.

A \emph{doubly succinct} $\NFA$ ($\dsNFA$ in short) is a tuple 
$\cB=((T_a)_{a \in \Sigma_{\varepsilon}},\Sigma,\Delta,M,w_0)$,
where $\Sigma$ is an input alphabet, $\Delta$ is a transducer alphabet, 
$M\in\nats$ is a number given in \emph{binary}, $w_0 \in \Delta^*$ is a prefix of the initial state, and 
$T_a$ is a $2$-ary transducer over $\Delta$ for each $a \in \Sigma_{\varepsilon}$.
In the sequel, we will refer to $w_0$ as the \emph{initial prefix}: the ability to specify $w_0$ will be helpful in uniformly specifying a collection of $\dsNFA$s in a succinct mannner.
We assume $\set{0, 1} \subseteq \Delta$.
A $\dsNFA$ represents an (explicit) $\NFA$
$\cE(\cB) =(Q,\Sigma,E,q_0,\{q_f\})$ where
\begin{itemize}
  \item the set of states $Q=\Delta^{M}$, 
  \item the initial state $q_0$ is $w_00^{M-|w_0|}$ and the unique final state $q_f$ is $1^{M}$, 
  \item there exists a transition $(p \autstep[a]q) \in E$ for $a \in \Sigma_{\varepsilon}$ iff $(p,q) \in L(T_a)$.
\end{itemize}
The size of $\cB$ is defined as $|\cB| = |w_0| + \lceil\log M\rceil + \sum_{a \in \Sigma_{\varepsilon}}|T_a|$.
Note that, for a $\dsNFA$ $\cE(\cB)$, the number of states $|Q|$ is doubly exponential in the description of $\cB$,
since $M$ is written in binary.

\subsection{Succinct Downward Closures}

Next, we move on to computing \emph{downward closures} of languages.

The \emph{subword} order $\subw$ on finite words over an alphabet $\Sigma$ is defined as:
for all $u,v \in \Sigma^*$ with $u=u_1\ldots u_n$ where each $u_i \in \Sigma$, 
$u \subw v$ if and only if there exist $w_0,w_1,\ldots, w_n \in \Sigma^*$ such that
$v=w_0u_1w_1u_2\ldots w_{n-1}u_nw_n$. 
The subword order is a well-quasi-ordering \cite{Higman52}.
Given any $L \subseteq \Sigma^*$, its \emph{downward closure} 
(also called \emph{downclosure}) $\dcl{L}$ is given by
$\dcl{L}=\{ u \mid \exists v \in L, u \subw v\}$.
The downward closure of any language is regular \cite{haines1969free}.

We show the following language theoretic result of independent interest.

\begin{theorem}[Succinct Downward Closures]
  \label{th:succinct-dcl}
For every succinct $\PDA$ $\cC$, there is a polynomial-time construction of
a doubly succinct $\NFA$ $\cB$ of size polynomial in $|\cC|$ such that
$\langof{\cB} = \dcl{\langof{\cC}}$.
\end{theorem}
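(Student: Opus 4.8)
The plan is to lift the succinct-$\NFA$ construction for downward closures of ordinary $\PDA$s due to \citet{MajumdarTZ21} by one exponential level. Recall that their construction produces, from a $\PDA$ $\cP$ with state set $Q$, an $\NFA$ for $\dcl{\langof{\cP}}$ whose states are strings of length polynomial in $|\cP|$ (so the explicit automaton is exponential in $|\cP|$), and whose transitions only inspect \emph{reachability predicates} of $\cP$ --- statements of the form ``from state $p$ one can reach state $q$ at the same stack level while emitting a given letter,'' which reduce to emptiness queries about $\cP$. For a normal $\PDA$ these predicates are decidable in polynomial time, so they can simply be hard-wired into the transducers. Applying the same combinatorial skeleton to the explicit $\PDA$ $\cE(\cC)$, whose states are the words in $\Delta^n$, yields downclosure-automaton states of length polynomial in $|\cE(\cC)|$, i.e.\ \emph{exponential} in $|\cC|$; this is precisely the regime of a $\dsNFA$, where the length parameter $M$ is exponential and specified in binary. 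First I would set up this skeleton carefully, fixing $M$ and the initial prefix $w_0$ so that a single state string simultaneously holds the combinatorial part of a \citet{MajumdarTZ21}-style state together with the scratch space described below.

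The obstacle is that the reachability predicates now concern $\cE(\cC)$, an \emph{exponentially large} $\PDA$, so deciding them is $\EXPTIME$-complete in $|\cC|$ (matching the $\EXPTIME$-completeness of emptiness for succinct $\PDA$s), and we may no longer hard-code them: a transducer of size polynomial in $|\cC|$ reads its two length-$M$ arguments synchronously and can only check a \emph{regular} relation between them, which cannot by itself certify $\PDA$ reachability. The key idea is to exploit the classical identity $\EXPTIME = \mathsf{APSPACE}$: each required predicate about $\cE(\cC)$ is decided by an alternating Turing machine using space polynomial in $|\cC|$. I would then replace each single ``logical'' transition of the skeleton automaton by a verification gadget that simulates the corresponding alternating computation \emph{inside} the $\dsNFA$, using $\varepsilon$-labeled moves so that the gadget contributes nothing to the emitted word.

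The heart of the construction --- and the step I expect to be hardest --- is encoding an alternating polynomial-space computation into the length-$M$ states so that a small transducer can advance it one step at a time. Since such a machine may run for exponentially many steps, its computation tree has exponential depth but only polynomial width per configuration; I would encode a depth-first traversal of this tree, storing the current branch as a stack of configurations. This stack has exponential total length and therefore fits into the length-$M$ scratch portion of a state. Each $\varepsilon$-transition performs one traversal step --- pushing a successor configuration, or popping while propagating the Boolean value of an existential or universal node --- and the transducer need only verify that the top few configurations change according to the machine's transition function, which is a regular check expressible by a polynomial-size transducer. A logical transition of the skeleton is then enabled exactly when its verification gadget returns ``true'' at the root, so that a path reaches the final state $1^M$ while emitting $u$ if and only if $u \in \dcl{\langof{\cE(\cC)}}$.

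Finally I would account for sizes: the transducers are polynomial in $|\cC|$ (hence doubly logarithmic in $M$), the combinatorial state, the scratch space, and the traversal stack are each of length $O(M)$ with $M$ a fixed exponential in $|\cC|$ whose binary encoding is polynomial, and the initial prefix $w_0$ pins down the combinatorial start state. This yields a $\dsNFA$ $\cB$ of size polynomial in $|\cC|$, constructible in polynomial time, with $\langof{\cB} = \dcl{\langof{\cC}}$. The two subtle points to get right are (i) faithfully simulating the \emph{universal} branching of the alternating machine by the essentially sequential traversal encoding within a \emph{nondeterministic} automaton, and (ii) ensuring the $\varepsilon$-labeled verification gadgets neither add nor remove words from the recognized language, so that correctness of the downclosure is preserved end-to-end.
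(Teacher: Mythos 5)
Your proposal reaches the right destination by a genuinely different route from the paper's, so let me compare. The paper factors the theorem into two lemmas: first it adapts the \citet{MajumdarTZ21} augmented automaton so that the problematic emptiness queries (is $M_{p,q}\cap\Sigma^*a\Sigma^*$ nonempty?) are verified \emph{on the fly using the succinct $\PDA$'s own stack} --- push a marker $\$$, simulate the relevant sub-$\PDA$ $\cP_{a,p,q}$ on the main stack, pop back to $\$$ on acceptance --- with a hill-cutting argument showing that an exponential, binary-representable stack bound $h=O(|Q^\top|^2)$ suffices; second, a separate lemma converts any succinct $\PDA$ with a binary stack bound into a $\dsNFA$ simply by storing the entire bounded stack inside the length-$M$ state. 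You instead discharge each $\EXPTIME$-complete predicate by explicitly simulating an alternating polynomial-space decision procedure, storing a DFS branch of the alternation tree in a scratch portion of the state. Conceptually the two are cousins --- the paper's introduction even describes its construction as carrying out alternating polynomial-space computations inside the automaton --- but the paper's version is lighter: the $\PDA$'s push--pop discipline \emph{is} the alternation, so no explicit ATM, no AND/OR-tree bookkeeping, and no argument about simulating universal branching inside a nondeterministic device are needed. Your version buys generality (the gadget would verify any $\EXPTIME$ predicate, not just $\PDA$ emptiness) at the cost of a heavier encoding.

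One step of your sketch fails as literally stated. You claim a single $\varepsilon$-transition can push a successor configuration, and that verifying this is ``a regular check expressible by a polynomial-size transducer.'' The transducers here are synchronous and length-preserving: they read position $i$ of the input state and position $i$ of the output state together. Verifying that the block at positions $[i+s,i+2s)$ of the output is a legal successor of the block at $[i,i+s)$ of the input, where $s$ is the (polynomial, not constant) configuration length, forces the transducer to buffer $s$ symbols, i.e., to have on the order of $|\Delta|^{s}$ states --- exponential. The repair is standard and is used pervasively in the paper's own constructions: split each push into $s$ micro-steps, each an $\varepsilon$-transition that copies or produces one symbol and records progress with primed or marked symbols, so that every individual transducer only ever compares symbols at equal positions. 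Relatedly, the bracket symbols $[p,q]$ of the augmented automaton become exponentially numerous for $\cE(\cC)$ and must likewise be spelled out as strings pushed letter by letter. With these adjustments your argument should go through.
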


It is well known that an $\NFA$ for the downward closure of a $\PDA$
can be exponential in the size of the $\PDA$ \cite{BachmeierLS2015}.
Moreover, it is not difficult to construct examples of succinct $\PDA$ for which
a downward closure $\NFA$ is least doubly exponentially large\footnote{Take, for example, the language $L_n=\{w\rev{w} \mid w\in\{\ltr{a},\ltr{b}\}^*,~|w|=2^n\}$. For each $n$, $L_n$ is accepted by a succinct PDA of size polynomial in $n$, but a downward closure $\NFA$ requires at least $2^{2^n}/2^n$ states.}.
A result of \citet{MajumdarTZ21} shows that $\NFA$s for the downward
closure of a $\PDA$ are (singly) compressible.
Theorem~\ref{th:succinct-dcl} strengthens the result and shows that even for succinct $\PDA$s, the $\NFA$s are 
``doubly-compressible'': there is a doubly succinct $\NFA$ for the downward closure.

We prove the theorem using the following two lemmas.
The first lemma converts a succinct pushdown automaton $\cC$ to a
new succinct pushdown automaton $\cC'$ whose bounded stack 
language (for a stack bound that is at most exponential in the size of the
succinct $\PDA$) is the same as the downward closure of $\cC$.
The second lemma turns a succinct pushdown automaton into a 
doubly succinct NFA. 

\begin{lemma}\label{bounded-stack-dcl}
 Given a succinct pushdown automaton $\cC$, one can construct in polynomial time a succinct pushdown 
   automaton $\cC'$ and a number $h$ in binary such that $\langof[h]{\cC'}=\dcl{\langof{\cC}}$. 
 \end{lemma}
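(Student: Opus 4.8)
The plan is to realize the downward closure $\dcl{\langof{\cC}}$ as the \emph{bounded-stack} language of a pushdown automaton that augments $\cC$ with two kinds of extra moves and never needs a deep stack. First, I would add \emph{lossy} moves: for every transition of $\cE(\cC)$ reading an input letter $a$, a parallel copy that performs the same stack action but reads $\varepsilon$. These let a run silently drop output letters, so that any subword of a produced word is again producible. Second---and this is what makes a \emph{bounded} stack suffice---I would add \emph{pump} moves: whenever a state/top-of-stack pair $(q,\gamma)$ admits a balanced loop that returns to $(q,\gamma)$ while producing nonempty output, I allow the automaton to emit the corresponding ``left'' pump letters and (after processing the bracketed content) the ``right'' pump letters, each repeated arbitrarily often, while restoring the stack. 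Since a pump is a loop, emitting many copies costs no net stack growth. Soundness is then immediate: every lossy or pump move only emits letters that $\cE(\cC)$ genuinely produces and only drops letters, so $\langof{\cC'}\subseteq\dcl{\langof{\cC}}$, and a fortiori $\langof[h]{\cC'}\subseteq\dcl{\langof{\cC}}$ for every $h$.

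\textbf{Completeness and the stack bound.} For the reverse inclusion I would take $w\in\dcl{\langof{\cC}}$, fix $v\in\langof{\cC}$ with $w\subw v$, and analyse an accepting run of $\cE(\cC)$ on $v$ as a derivation tree with the $|w|$ output positions of $w$ marked. Repeatedly contracting every ancestor--descendant pair of nodes carrying the same $(\text{state},\text{top symbol})$ that encloses no marked leaf yields a ``core'' tree in which no such pair survives along any root-to-leaf path; hence every such path is short---its length is bounded by the number of distinct state/stack-symbol combinations of $\cE(\cC)$---while the tree may still be arbitrarily wide. Traversing this core depth-first uses stack height at most its depth, \emph{independently of} $|w|$, and reproduces the marked letters lying on the core directly; each contracted loop that carried a marked letter is reinstated as a pump move, which emits that letter (and its repetitions) with only a bounded additive amount of extra stack. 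Choosing $h$ to be a fixed polynomial in the number of states and stack symbols of $\cE(\cC)$---hence at most exponential in $|\cC|$ and writable in binary with polynomially many bits---this produces a run of $\cC'$ on $w$ of stack height at most $h$, giving $\dcl{\langof{\cC}}\subseteq\langof[h]{\cC'}$.

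\textbf{Main obstacle.} The real difficulty is \emph{succinctness}: $\cE(\cC)$ has exponentially many states, so the pumpability information guarding the pump moves (``does $(q,\gamma)$ admit a balanced producing loop, and with which letters?'') is an exponentially large---indeed $\EXPTIME$-hard---relation that cannot be precomputed and stored inside a polynomial-size succinct $\PDA$. The plan is therefore not to tabulate this relation but to have $\cC'$ \emph{verify} it on the fly: the guard of each pump move would be implemented by simulating, within the automaton and at stack height at most $h$, an alternating space-bounded computation (encoded succinctly through the transducers) that decides the relevant reachability/pumpability query. Engineering these transducers so that (i) the simulation is faithful, (ii) it reuses the same stack budget $h$ as the productive part of the run, and (iii) the entire description of $\cC'$ stays polynomial in $|\cC|$ and is computable in polynomial time, is where the bulk of the technical work lies, and is the part I expect to relegate to the detailed construction in the full version.
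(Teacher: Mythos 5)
Your proposal is correct and follows essentially the same route as the paper: augment the automaton with lossy $\varepsilon$-copies of transitions and two-sided pump moves, bound the stack via the loop-contraction argument (yielding an $h$ exponential in $|\cC|$ but polynomial in binary), and—crucially—resolve the succinctness obstacle by verifying the pumpability guards on the fly inside the automaton rather than precomputing them. This is exactly the paper's adaptation of the \cite{MajumdarTZ21} construction, where the emptiness checks for $M_{p,q}\cap\Sigma^*a\Sigma^*$ are run on the $\PDA$'s own stack using a second track of the succinct state, and the bracket symbols $[p,q]$ are replaced by pushed strings $[p\#q]$ to keep the stack alphabet polynomial.
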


\begin{lemma}\label{bounded-stack-nfa}
  Given a succinct pushdown automaton $\cC$ and a number $h$ in 
  binary, one can construct in polynomial time a doubly succinct NFA $\cB$ with 
  $\langof{\cB}=\langof[h]{\cC}$. 
\end{lemma}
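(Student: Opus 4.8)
The plan is to exploit the fact that, because $h$ is given in binary, any run of $\cE(\cC)$ of stack height at most $h$ is fully described by a pair $(q,w)$ consisting of the current state $q\in\Delta^n$ of the explicit $\PDA$ and the stack content $w\in\Gamma^{\le h}$. I would take these pairs as the states of the explicit $\NFA$ $\cE(\cB)$, encoding each as a single word of length $M=n+h$ over a transducer alphabet $\Delta$ that contains (disjoint copies of) $\cC$'s alphabet, $\Gamma$, and $\{0,1\}$, with $0,1\notin\Gamma$: the first $n$ symbols store $q$, and the remaining $h$ symbols store $w$ right-padded with the blank symbol $0$. With this convention the empty stack is encoded by $0^h$, so the initial configuration $(w_0,\varepsilon)$ is exactly $w_0 0^{M-n}$, matching the $\dsNFA$ initial state obtained from initial prefix $w_0$. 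Since $\lceil\log M\rceil$ is polynomial in $|\cC|$ and the binary length of $h$, the number $M$ can be produced in polynomial time, as required of a $\dsNFA$.

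For each $a\in\Sigma_\varepsilon$ I would construct the transducer $T_a$ so that it accepts a pair of encoded states $(q\,w,\, q'\,w')$ iff $\cE(\cC)$ has a transition $q\xrightarrow{a\mid v}q'$ with a stack operation $v\in\hat{\Gamma}$ taking $w$ to $w'$ within the bound. The main obstacle is a lockstep ordering mismatch: a transducer reads its two tracks left to right, so while processing the state part (positions $1$ through $n$) it must already simulate the correct $\PDA$ transducer $T_{(a,v)}$, yet the operation $v$ only becomes apparent in the stack part (positions $n+1$ through $M$). I would resolve this by having $T_a$ nondeterministically guess $v\in\hat{\Gamma}$ at the outset---there are only $2|\Gamma|+1$ choices---and build $T_a$ as the union over these guesses. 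For a fixed guess $v$, $T_a$ first runs $T_{(a,v)}$ on the first $n$ symbol-pairs to verify $(q,q')\in\langof{T_{(a,v)}}$, and then, via an $\varepsilon$-move, switches to a stack-checking phase for the remaining $h$ pairs. This is a sequential composition of $T_{(a,v)}$ with a small gadget, so $|T_a|$ is polynomial in $|\cC|$ and, crucially, independent of $h$ (the stack phase loops rather than branching per slot).

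With the right-padded encoding, the stack part of a valid transition differs from the source only at the current top, which makes the stack-checking gadget simple. For a guessed push $v=\gamma$, the gadget copies equal symbols until it meets the boundary pair $(0,\gamma)$, then verifies that the remaining pairs are $(0,0)$; a push is automatically blocked when the stack is full, since no free slot exists, which is precisely what enforces the height bound $h$. For a pop $v=\bar\gamma$ it copies equal symbols, checks the boundary pair $(\gamma,0)$, and then reads $(0,0)$ to the end; for $v=\varepsilon$ it checks that the two tracks agree everywhere. Each gadget uses $O(|\Gamma|)$ states regardless of $h$. Finally, to meet the $\dsNFA$ requirement of a unique final state $1^M$, I would add to $T_\varepsilon$ (alongside the $\varepsilon$-labeled $\PDA$ moves) an $O(n)$-sized family of transitions from every accepting configuration $(w_f,w)$ to $1^M$: the transducer checks that the first $n$ source symbols spell $w_f$ and outputs $1$ in every position. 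Because $1\notin\Gamma\cup\{0\}$, the word $1^M$ is never a genuine configuration encoding, so these transitions introduce no spurious acceptance. A routine simulation in both directions then shows that an accepting run of $\cE(\cB)$ on $u$ decomposes into a stack-height-$\le h$ run of $\cE(\cC)$ on $u$ ending in state $w_f$ followed by the $\varepsilon$-move to $1^M$, and conversely, yielding $\langof{\cB}=\langof[h]{\cC}$.
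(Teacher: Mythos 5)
Your proposal is correct and follows essentially the same route as the paper's proof: encode a configuration $(q,w)$ as a word of length $M = O(n+h)$ consisting of the state followed by the right-padded bounded stack, let each $T_a$ nondeterministically guess the stack operation $v$, sequentially compose $T_{(a,v)}$ on the state portion with a constant-size gadget that nondeterministically locates the top of stack and updates it, and add an $\varepsilon$-transition collapsing any configuration with state $w_f$ to $1^M$. The only differences are cosmetic (the paper inserts separator symbols $\$$, $\#$, $\dagger$ to delimit the state and stack portions, taking $M=n+h+3$, whereas you rely on disjoint alphabets and fixed offsets), so no further comment is needed.
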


Theorem~\ref{th:succinct-dcl} follows by composing the constructions of the two lemmas.

Before proving the lemmas, we first recall the construction from \cite{MajumdarTZ21} that takes an 
ordinary $\PDA$ $\cP=(Q,\Sigma,\Gamma,\delta,q_0,Q_F)$ and constructs in polynomial time a $\PDA$ $\cP^\top$ and a bound $h$
such that the downclosure $\dcl{\langof{\cP}}$ is the bounded stack language of $\cP^\top$.

From $\cP$, we construct an \emph{augmented automaton} ${\cP}^\top =(Q^\top,\Sigma,\Gamma^\top,E^\top,q_0,Q_F)$ 
as follows.
The states $Q^\top$ consist of the states in $Q$ together with some additional states we describe below.
The stack alphabet $\Gamma^\top$ of ${\cP}^\top$ consists of the stack alphabet of $\cP$ together with a fresh
stack symbol $[p,q]$ for every $p,q\in Q$, i.e., $\Gamma^\top = \Gamma \cup \set{[p,q] \mid p, q \in Q}$.

The set $E^\top$ consists of all the edges in $E$ together with the following additional edges, that we describe next.
To begin, define the language
\[ M_{p,q}(\cP)=\{ u\in\Sigma^*\mid \exists v\in\Gamma^*\colon (p,\varepsilon)\autsteps[u](p,v),~ (q,v)\autsteps(q,\varepsilon)\}. \]
In other words, $M_{p,q}$ consists of words which can be read on a cycle on $p$ such that the stack content 
created is consumed by a cycle on $q$. 
Furthermore, define
\[ 
\eta_{p,q}(\cP)=\{a\in\Sigma \mid \exists u\in M_{p,q}(\cP),~|u|_a\ge 1 \}.
\]
where $|u|_a$ denotes the number of occurrences of the letter $a\in \Sigma$ in $u$.

With this, we first add the following edges for each $p, q\in Q$
(the edges below use tuples $(R|v)$ where $R$ is a 
regular language instead of $(a|v)$ where $a \in \Sigma$, but we can simply paste the automaton for $R$ instead to
convert to our current formulation):
\begin{align}
\label{newedges}
  &p\autstep[\eta_{p,q}(\cP)^*|[p,q]] p, &&q\autstep[\eta_{q,p}(\overline{\cP})^*|\overline{[p,q]}] q.
\end{align}
Here,  $\overline{\cP}$ denotes the \emph{dual automaton} of $\cP$, and is obtained from
$\cP$ by changing each edge $p\autstep[a|v]q$ into $q\autstep[a|\bar{v}] p$. 
Note that $\langof{\overline{\cP}}$ is just $\rev{\langof{\cP}}$, i.e., the set of the reversals of words from $\langof{\cP}$.  

Second, we add an edge $p \xrightarrow{\varepsilon|\gamma } q$ for every edge $p \xrightarrow{a|\gamma } q$ 
for any $a \in \Sigma, \gamma \in \Gamma \cup \overline{\Gamma} \cup \set{\varepsilon}$. Here we essentially just drop the reading of input letters.

The sets $\eta_{p,q}(\cP)$ are constructed in polynomial time by checking emptiness of the languages
$M_{p,q} \cap \Sigma^*a \Sigma^*$ for each $a\in\Sigma$, for which one can construct a $\PDA$ $\cP_{a,p,q}$. 
To this end, the state set of $\PDA$ $\cP_{p,q}$ (without $a$) for $M_{p,q}$ consists of two disjoint copies of $Q$, the states of $\cP$. Hence, the state set is $\{0,1\}\times Q$.
The start state of $\cP_{p,q}$ is $ (0,p)\in \{0\}\times Q$ and the final state is $(1,q) \in \{1\}\times Q$. 
The transitions in $\{0\}\times Q$ are the same as in $\cP$ without any change. In the transitions in $\{1\}\times Q$, we drop 
the reading of input letters (replacing them with $\varepsilon$ in the transitions). 
Finally, we add an $\varepsilon$-transition from $(0,p) \in \{0\}\times Q$ to $(1,q) \in \{1\}\times Q$ which does not change the stack. 
Intersecting this $\PDA$ with the regular language $\Sigma^*a \Sigma^*$ gives us the $\PDA$ $\cP_{a,p,q}$.

In the construction of \cite{MajumdarTZ21}, the emptiness checks for $\cP_{a,p,q}$ were computed separately in polynomial time.
The main result of \cite{MajumdarTZ21} shows that there is a bound $h = O(|Q^\top|^2)$ such that $\langof[h]{\cP^\top} = \dcl{\langof{\cP}}$.
However, for our construction on succinct $\PDA$s, this leads to exponentially many checks, each potentially taking exponential time.
(Note that the emptiness problem for succinct $\PDA$ is $\EXPTIME$-complete.)
Therefore, we have to modify the construction as follows.

We observe that the emptiness of the languages 
$M_{p,q} \cap \Sigma^*a \Sigma^*$, for $p,q\in Q$ and $a\in \Sigma$, can in fact be implemented 
directly within $\cP^\top$ using its own stack on the fly by adding some additional states 
(corresponding to the union of all the states of all the $\cP_{a,p,q}$).
Any point when ${\cP}^\top$ is in state $p$, it nondeterministically guesses a letter $a \in \Sigma$ 
and puts the special stack symbol $\$$ on its stack.
Then, it runs the $\PDA$ $\cP_{a,p,q}$ for $M_{p,q} \cap \Sigma^*a \Sigma^*$ using its own stack. 
If $\cP_{a,p,q}$ accepts, the stack is popped all the way down including the $\$$ symbol and the computation is continued from there on. 
This requires augmenting the stack alphabet $\Gamma^\top$ with $\$$ and $\dagger$.
Since there is a run reaching the final state of $\cP_{a,p,q}$ iff there is a run reaching the final state with a stack 
bounded by $O(|Q^\top|^2)$ by a hill cutting argument, 
we obtain $\langof[h]{{\cP}^\top}=\dcl{\langof{\cP}}$ where $h=O(|Q^\top|^2)$. 

We now proceed on to the proofs of the lemmas.

\begin{proof}[Proof of Lemma~\ref{bounded-stack-dcl}]
  The proof idea follows mostly from the adaptation of the construction from \cite{MajumdarTZ21} to succinct $\PDA$s that we discussed above.

  The key idea is to think of state of a succinct machine as a tape on
  which polynomial space computations can be performed. By using 
  a product alphabet $\Delta \times \Delta$ for the tape of $\cC'$, we further
  think of it as having two separate tracks.
  The first track is used to simulate the transitions of $\cC$ while the second is used to perform 'on the fly' computations in order to check for the emptiness of $M_{p,q} \cap \Sigma^*a \Sigma^*$, in order to decide if one is allowed to take edges of the form in Equation \ref{newedges}. We argue that polynomial amount of tape space suffices in order to accomplish this.  

  A detail here is that we can no longer utilize a fresh stack symbol $[p,q]$ for every $p,q \in Q$.
  The reason is the existence of now exponentially many choices for $q$ and $p$, so this would result in a blow-up of the stack alphabet.
  As a remedy for this issue, we instead add the brackets `$[$' and `$]$' as fresh stack symbols, as well as adding the transducer alphabet $\Delta$ of the succinct $\PDA$ to the new stack alphabet.
  That way we can just push and pop the entire string $[p\#q]$ in $2n + 3$ steps, where $\#$ is another new stack symbol used as a separator.
\end{proof}

\begin{proof}[Proof of Lemma~\ref{bounded-stack-nfa}]
  Given a succinct $\PDA$ $\cC=((T_c)_{c \in \scrC},\Sigma,\Gamma,\Delta, w_0,w_f)$ with $|w_0|=n$,
   we construct a doubly succinct NFA $\cB=((T_a)_{a \in \Sigma_{\varepsilon}},\Sigma,\Delta',w'_0,M)$ as follows.
  The states of the explicit NFA $\cE(\cB)$ are used to store configurations $(q,w)$,
  where $q\in \Sigma^n$ is a state of the explicit $\PDA$ $\cE(\cC)$, and $w \in \Gamma^*$ with $|w| \leq h$ is the bounded stack. 

We define the transducer alphabet $\Delta'=\Delta \cup \Gamma \cup \{ \$, \#,\dagger\}$ where $\#,\dagger$ are new symbols, and declare $M=|w_0|+h+3$.
A configuration $(q, w)$ is represented as the string $\$q\# w\dagger 0^{h-|w|}$, where the portion 
of the string between $\#$ and $\dagger$ encodes the stack contents. The prefix $w'_0$ is defined to be $\$w_0\#\dagger$ so that the initial state then becomes $\$w_0\#\dagger0^h$. 

  There is a transducer $T_a$ for each $a \in \Sigma_{\varepsilon}$ that converts an input string representing the configuration
  $(p,w)$ to an output string $(q,w')$ for a transition $(p,w) \xrightarrow{a|v} (q,w')$ of the explicit
  $\PDA$. 
  The transducer nondeterministically guesses $c=(a,v)$ and applies $T_c$ to $p$, converting it 
  to $q$. This part ends when the letter $\#$ is seen.
  It then converts $w$ to $w'$ where the two strings are related by the stack operation $v$.
  The symbols on the stack that are not at the top are maintained as they are; the transducer 
  nondeterministically guesses when the last stack symbol occurs in $w$ and updates the stack based on whether
  $v \in \Gamma$, $v\in \bar{\Gamma}$, or $v =\varepsilon$, keeping the necessary information in its state.

  The transducer $T_\varepsilon$ nondeterministically checks letter by letter if the current state is $\$w_f\# w$ for some $w$ and 
  transforms the state to $1^M$, the accepting state of $\cB$.
 \end{proof}

\section{Succinct Downward Closures of Tasks} %
\label{sec:succinct_dcl_tasks}

 \subsection{Tasks in a $\DCBPR$} \label{sub:threads}
The $\DCBPR$ model does not assign task identifiers.
Nevertheless, it is convenient to be able to talk about the run of a
single task along the execution.
One can formally introduce unique task identifiers by modifying the
thread step and
the operational semantics rules to carry along the identifier in the
local state.
In this way, we can talk about the run of a single task,
the multiset of tasks spawned by a given task, etc.

With this intuition, consider the run of a specific task $t$,
that starts executing from some initial assignment $\vec{a}$ to the global variables (henceforth simply called \emph{global state}) with an initial stack
symbol $\gamma$ and initial local variable values $\vec{b} = \vecz$ (henceforth called \emph{local state}), from the moment it is started by the thread pool (by
executing the {\sc Pick} rule).
In the course of its run, the thread executing the task $t$ updates its own local stack and spawns new tasks,
but it can also get swapped out and swapped back in. 

The run of such a task $t$ corresponds to the run of an associated succinct $\PDA$ $\cC_{(\vec{a},\gamma)}$ (called a succinct task-$\PDA$) that
can be extracted from $\cD$; our construction is a simple modification
of the construction in \cite{AtigBQ2009} which now accounts for succinctness. 
The reason we require a succinct $\PDA$ in our setting is that the global states are now succinctly represented by a set 
of global Boolean variables and the number of possible assignments to these variables is exponential. 
Since the different succinct task-$\PDA$s only differ in their initial state based on the value of $\vec{a}$ and 
otherwise have the same set of transducers defining them, the entire set of succinct task-$\PDA$s have a small description. 
In the sequel, we will describe the  succinct task-$\PDA$ without mentioning the initial and final states and 
simply denote it by $\cC_{\gamma}=((T'_p)_{p \in \scrC},\Sigma',\Gamma,\Delta')$. 
If $\Gamma$ is the stack alphabet of the $\DCBPR$, and $\Delta = \{0,1\}$, then the alphabet $\Sigma'$ of the succinct $\PDA$ is given by
$\Sigma' = \Gamma \mathbin{\dot\cup} \tilde{\Gamma} \mathbin{\dot\cup} \Delta \mathbin{\dot\cup} \{\bot\}$, 
where $\tilde{\Gamma} = \set{\tilde{\gamma} \mid \gamma \in \Gamma}$ is a decorated copy of $\Gamma$, and 
$\Delta' = \Gamma \mathbin{\dot\cup} \Delta \mathbin{\dot\cup} \{ 0',1',\#,\$_1,\$_2,\$_3, \dagger, \bot_1,\bot_2\}$. 
When $\vec{a}$ is also specified, we will call $\cC_{(\vec{a},\gamma)}$ an initialised task-$\PDA$. 
The initial state of the initialised task-$\PDA$ is then taken to be $\# \vec{a} \# \vecz$ and the final state is the all $1$s string.

The succinct $\PDA$ $\cC_{\gamma}$ updates the global variables, local variables and the stack using the transducers in $\cT_c$. The context switches are nondeterministically guessed and there are two kinds of state jumps possible. 
First, a jump $(\vec{a}_2, \tilde{\gamma}_2, \vec{a}_3)$ in $\cC_{\gamma}$ corresponds to the thread being 
switched out while moving to global state $\vec{a}_2$ and later resuming at global state 
$\vec{a}_3$ with $\gamma_2$ on top of its stack (without being active in the interim). 
Second, a jump $(\vec{a}_2,\bot)$ corresponds to the last time the thread is swapped out during the execution of the associated task 
(leading to global state $\vec{a}_2$) or when the thread is terminated. Both of these kinds of jumps are made visible as part of the input alphabet by using the letters $\{0, 1, \bot\} \mathbin{\dot\cup} \Gamma$ in $\Sigma'$, one symbol at a time.
Additionally, in the case of an initialised task-PDA $\cC_{(\vec{a},\gamma)}$, its specified global state $\vec{a}$ is also made visible (in the same manner as the jumps) at the start of its execution.
We describe the details below.

The state of $\cC$ is of the form $\Box \vec{a}_1 \# \vec{b}_1$ where $\vec{a}_1$ is a string in $\{ 0,1\}^m$ representing the global variables, $\vec{b}_1$ is a string in $\{ 0,1\}^n$ representing the local variables and $\Box$ is a special symbol which is used to store information about whether the automaton is currently in the midst of outputting a jump transition or the initial global state.

\newcommand{\TransitionType}[1]{\noindent\textbf{#1}}
\TransitionType{Initialization.} We want an initialised task-PDA $\cC_{(\vec{a},\gamma)}$ to output the specified global state $\vec{a}$, using the transducers $T'_{0,\varepsilon}$ and $T'_{1,\varepsilon}$.
When seeing $\#$ as the first state symbol, either of these transducers will look for the first occurrence of its output symbol, $0$ or $1$, respectively, and replace it with the corresponding symbol $0'$ or $1'$, respectively. From the initial state $\# \vec{a} \# \vecz$ this eventually results in $\# \vec{a}' \# \vecz$, once all of $\vec{a}$ has been output. Here $\vec{a}'$ is the counterpart to $\vec{a}$, where every symbol in $\{0,1\}$ has been replaced with its primed version.
Then finally the transducer $T'_{\varepsilon,\varepsilon}$ looks for a state of the form $\# \{0',1'\}^m \# 0^n$ and changes each $0'$, respectively $1'$, back to $0$, respectively $1$, while also replacing the first $\#$ with $\dagger$. This results in the state $\dagger \vec{a} \# \vecz$.

\TransitionType{Non-jump Transitions.} For each $p=(\gamma,v)$ where $\gamma \in \Gamma \cup \{\varepsilon\}$, and $v \in \Gamma \cup \bar{\Gamma} \cup \{\varepsilon\}$, the transducer $T'_p$ of the task-$\PDA$ simulates $T_p \in \cT_c$ of the $\DCBPR$ on $\vec{a}_1\vec{b}_1$ while ignoring the remaining symbols in $\Box \vec{a}_1 \# \vec{b}_1$. When simulating a non-jump transition, the first symbol of the state is always $\dagger$.

\TransitionType{Context Switching.} Let $(\vec{a}_2,\gamma_2,\vec{a}_3)$ be a context switch, where the 
thread is swapped out when the global state is $\vec{a}_2$ with the top of stack being $\gamma_2$ and then swapped back 
in when global state is $\vec{a}_3$. Context switching takes place in three phases.

\begin{description}
	\item[Phase 1] Let us assume that a thread is swapped when moving from global state $\vec{a}_1$ to $\vec{a}_2$. This is done by application of a transducer from $\cT_s$ of the $\DCBPR$. While doing so, 
$\cC$ changes the value of the first state symbol from $\dagger$ to $\$_1$ to indicate that it is Phase 1 of a context switch. Transducers $T'_{0,\varepsilon}$ and $T'_{1,\varepsilon}$ are used to output the string $\vec{a}_2$. Since this is done one symbol at a time, the exact position is remembered by changing the binary string $\vec{a}_2$ into its corresponding primed version $\vec{a}'_2$. For example $110$ is converted to $1'10$, then $1'1'0$ and finally to $1'1'0'$. Once all bits have been output, the first state symbol is changed from $\$_1$ to $\$_2$ to signal the start of Phase 2.

	\item[Phase 2] The succinct $\PDA$ $\cC$ uses its transducer $T'_{\tilde{\gamma}_2,\bar{\gamma}_2}$ in order to output $\tilde{\gamma}_2$ while at the same time checking that $\gamma_2$ is indeed the top of the stack, by popping it from the stack. During this, the global state $\vec{a}_3$ is guessed as a resumption point for the context switch, and its primed version $\vec{a}'_3$ is saved in the state in place of $\vec{a}'_2$.
Additionally, the first state symbol is changed from $\$_2$ to $\$_3$ to move onto the next phase, and the $\#$ symbol between the global and local state is rewritten to $\gamma_2$ in order to momentarily store this stack symbol. The state at this point is of the form $\$_3 \vec{a}'_3 \gamma_2 \vec{b}_1$.

	\item[Phase 3] The global state $\vec{a}_3$ is read out in binary similar to Phase~1, except the state symbols now change from the primed version to their non-primed counterparts, i.e.\ replacing $\vec{a}'_3$ with $\vec{a}_3$ in the state. Then $\cC$ simulates the transducer $T_{\gamma_2} \in \cT_r$ of the $\DCBPR$ on global state $\vec{a}_3$ to essentially resume the thread. This is done as part of the transducer $T'_{\varepsilon,\gamma_2}$ in order to push $\gamma_2$ back onto the stack. Additionally, $\gamma_2$ in the state is changed back to $\#$ and the first symbol $\$_3$ is turned back to $\dagger$.
\end{description}

\TransitionType{Termination.} This can take place in two ways: either a thread is switched out and not switched back in again (case 1), or a transition rule from $T_t$ is applied (case 2).

\begin{description}
\item[Case 1] Here, $\cC$ proceeds as in Phase 1 of context switching but can nondeterministically choose to write $\bot_1$ on the first cell instead of $\$_1$ to indicate imminent termination. The rest of Phase 1 is carried out as before, transitioning from $\vec{a}_1$ to $\vec{a}_2$, writing out the bits of $\vec{a}_2$, and rewriting the first symbol of the state to $\bot_2$. However, instead of the steps in Phase 2, $\cC$ outputs $\bot$ and moves to its final state (the all $1$s string) on seeing $\bot_2$ as the first state symbol.

\item[Case 2] Here, $\cC$ also proceeds like in Phase 1 of context switching, but applies $T_t \in \cT_t$ of the $\DCBPR$ to transition from $\vec{a}_1$ to $\vec{a}_2$, instead of a transducer from $\cT_s$. During this, the first state symbol is also changed to $\bot_2$. Then $\cC$ outputs $\bot$ and moves to its final state as in Case 1.
\end{description}

\subsection{Alphabet-preserving Downward Closures}

For an alphabet $\Theta\subseteq\Sigma$, let $\pi_\Theta\colon\Sigma^*\to\Theta^*$ denote the projection onto $\Theta^*$. 
In other words, for $u\in\Sigma^*$, the word $\pi_\Theta(u)$ is obtained from $u$ by deleting all occurrences of letters in $\Sigma\setminus\Theta$.
Let $L\subseteq\Sigma^*$, let $\Theta\subseteq\Sigma$ be a subset, and let $k\in\N$. We define 
\[ \dcl[\Theta,k]{L}=\{u\in\Sigma^* \mid \exists v\in L\colon \text{$u\subw v$ and $\pi_\Theta(u)=\pi_\Theta(v)$ and $|v|_\Theta\le k$}\}. \]

The following theorem shows that the alphabet-preserving downward
closure has a doubly succinct representation.

\begin{theorem}\label{preserving-dcl}
  Given a succinct $\PDA$ $\cC$ over the input alphabet $\Sigma$, %
  a subset $\Theta \subseteq \Sigma$, and
  $K\in\N$ written in binary, one can construct in polynomial time a doubly succinct $\NFA$ $\cB$ 
  with $\langof{\cB}=\dcl[\Theta,K]{\langof{\cC}}$. 
\end{theorem}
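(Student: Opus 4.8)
The plan is to follow the two-step strategy behind \cref{th:succinct-dcl}: I would build a succinct $\PDA$ $\cC'$ together with a stack bound $h$ written in binary so that $\langof[h]{\cC'}=\dcl[\Theta,K]{\langof{\cC}}$, and then invoke \cref{bounded-stack-nfa} to turn $(\cC',h)$ into the desired doubly succinct $\NFA$ $\cB$ in polynomial time. The starting point is a structural observation that explains what the target language looks like. If $u\subw v$ and $\pi_\Theta(u)=\pi_\Theta(v)$, then the subword embedding must match the $\Theta$-letters of $u$ and $v$ one-to-one in order; writing $v=w_0a_1w_1\cdots a_tw_t$ with $a_i\in\Theta$, $t\le K$, and each $w_i\in(\Sigma\setminus\Theta)^*$ the maximal $\Theta$-free factors, this forces $u=u_0a_1u_1\cdots a_tu_t$ with $u_i\subw w_i$. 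Thus $\dcl[\Theta,K]{L}$ is obtained from a witness $v\in L$ with at most $K$ occurrences of $\Theta$ by keeping the $\Theta$-letters as fixed separators and replacing each block $w_i$ by an arbitrary $(\Sigma\setminus\Theta)$-subword. So the only genuinely new ingredients over \cref{th:succinct-dcl} are (a) enforcing the bound $K$ and (b) taking the downward closure \emph{block-wise}, i.e.\ preserving $\Theta$.

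To construct $\cC'$, I would start from the augmented automaton of $\cC$ used in the proof of \cref{bounded-stack-dcl} and make three changes. First, I enforce the bound by extending the succinct state with an $O(\log K)$-bit binary counter that the transducers increment whenever a $\Theta$-letter is read, sending any run that would exceed $K$ to a sink; since $K$ is given in binary, $O(\log K)$ extra tape cells and a length-preserving increment/comparison transducer suffice, so $\cC'$ stays polynomial in $|\cC|$. Second, I preserve $\Theta$ by only adding the input-dropping $\varepsilon$-edges for edges reading a letter of $(\Sigma\setminus\Theta)\cup\{\varepsilon\}$, leaving the $\Theta$-reading edges of $\cC$ untouched so that no $\Theta$-letter can ever be deleted. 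Third, I restrict the shortcut edges $p\xrightarrow{\eta_{p,q}^{*}\mid [p,q]}p$ (and their duals) to \emph{$\Theta$-free} pumpable cycles, i.e.\ I compute $\eta_{p,q}$ from $M_{p,q}\cap(\Sigma\setminus\Theta)^*$ instead of $M_{p,q}$, so that the free pumping can never fabricate a $\Theta$-letter out of order (which would corrupt $\pi_\Theta$ or overshoot $K$). As in the proof of \cref{bounded-stack-dcl}, the emptiness tests for $M_{p,q}\cap(\Sigma\setminus\Theta)^*\cap\Sigma^*a\Sigma^*$, for $a\in\Sigma\setminus\Theta$, are carried out on the fly on a second track of the tape, so neither the stack alphabet nor the state space blows up.

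For correctness, the inclusion $\supseteq$ would be shown by simulating a witnessing run of $\cC$ on $v$: the $\Theta$-spine is reproduced verbatim on genuine $\cC$-edges, the non-$\Theta$ content of deeply nested matched push/pop segments is produced via the shortcut edges, and a hill-cutting argument as in \citet{MajumdarTZ21} bounds the stack needed for the shortcuts by $O(|\text{states}|^2)$. A $\Theta$-carrying cycle of $\cC$ can be iterated at most $K$ times in $v$ (else $|v|_\Theta>K$), and these bounded iterations are kept on the spine, contributing a further factor of at most $K$; hence it suffices to take $h$ polynomial in $|\cC|$ times $K$, which is writable in binary. For $\subseteq$, every accepting run of $\cC'$ with stack height $\le h$ reads all its $\Theta$-letters on genuine $\cC$-edges in the correct order, deletes only non-$\Theta$ letters, and the counter certifies at most $K$ of them, so its output lies in $\dcl[\Theta,K]{\langof{\cC}}$. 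Applying \cref{bounded-stack-nfa} to the polynomial-size $\cC'$ and the binary bound $h$ then yields $\cB$ with $\langof{\cB}=\langof[h]{\cC'}=\dcl[\Theta,K]{\langof{\cC}}$.

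The main obstacle is the correctness of this modified augmented automaton, not its construction. Concretely, I expect the delicate point to be re-running the hill-cutting analysis of \citet{MajumdarTZ21} in the presence of preserved separators: one must argue that restricting the shortcut pumping to $\Theta$-free cycles loses nothing, i.e.\ that every $\Theta$-letter that is genuinely required is emitted on the spine within the claimed bounded stack even though a $\Theta$-carrying cycle may legitimately be iterated up to $K$ times, and dually that the on-the-fly $\Theta$-free emptiness tests on the second track neither interfere with the $\Theta$-counter nor introduce spurious words. Establishing these two facts — and pinning down the exact (binary) value of $h$ they dictate — is where the real work lies; the succinctness bookkeeping, by contrast, is routine given that $K$ is binary and the counter costs only $O(\log K)$ cells.
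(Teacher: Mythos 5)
Your proposal takes a genuinely different route from the paper's. You modify the augmented-automaton construction of \citet{MajumdarTZ21} itself---keeping $\Theta$-reading edges un-erasable, restricting the shortcut/pumping edges to $\Theta$-free cycles via $M_{p,q}\cap(\Sigma\setminus\Theta)^*$, adding an $O(\log K)$-bit counter, and re-running the hill-cutting argument to get a stack bound of roughly $|Q^\top|^2\cdot(K{+}1)$---and then feed the result into Lemma~\ref{bounded-stack-nfa}. The paper never touches the downward-closure machinery: it pads each word of $\langof{\cC}$ with $\#^l$ so that $|w|_\Theta+l=K$ exactly (Lemma~\ref{normal-to-succinct}), applies the \emph{ordinary} succinct downward closure of Theorem~\ref{th:succinct-dcl} as a black box, filters the resulting $\dsNFA$ to words with exactly $K$ occurrences of $\Theta\cup\{\#\}$ (Lemma~\ref{theta-downclosure-extraction})---which forces every $\Theta$- and $\#$-letter of the witness to survive the subword step---and finally turns $\#$-transitions into $\varepsilon$-transitions. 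That padding-and-counting trick is precisely what your route has to earn by hand: your correctness rests on (i) whenever the stack exceeds $|Q^\top|^2(K{+}1)$ there is a $\Theta$-free cuttable repetition (a chain of $K{+}2$ nested same-$(p,q)$ pairs has $K{+}1$ disjoint gaps, so one is $\Theta$-free), and (ii) the letters of a $\Theta$-free cut segment witness membership in $M_{p,q}\cap(\Sigma\setminus\Theta)^*$, so the restricted shortcut edges can re-emit them. I believe both go through, but you flag them as ``where the real work lies'' and leave them as expectations rather than proofs; the paper's decomposition buys exactly the avoidance of this re-analysis, at the cost of two easy padding/filtering lemmas. One small correction: your $h$ is polynomial in $|Q^\top|\cdot K$, i.e.\ \emph{exponential} in $|\cC|$ times $K$---harmless since Lemma~\ref{bounded-stack-nfa} takes $h$ in binary, but not ``polynomial in $|\cC|$ times $K$'' as written.
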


\begin{remark}
  Recall that for the ordinary downward closure $\dcl{\langof{\cC}}$, 
  one can construct a singly succinct NFA for 
  $\dcl{\langof{\cC}}$. However, for the $\Theta$-preserving downward 
  closure $\dcl[\Theta,K]{\langof{\cC}}$, this is not true. Consider, 
  for example, the language 
  $L=\{w\rev{w} \mid w\in\{\ltr{a},\ltr{b}\}^*\}$ where $\rev{w}=w_n\ldots w_1$ is 
  the reverse of the word $w=w_1\ldots w_n$.  Then, for 
  $K=2^{n}$ and $\Theta=\{\ltr{a},\ltr{b}\}$, the set 
  $\dcl[\Theta,K]{L}$ consists of all palindromes of length 
  $2^n$. Clearly, an NFA for $\dcl[\Theta,K]{L}$ requires at least 
  $2^{2^{n-1}}$ states. 
\end{remark}

We begin with a simple lemma that states that succinct $\PDA$s can be modified to keep a count.
The proof of the lemma follows by modifying the transducers to maintain and increment the count of the occurrences
of $\Theta$ on the string.
 
\begin{lemma}\label{normal-to-succinct}
  Given a succinct $\PDA$ $\cC$ over the input alphabet $\Sigma$, 
  a subset $\Theta \subseteq \Sigma$, and a number $K$ 
  in binary, one can construct in polynomial time a succinct $\PDA$ $\cC'$ over the alphabet $\Sigma \cup \set{\#}$,
where $\#\not\in\Sigma$ is a new symbol, such that
  \[
L(\cC')=\{ w\#^l \mid w \in L(\cC), |w|_{\Theta}+l=K\}
\]
\end{lemma}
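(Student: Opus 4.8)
The plan is to keep, alongside each state of $\cC$, a \emph{binary down-counter} recording how many of the $K$ permitted occurrences of letters from $\Theta$ are still available, together with a single \emph{phase bit} that distinguishes the part of the run reading the $\Sigma$-input from the part reading the trailing block of $\#$'s. Set $b=\lceil\log_2(K+1)\rceil$; since $K$ is written in binary, $b$ is polynomial in the input. A state of $\cC'$ will be a word of length $n'=1+n+b$ over $\Delta'=\Delta\cup\{0,1\}$, read as a phase bit, followed by the $n$-symbol encoding of a state of $\cC$, followed by the counter stored least-significant-bit first; the transducers locate these three blocks from their fixed lengths $n$ and $b$. The initial state is $0\,w_0\,\bin{K}$ (phase $0$, counter $K$) and the unique final state is $1\,w_f\,\bin{0}$ (phase $1$, counter $0$), where $\bin{\cdot}$ denotes the length-$b$ encoding.

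For each $c=(a,v)\in\scrC$ with $a\in\Sigma_\varepsilon$ I would let $T'_c$ fire only in phase $0$: it leaves the phase bit unchanged, runs the original transducer $T_c$ on the $n$-symbol middle block, and on the counter block decrements by one if $a\in\Theta$ and copies it verbatim otherwise. The decrement is a constant-size, length-preserving sweep over the counter read least-significant-bit first: while a borrow is pending it rewrites $0$ to $1$, and on the first $1$ it writes $0$ and clears the borrow, then copies the remaining bits; if the borrow is never cleared the transducer has no accepting run, which is precisely the rejection we want once the counter has reached $0$ (so that $|w|_\Theta\le K$ is enforced). To read $\#$, the transducer $T'_{(\#,\varepsilon)}$ verifies that the middle block equals the hard-wired $w_f$ and copies it, sets the phase bit to $1$, and decrements the counter in the same way; every $T'_{(\#,\gamma)}$ with $\gamma\neq\varepsilon$ is empty. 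Finally I add one nondeterministic branch to $T'_{(\varepsilon,\varepsilon)}$ that, in phase $0$ with middle block $w_f$ and counter $0$, flips the phase bit to $1$; this covers the degenerate case $l=0$, so the single final state $1\,w_f\,\bin{0}$ is always reachable.

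Correctness then follows by reading off the invariant maintained along runs. For \emph{soundness}, in any accepting run the phase-$0$ steps simulate $\cC$ exactly on the middle block while each $\Theta$-letter decrements the counter; reaching a $\#$ forces the middle block to be $w_f$ and switches to phase $1$, after which no $\Sigma$-letter can be read, so the accepted input has the shape $w\#^l$ with $w\in\langof{\cC}$, and the counter running from $K$ down to $0$ yields $|w|_\Theta+l=K$. For \emph{completeness}, given $w\in\langof{\cC}$ and $l=K-|w|_\Theta\ge 0$, one simulates the accepting run of $\cC$ on $w$ in phase $0$ (the counter stays nonnegative because $|w|_\Theta\le K$ and ends at $l$), reads the $l$ symbols $\#$ in phase $1$ driving the counter to $0$, and reaches the final state, invoking the extra $\varepsilon$-branch when $l=0$. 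Each new transducer has size $O(|T_c|)$ plus a constant, there are $|\scrC|+O(1)$ of them, and the state words have length $n'=O(n+b)$, so the whole construction is of polynomial size and computable in polynomial time.

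The main thing to get right is the bookkeeping that makes all of this \emph{length-preserving} and polynomial despite $K$ being exponentially large: the counter must live inside the fixed-length state word and be updated by transducers that neither lengthen nor shorten it. None of the steps is deep, but the phase bit and the ``reject on borrow underflow'' behaviour of the binary decrement are the two details that are easy to overlook: the phase bit is exactly what prevents $\#$'s from appearing anywhere but at the end, which is necessary because $w_f$ may have outgoing transitions in $\cC$.
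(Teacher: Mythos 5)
Your construction is correct and matches the paper's proof in its essentials: both append a binary down-counter to the succinct state, decrement it on $\Theta$-letters and on $\#$, and let the $\#$-transducer fire only when the $\cC$-part of the state equals $w_f$. The only difference is your explicit phase bit; the paper's sketch omits it (implicitly treating the first $\#$ as committing the run to the trailing block), so your handling of the case where $w_f$ has outgoing transitions in $\cC$ is a welcome extra precaution rather than a genuine divergence.
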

\begin{proof}
  Let $u_K \in \{0,1\}^{\lceil\log{K}\rceil}$ be the binary encoding of $K$. We obtain $\cC'$ from $\cC$ as follows.
  
  We extend the initial state $w_0$ of $\cC$ to $w_0\dagger u_K$ using a new separator symbol $\dagger$. All transducers not corresponding to input symbols in $\Theta$ are extended to not change the part of the state after $\dagger$. The transducers that do correspond to $\Theta$ implement binary subtraction by $1$ on this part of the state. We add a new transducer $T_\#$ for $\#$ that always checks whether the first part of the state matches the final state $w_f$ of $\cC$ without changing it. If this check succeeds, $T_\#$ also implements binary subtraction by $1$ on the part after $\dagger$. The final state of $\cC'$ is then defined as $w_f\dagger 0^{\lceil\log{K}\rceil}$.
\end{proof}

Given a language $L \subseteq \Sigma^*$ and a subset $\Theta \subseteq \Sigma$ of the alphabet, 
define 
\[ L|_{\Theta,k} := \{ w \mid w \in L, |w|_{\Theta}=k\}, \]
i.e., those words that contain exactly $k$ letters from $\Theta$. 
In the next step, we extract precisely these words from the downward closure. 

\begin{lemma}\label{theta-downclosure-extraction}
  Given a doubly succinct $\NFA$ $\cB=((T_a)_{a \in \Sigma_{\varepsilon}},\Sigma,\Delta,w_0,M)$, $\Theta \subseteq \Sigma$ and $k\in\N$ in binary, one can construct in polynomial time another doubly succinct $\NFA$ $\cB'$ such that $L(\cB')=L(\cB)|_{\Theta,k}$. 
\end{lemma}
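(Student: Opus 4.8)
The plan is to augment the doubly succinct $\NFA$ $\cB$ with a binary counter that tracks the number of $\Theta$-letters read so far, and to accept precisely when this counter reaches the hardcoded value $k$ while $\cB$ itself is in its final state. Since $k$ is given in binary, the counter needs only $\lceil\log(k+1)\rceil$ cells, so the new machine stays doubly succinct.

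Concretely, I would build $\cB'=((T'_a)_{a\in\Sigma_\varepsilon},\Sigma,\Delta',w'_0,M')$ over the transducer alphabet $\Delta'=\Delta\cup\{\dagger\}$ (recall $\{0,1\}\subseteq\Delta$), with $M'=\lceil\log(k+1)\rceil+1+M$. A state of $\cE(\cB')$ has the form $b\,\dagger\,s$, where $b\in\{0,1\}^{\lceil\log(k+1)\rceil}$ is the counter (stored least-significant-bit first for convenient carry handling) and $s\in\Delta^M$ is a state of $\cE(\cB)$. Placing the counter at the \emph{front} is the key design choice: the forced initial state $w'_00^{M'-|w'_0|}$ of a $\dsNFA$ is a polynomial-size prefix padded with zeros, so I set the initial prefix $w'_0=0^{\lceil\log(k+1)\rceil}\,\dagger\,w_0$, which yields the initial state $0^{\lceil\log(k+1)\rceil}\,\dagger\,w_0 0^{M-|w_0|}$, i.e.\ counter $0$ and the initial state of $\cB$. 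Had I appended the counter and its separator at the end, the separator $\dagger$ would sit at the exponential position $M{+}1$ and could not be encoded in a polynomial-size prefix.

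For each $a\in\Sigma_\varepsilon$, the transducer $T'_a$ runs in two sequential phases along the input state, exactly as in the proof of Lemma~\ref{bounded-stack-nfa}: it first processes the counter block $b$ up to the separator $\dagger$, and then simulates $T_a$ of $\cB$ on the remaining $\cB$-state block $s$. If $a\notin\Theta$ it copies $b$ unchanged; if $a\in\Theta$ it first checks $b<k$ by a bitwise comparison with the hardcoded binary representation of $k$ and then increments $b$ by one---both realizable by a polynomial-size transducer, just like the binary subtraction used in the proof of Lemma~\ref{normal-to-succinct}. Blocking the increment once $b=k$ ensures no run ever counts more than $k$ letters from $\Theta$, so the counter faithfully records $|w|_\Theta$ and never overflows. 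Since $\Theta\subseteq\Sigma$ never contains $\varepsilon$, the $\varepsilon$-transitions of $\cB$ leave the counter untouched.

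The only genuine obstacle is the acceptance condition, because the $\dsNFA$ format forces the unique final state to be $1^{M'}$, whereas our intended accepting configuration is ``counter $=k$ and $s=1^M$''. I resolve this exactly as in the final step of Lemma~\ref{bounded-stack-nfa}: the transducer $T'_\varepsilon$ is taken to be the union of (i)~the simulation of $T_\varepsilon$ of $\cB$ on the $s$-block (counter unchanged), and (ii)~a ``finalize'' branch that reads the whole state, verifies bitwise that $b$ encodes $k$ and that every symbol of $s$ is $1$, and then outputs $1^{M'}$. Since $1^{M'}$ is produced only by branch~(ii), reaching the final state is equivalent to $\cB$ reaching $1^M$ on an input with exactly $k$ letters from $\Theta$; hence $L(\cB')=L(\cB)|_{\Theta,k}$. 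All components have size polynomial in $|\cB|$ and $\lceil\log k\rceil$, and $M'$ is a polynomially-sized binary number, so the construction runs in polynomial time and yields a doubly succinct $\NFA$.
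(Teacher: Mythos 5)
Your construction is correct and follows essentially the same route as the paper's proof: prepend a $\lceil\log(k+1)\rceil$-bit counter to the state, increment it on $\Theta$-letters while simulating $\cB$ on the remaining block, and add a finalize branch to $T'_\varepsilon$ that checks the counter equals $k$ and the $\cB$-block is $1^M$ before emitting $1^{M'}$. The only (immaterial) differences are that you guard the increment with an explicit $b<k$ comparison to prevent wraparound, whereas the paper's increment transducer simply has no transition out of its carry state at the end of the counter block so overflow gets stuck, and that you place the counter in the initial prefix $w'_0$ directly rather than via an initialization phase.
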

\begin{proof}
$\cB'$ simply keeps a counter in its state and increments it whenever a letter from $\Theta$ is read while simulating $\cB$. 
When the final state of $\cB$ is reached, $\cB'$ can nondeterministically guess and subsequently check that the counter value is exactly $k$, moving to its own final state afterwards. Such a check is feasible by a transducer of $\cB'$ since the bit representation of $k$ is small.
\end{proof}

We are now ready to prove \cref{preserving-dcl}. 
\begin{proof}
  Suppose we are given a succinct PDA $\cC$ and a number $K$ written in binary. We build a succinct $\PDA$ $\cC'$ which extracts 
  those words of $L(\cC)$ that contain at most $K$ letters from $\Theta$ (padded with $\#$s)
  using Lemma \ref{normal-to-succinct}. 
  Then, using Theorem~\ref{th:succinct-dcl}, we construct a doubly
  succinct $\NFA$ $\cB$ such that $\langof{\cB}= \dcl{\langof{\cC'}}$.
    Then,
    applying Lemma \ref{theta-downclosure-extraction}, we get a doubly
    succinct $\NFA$ $\cB'$ such that
  \[ \langof{\cB'}=\{w \mid \text{$w\in\langof{\cB}$ and $|w|_{\Theta \cup \{\#\}}=K$} \}. \]
  Finally, modify $\cB'$ by changing all transitions reading the input letter $\#$ to read $\varepsilon$ instead. To this end we construct the doubly succinct $\NFA$ $\cB''$ from $\cB'$ by simply combining the two transducers $T'_\#$ and $T'_\varepsilon$ of $\cB'$ into a single transducer $T''_\varepsilon$.
  Observe that $\langof{\cB''}=\dcl[\Theta,K]{\langof{\cC}}$. 
\end{proof}
We can apply the theorem on any initialised task-PDA $\cC_{(\vec{a},\gamma)}$, giving us the following corollary. 

\begin{corollary}[Succinct Task Downclosure]
\label{coro:succinct-task-dcl}
Given a $\DCBPR$ $\cD$, a succinct task-$\PDA$ $\cC_{(\vec{a},\gamma)}$ and a number
$K$ in binary, there is a polynomial time procedure to compute a
(polynomial size) doubly succinct $\NFA$ $\cB$ such that
$\langof{\cB} = \dcl[\Theta, K]{\langof{\cC_{(\vec{a},\gamma)}}}$. Further, for a succinct task-$\PDA$ $\cC_{(\vec{a}',\gamma)}$ with $\vec{a}' \neq \vec{a}$, the corresponding $\dsNFA$ $\cB'$ only differs from $\cB$ in the initial prefix.
\end{corollary}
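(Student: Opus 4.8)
The plan is to observe that the corollary is essentially a direct instantiation of Theorem~\ref{preserving-dcl} applied to the family of succinct task-$\PDA$s, together with a bookkeeping argument about how the initial global state $\vec{a}$ enters the construction. First I would apply Theorem~\ref{preserving-dcl} to the initialised task-$\PDA$ $\cC_{(\vec{a},\gamma)}$, with the preserved alphabet $\Theta$ taken to be the letters encoding context switches and termination jumps (i.e.\ those letters from $\{0,1,\bot\}\mathbin{\dot\cup}\Gamma$ that make jumps visible), and with the given $K$ as the binary context-switch bound. Theorem~\ref{preserving-dcl} then yields, in polynomial time, a doubly succinct $\NFA$ $\cB$ with $\langof{\cB}=\dcl[\Theta,K]{\langof{\cC_{(\vec{a},\gamma)}}}$. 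Since all the steps feeding into that theorem (Lemma~\ref{normal-to-succinct}, Theorem~\ref{th:succinct-dcl} via Lemmas~\ref{bounded-stack-dcl} and~\ref{bounded-stack-nfa}, and Lemmas~\ref{theta-downclosure-extraction}) are polynomial-time and produce polynomial-size output, the resulting $\cB$ is of polynomial size, giving the first assertion.

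The second assertion — that varying $\vec{a}$ changes only the initial prefix $w_0$ of the resulting $\dsNFA$ — is where the real content lies, and I would argue it by tracing the dependence on $\vec{a}$ through each construction step. The key observation from Section~\ref{sub:threads} is that the succinct task-$\PDA$s $\cC_{(\vec{a},\gamma)}$ for different $\vec{a}$ share the \emph{same} transducer family $(T'_p)_{p\in\scrC}$ and the same final state; they differ only in the initial state $\#\vec{a}\#\vecz$. I would then check that each construction in the pipeline preserves this property: the transition transducers are untouched by $\vec{a}$, and only the encoding of the initial state is carried along. In Lemma~\ref{normal-to-succinct} the initial state is extended to $w_0\dagger u_K$, so the $\vec{a}$-dependence remains confined to the $w_0$ part. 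In the downclosure construction of Theorem~\ref{th:succinct-dcl}, Lemma~\ref{bounded-stack-nfa} represents a configuration $(q,w)$ as $\$q\#w\dagger 0^{h-|w|}$ and sets the initial prefix $w'_0=\$w_0\#\dagger$; hence the initial state of the explicit $\NFA$ depends on $\vec{a}$ only through this prefix, while all transducers $T_a$ depend solely on the transducer family and on $h$, not on $\vec{a}$. Finally, Lemma~\ref{theta-downclosure-extraction} and the final $\#\mapsto\varepsilon$ relabelling in the proof of Theorem~\ref{preserving-dcl} only augment the state with a counter and merge two transducers, neither of which introduces any new dependence on $\vec{a}$.

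Putting these observations together, for two global states $\vec{a}\neq\vec{a}'$ the resulting doubly succinct $\NFA$s $\cB$ and $\cB'$ share the same transducer alphabet $\Delta$, the same number $M$, and the same transducer family $(T_a)_{a\in\Sigma_\varepsilon}$; they differ only in their initial prefixes $w_0$, which encode $\vec{a}$ and $\vec{a}'$ respectively. I expect the main obstacle to be purely one of careful bookkeeping: I must verify that nowhere in the multi-step pipeline does $\vec{a}$ leak into a \emph{transducer} (as opposed to the initial prefix), in particular that the \textbf{Initialization} phase of the task-$\PDA$ — which outputs $\vec{a}$ by reading it off the initial state rather than hard-coding it into a transducer — genuinely keeps $\vec{a}$ localised to the state. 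Since the initialization transducers $T'_{0,\varepsilon},T'_{1,\varepsilon},T'_{\varepsilon,\varepsilon}$ are defined uniformly and simply scan the state for primed bits, this dependence is confined to $w_0$ as required, and the claim follows.
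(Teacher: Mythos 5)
Your proposal is correct and takes essentially the same route as the paper, which simply applies Theorem~\ref{preserving-dcl} to the initialised task-$\PDA$ and leaves the initial-prefix claim implicit. Your explicit tracing of the $\vec{a}$-dependence through Lemma~\ref{normal-to-succinct}, Theorem~\ref{th:succinct-dcl}, and Lemma~\ref{theta-downclosure-extraction} is a useful elaboration of exactly the bookkeeping the paper relies on but does not spell out.
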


\section{From $\DCBPR$ to $\VASS$ Coverability} %
\label{sec:TCVASS}

\Cref{coro:succinct-task-dcl} allows us to represent the state of each thread as a succinct machine.
The next goal in the proof of \cref{th:complexity-thread-pooling} is to represent the entire configuration
of a $\DCBPR$.
Intuitively, the configuration will maintain  
the global state and the state of all the threads in the thread pool as ``control states''
and maintain a counter for each stack symbol that tracks the number of (unstarted) tasks of that type
that have been spawned.
The structure suggests the use of \emph{vector addition systems with states} ($\VASS$) as a representation.
The main challenge is to represent the states of exponentially many (in the representation of $N$) threads 
in a concise way.

\subsection{Vector Addition Systems with States and their Succinct Versions}

A \emph{vector addition system with states} ($\VASS$) is a tuple
$V = (Q, I, E,q_0,q_f)$ where
$Q$ is a finite set of \emph{states}, $I$ is a finite set of
\emph{counters}, 
$q_0\in Q$ is the initial state, 
$q_f\in Q$ is the final state, 
and
$E$ is a finite set of edges of the form $q \autstep[\delta] q'$ where
$\delta \in \set{-1,0,1}^I$.\footnote{A more general definition of $\VASS$
would allow each transition to add an arbitrary vector over the integers.
We instead restrict ourselves to the set $\set{-1,0,1}$, since this
suffices for our purposes, and the $\EXPSPACE$-hardness result by
\citet{lipton1976reachability} already holds for VASS of this form.}
A \emph{configuration} of the $\VASS$ is a pair $(q,u) \in Q \times \multiset{I}$.
The elements of $\multiset{I}$ and $\set{-1,0,1}^I$ can also be seen as vectors of length $|I|$ over $\N$ and $\set{-1,0,1}$, respectively, and we sometimes denote them as such.
The edges in $E$ induce a transition relation on configurations: 
there is a transition $(q, u)\autstep[\delta](q', u')$ if
there is an edge $q\autstep[\delta] q'$ in $E$ such that
$u'(p) = u(p) +\delta(p)$ for all $p\in I$.
A \emph{run} of the $\VASS$ is a finite or infinite sequence of
configurations $c_0 \autstep[\delta_0] c_1 \autstep[\delta_1]\ldots$ where $c_0=(q_0,\vecz)$.
A finite run is said to reach a state $q\in Q$ if the last configuration in the run is of the
form $(q,\textbf{m})$ for some multiset $\textbf{m}$.
An \emph{accepting} run is a finite run whose final configuration has state $q_f$.
The \emph{language (of the $\VASS$)} is defined as 
\[ \langof{V}=\{ w \in (\set{-1,0,1}^I)^* \mid w=w_1\cdots w_l, \text{ there is a run }
(q_0,\vecz)=c_0 \autstep[w_1] %
\ldots \autstep[w_l] c_l=(q_f,u)\}. \]
The size of the $\VASS$ $V$ is defined as $|V| = |I| \cdot |E|$. 

The \emph{coverability problem} for $\VASS$ is 
\begin{description}
  \item[Given] A $\VASS$ $V$. 
  \item[Question] Is there an accepting run? %
\end{description}
The coverability problem for $\VASS$ is $\EXPSPACE$-complete \cite{Rackoff78,lipton1976reachability}.

\subsubsection*{A succinct variant of $\VASS$} Let $I=\{ 1,2,\ldots,d\}$ for some $d \in \N$, let $\overline{I}$ be a disjoint copy of $I$ and define
$\hat{I}:= I \cup \overline{I}$ and 
$\hat{I}_{\varepsilon}:= \hat{I} \cup \{\varepsilon\}$.
A \emph{transducer controlled vector addition system with states} ($\TCVASS$, for short) is a tuple 
$\cV=((T_i)_{i \in \hatie},\Delta,M)$ where each $T_i$ is a 2-ary transducer over
 $\Delta \supseteq \{0,1\}$, and $M$ is a number in binary.
  This induces an explicit $\VASS$ 
 $\cE(\cV)=(Q,I,E,q_0,q_f)$ given by 
\begin{itemize}
  \item $Q= \Delta^M$,
  \item $E \subseteq Q \times \integs^d \times Q$ is the set of triples $(q,u,q')$ satisfying one of the 
  following conditions:
  \begin{enumerate}
    \item $(q,q') \in L(T_{\varepsilon}) \text{ and } u=\vecz$,
    \item there exists $i \in I$, $(q,q') \in L(T_i) \text{ and } u=e_i$ where 
    $e_i$ denotes the vector with $1$ in the $i^{th}$ component and $0$ otherwise, or
    \item there exists $\bari \in \barI$, $(q,q') \in L(T_{\bari}) \text{ and } u=-e_i$.
  \end{enumerate}
  \item $q_0=0^M$ and $q_f = 1^M$.
\end{itemize}
We write $q \xrightarrow{u} q'$ to denote a transition of $\cE(\cV)$ from its state $q$ to state $q'$ while
executing the instruction $u$.
If there exists a finite run $0^M \xrightarrow{u_1}q_1 \xrightarrow{u_1}q_2 \ldots \xrightarrow{u_l}1^M$, 
we say the word $u_1\ldots u_l$ is in the language of $\cV$, which is denoted $\langof{\cV}$.

Given a $\TCVASS$ $\cV=((T_i)_{i \in \hatie},\Delta,M)$, its \emph{associated} $\dsNFA$ is given by
$\cB(\cV)=((T_i)_{i \in \hatie},\hat{I},\Delta,M)$. In the sequel, we will associate $i$ (resp. $\bari,\varepsilon$ )
with $e_i$ (resp. $-e_i,\vecz$), so that the languages of $\cV$ and $\cB(\cV)$ are over the same alphabet $\hati$. The size of $\cV$ is simply defined as $|\cV| = |\cB(\cV)|$.

Our main result of this section is that, despite the succinctness, the coverability problem for $\TCVASS$ is $\EXPSPACE$-complete.

\begin{theorem}\label{thm:cover_tcvass}
The coverability problem for $\TCVASS$ is $\EXPSPACE$-complete.
\end{theorem}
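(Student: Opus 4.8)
The plan is to prove matching bounds, with the lower bound being routine and the upper bound carrying all the content. For $\EXPSPACE$-hardness I would reduce from coverability of an ordinary $\VASS$, which is already $\EXPSPACE$-complete \cite{Rackoff78,lipton1976reachability} and, as the footnote notes, already hard for $\{-1,0,1\}$ updates. Given $V=(Q,I,E,q_0,q_f)$, encode each state as a distinct binary word of length $M=\lceil\log|Q|\rceil$, renaming so that $q_0\mapsto 0^M$ and $q_f\mapsto 1^M$, and for each $i\in\hatie$ let $T_i$ be the length-preserving $2$-ary transducer accepting exactly the pairs $(p,q)$ of an edge of $V$ whose update is the one associated with $i$ (that is, $e_i$, $-e_i$, or $\bfz$). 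Each $T_i$ is a union of finitely many linear chains, hence polynomial in $|V|$, and coverability in $\cE(\cV)$ coincides with coverability in $V$, so the reduction is polynomial and coverability-preserving.

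For membership, coverability of $\cV$ is by definition coverability of its explicit $\VASS$ $\cE(\cV)$, which has only $d=|I|$ counters but $|\Delta|^M$ control states (doubly exponential, since $M$ is given in binary). I would build in polynomial time a normal $\VASS$ $V'$ of size polynomial in $|\cV|$ that covers iff $\cE(\cV)$ covers; Rackoff's algorithm then decides $V'$ in space $2^{\mathrm{poly}(|\cV|)}$, i.e.\ in $\EXPSPACE$. The construction separates control from counters, mirroring the split $\cV\leftrightarrow(\cB(\cV),\text{labels})$. For the control $\dsNFA$ $\cB=\cB(\cV)$, whose states are words in $\Delta^M$, the idea going back to Lipton \cite{lipton1976reachability} is to treat such a word as the value of a counter bounded by $|\Delta|^M\le 2^{2^{O(|\cV|)}}$: a polynomial-size $\VASS$ can store and manipulate counters up to $2^{2^k}$ using properly nested, restricted zero-tests, for $k=O(\log M)=O(|\cV|)$ levels. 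Each transition of $\cB$, obtained by applying a length-preserving transducer $T_a$, is simulated by a gadget that scans the encoded word position by position, runs the small finite control of $T_a$, and writes back the successor word, emitting the matching letter of $\hatie$. The resulting $\VASS$ $N_\cB$ is polynomial in $|\cV|$, and the label sequences of its accepting runs are exactly $\langof{\cB}$.

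I would then place the $d$ real counters on top of $N_\cB$: a transition labelled $i\in I$ (resp.\ $\bari$, $\varepsilon$) additionally adds $e_i$ (resp.\ $-e_i$, $\bfz$) to them. An accepting run of the combined $\VASS$ $V'$ thus traces a word $w\in\langof{\cB(\cV)}$ whose induced counter operations never force a real counter below zero, which is precisely a covering run of $\cE(\cV)$. Since $V'$ remains polynomial in $|\cV|$, coverability of $\cV$ reduces to coverability of $V'$ and lies in $\EXPSPACE$.

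The main obstacle is the faithfulness of the Lipton-style encoding. A $\VASS$ cannot truly test a counter for zero, so I must show the nested zero-test gadgets are sound in both directions for coverability: no genuine run of $\cE(\cV)$ is blocked, and no accepting run of $V'$ can pass a zero-test dishonestly and reach $1^M$ without corresponding to a real run. This rests on maintaining the complementary-counter invariant $x+\bar{x}=2^{2^k}$ at every level, with each loop counted out by a lower-level counter whose exactness is certified recursively down to a constant-bounded counter tested precisely in the finite control. Verifying that the word-scanning gadgets preserve these invariants — forcing exactly $M$ scanning steps and the correct base-$|\Delta|$ digit arithmetic — while keeping all of $V'$ polynomial in $|\cV|$, is the delicate heart of the argument.
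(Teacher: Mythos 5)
Your proposal is correct and follows essentially the same route as the paper: both directions match, and the upper bound in both cases decomposes $\langof{\cV}$ as the control language $\langof{\cB(\cV)}$ intersected with the counter-consistency language, simulates the doubly succinct control by a polynomial-size ordinary $\VASS$ via the Lipton-style encoding of doubly-exponentially bounded counters, and preserves the emitted labels so the real $d$ counters can be layered on top. The paper packages the control simulation as the classical ``exponentially bounded automaton is simulated by a polynomial-size $\VASS$'' result together with a ``local simulation'' property guaranteeing language preservation, whereas you unfold the Lipton gadgets explicitly; the argument is the same.
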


\begin{proof}
The lower bound follows from the lower bound for $\VASS$.
To show the upper bound, given a $\TCVASS$ $\cV$, we show that we can construct in polynomial time a $\VASS$ $\cV'$ 
such that the final state is reachable in $\TCVASS$ if and only if the final state is reachable in $\cV'$.

We begin with a language-theoretic observation.
Let the $\VASS$ $V_I$ be defined as $V_I=(\{q_0 \},I,E,q_0,q_0)$ where for each $i \in I$ 
(resp. $i \in \overline{I}$)
   there is a transition $q_0 \autstep[e_i] q_0$ (resp. $q_0 \autstep[-e_i] q_0$); 
   as well as a transition $q_0 \autstep[\vecz] q_0$.
   For any $\TCVASS$ $\cV$, note that 
   \[ \langof{\cV}=\langof{\cB(\cV)} \cap \langof{V_I}. \]
To prove the theorem, it suffices to construct a $\VASS$ $\cV''$ with
$\langof{\cV''} = \langof{\cB(\cV)}$: This would enable us to build a $\VASS$ 
for $\langof{\cV''} \cap \langof{V_I}$. The $\VASS$ $\cV''$ can be constructed due to two facts:
\begin{enumerate}
  \item A classical result that an exponentially bounded automaton can be simulated by a 
  polynomial size $\VASS$, which can be constructed in polynomial time (see \cite{Esp98a,lipton1976reachability}) and
  \item the sequence of reductions used in the above result having a property we call \emph{local simulation}.
  For machines $\cM_1,\cM_2$ we say $\cM_1$ is simulated by $\cM_2$ locally if every step of $\cM_1$ is simulated
   by some fixed sequence of steps
of $\cM_2$ which only depends on a constant sized local part of the global configuration.
By associating the last of the corresponding sequence of steps of $\cM_2$ with the action of 
the single step of $\cM_1$ and the rest of the steps of $\cM_2$ with $\varepsilon$, we ensure that  
$\langof{\cM_2}=\langof{\cM_1}$. \qedhere
\end{enumerate}

\end{proof}

\begin{remark}
An alternate proof that coverability of $\TCVASS$ is in $\EXPSPACE$ follows from the multi-parameter analysis of decision problems for $\VASS$ by
\citet{RosierYen86}, where they show that the Rackoff argument can be modified to show that the coverability problem
for a $\VASS$ with $k$ counters, $n$ states,
and constants bounded by $l$ can be solved in $O((l+\log n)2^{ck\log k})$ nondeterministic space.
For $\TCVASS$, the number of states $n$ is doubly exponential in the description of the $\TCVASS$, so this result shows
that coverability is in $\EXPSPACE$.
We believe our argument is conceptually simpler.
\end{remark}

\subsection{From $\DCBPR$ to $\TCVASS$}

\begin{theorem}\label{dcps-to-tcvass}
  Given a $\DCBPR$ $\cD=(V_{\glob},V_{\loca},\Gamma,\cT_c,\cT_s,\cT_r,\cT_t,\vec{a}_0,\gamma_0)$, its global state
  $\vec{a}_f$ and two binary numbers $K,N$, we can construct in
  polynomial
  time a $\TCVASS$ $\cV=((T'_i)_{i \in \hatie},\Delta',M')$ such that
   $\vec{a}_f$ is  reachable in $\cD$ via an
   $N$-thread-pooled, $K$-context switch bounded execution if and only 
  if the final state of $\cE(\cV)$ is reachable in $\cE(\cV)$.
\end{theorem}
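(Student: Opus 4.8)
The plan is to build a $\TCVASS$ $\cV$ whose control state encodes an entire $\DCBPR$ configuration \emph{except} for the task buffer and the individual stacks, and whose counters track the task buffer. The stacks are abstracted away by replacing each running task with the doubly succinct $\NFA$ for its context-preserving, spawn-losing downward closure obtained from \cref{coro:succinct-task-dcl}. Concretely, I take one counter per stack symbol, so $I=\Gamma$, and let the control string of $\cE(\cV)$ have the form $G \# s_1 \# s_2 \cdots \# s_N$, where $G\in\{0,1\}^{|V_\glob|}$ is the current shared global state and each \emph{slot} $s_j$ is either marked empty or stores a state of the task $\dsNFA$ together with a status flag (active or parked) and, if parked, a guessed resume global state. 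Since the $\cB$ of \cref{coro:succinct-task-dcl} has state length $M$ at most exponential in the input size and $N$ is at most exponential (given in binary), the total length $M'=O(|V_\glob|+N\cdot M)$ is still at most exponential, so $\log M'$ is polynomial and $\cV$ is a legal, polynomially sized $\TCVASS$. Crucially, by the last sentence of \cref{coro:succinct-task-dcl} all these task automata share the same transducers and differ only in an initial prefix, so a single polynomial family of transducers can drive every slot.

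The transducers of $\cV$ simulate the $\DCBPR$ semantics one step at a time, maintaining the invariant that at most one slot is active. I associate spawns with increments and picks with decrements: the transducer $T_\gamma$ (adding $e_\gamma$) implements a spawn by the active thread, locating the active slot and applying the task $\dsNFA$'s transducer for input letter $\gamma$ to that slot's automaton state; the transducer $T_{\bar\gamma}$ (adding $-e_\gamma$) implements a {\sc Pick}, writing into some empty slot the initial state of the $\dsNFA$ for type $\gamma$ (whose initial prefix encodes a guessed global state at which the task first becomes active) and marking it parked with that value as its resume field; and $T_\varepsilon$ implements everything else: the internal task $\dsNFA$ steps, the jump emissions described below, the initial setup from $0^{M'}$ to the encoding of $\tuple{\vec{a}_0,\#,\multi{(\vecz,\gamma_0,0)},\emptyset}$, and the final move to $1^{M'}$. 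Each such transducer copies $G$ and the inactive slots verbatim while embedding a polynomial-size $\dsNFA$ transducer on exactly the targeted slot, parsing the string via the field separators; note its size is independent of $N$, since all $N$ slots are processed uniformly.

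The heart of the construction is the global-state synchronisation encoded inside the $T_\varepsilon$ steps. While a slot is active its automaton state internally carries the true global state, so spawns and internal moves need not touch $G$. When the active task reaches a context switch its automaton emits a jump $(\vec{a}_2,\tilde\gamma_2,\vec{a}_3)$: as the bits of $\vec{a}_2$ (the post-$\cT_s$ global state) are output I copy them into $G$, as the bits of $\vec{a}_3$ (the guessed resumption state) are output I copy them into the slot's resume field, and once the jump is fully emitted I flip the slot to parked. Symmetrically, a parked slot may be reactivated only when its stored resume value equals the current $G$ (intervening {\sc Pick}s leave $G$ unchanged); this is precisely the condition that the resuming task's guessed resume state matches the global state left behind by the previously active task, mirroring the interplay of {\sc Swap}/{\sc Resume} and their transducers $\cT_s,\cT_r$. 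Termination (a final jump $(\vec{a}_2,\bot)$ produced via $\cT_t$, or a last swap) writes $\vec{a}_2$ into $G$ and frees the slot. Because \cref{coro:succinct-task-dcl} already bounds the number of preserved jump letters by $K$, every slot realises a task trace with at most $K$ context switches, so no separate context counter is needed; finally, a $T_\varepsilon$ transition to $1^{M'}$ is enabled whenever $G=\vec{a}_f$ or the active slot's internal global state equals $\vec{a}_f$.

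For correctness I would prove both directions. In the forward direction, any $N$-thread-pooled, $K$-context-switch-bounded run of $\cD$ reaching $\vec{a}_f$ is replayed step for step: each task's observed behaviour is an exact trace of its task-$\PDA$, hence (since a word is a subword of itself) a word of the corresponding $\dsNFA$, and the scheduler's choices dictate a legal slot interleaving whose global-state transfers satisfy the synchronisation checks. In the backward direction, an accepting run of $\cE(\cV)$ assigns to each slot a word of its $\dsNFA$, i.e.\ a subword of a genuine task-$\PDA$ trace that may spawn a \emph{superset} of tasks; lifting it yields a real run of $\cD$ in which the surplus tasks simply remain unstarted in the buffer, which does not change which global states are visited, so $\vec{a}_f$ is reached. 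The soundness of this last step---that discarding spawns preserves reachability of a global state---together with the verification that the online $G$-synchronisation captures exactly the consistent interleavings is where the real work lies; the transducer engineering, though tedious, is routine. Composing this reduction with \cref{thm:cover_tcvass} then gives the $\EXPSPACE$ upper bound of \cref{th:complexity-thread-pooling}.
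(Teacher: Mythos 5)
Your proposal is correct and follows essentially the same route as the paper: counters indexed by stack symbols for the unstarted task buffer, a control string segmented into a global-state prefix plus $N$ slots each holding a task-$\dsNFA$ state with an active/parked flag and resume field, spawns as increments and picks as decrements, synchronisation of the shared state through the emitted jump letters, and soundness of losing spawns for the backward direction. The only point you gloss over that the paper treats as a separate lemma is the initialization of the $N$ segment separators starting from $0^{M'}$ when $N$ is given in binary, which needs an iterated-doubling construction inside the transducer rather than a direct enumeration.
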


The proof of the theorem requires us to simulate up to $N$ different
threads of the $\DCBPR$. To this end we compute $\dsNFA$s for each
type of task these threads could be executing, using \cref{coro:succinct-task-dcl}. The simulation then needs to keep track of up to $N$ different $\dsNFA$ states at the same time. Thus, we construct $\cV$ in such a way that its state is comprised of exponentially many segments (since $N$ was given in binary) each storing a $\dsNFA$ state, separated by special symbols. Initializing this segmentation is formalized in the following auxiliary result.

\begin{lemma}\label{lem:dsNFA-create-separators}
  Given numbers $M,M'$ in binary and $m$ in unary such that $M'=M \cdot M''$,
  there exists
  a $\dsNFA$ $\cB_{m,M,M'}$ over the empty alphabet,  polynomial in
  the size of $M'$ such that it can reach a
  state $s$ of the form $s = \$^m(0^{M} \#)^{M''}$ from an initial state
  $0^{m+(M+1)M''}$. Moreover, $s$ is the unique
  state in the set of reachable states of $\cB_{m,M,M'}$ which ends
  with the symbol $\#$. 
\end{lemma}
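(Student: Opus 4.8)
The plan is to regard the length-$L$ state of the doubly succinct NFA $\cB_{m,M,M'}$, where $L = m + (M+1)M'' = m + M' + M''$, as the tape of a deterministic space-bounded machine, and to realize one step of that machine by a single application of the unique transducer $T_\varepsilon$ (the input alphabet is empty, so $\Sigma_\varepsilon=\{\varepsilon\}$ and $T_\varepsilon$ is the only transducer). This is the same ``state-as-a-tape'' viewpoint already used in the proof of Lemma~\ref{bounded-stack-dcl}: I would work over an enlarged transducer alphabet $\Delta \supseteq \{0,1,\$,\#\}$ containing a constant number of auxiliary marker symbols for scratch data, so that each pass of $T_\varepsilon$ reads the whole tape left to right and outputs the successor configuration, with a phase indicator stored on the tape so that the single transducer can branch between the phases below. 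Since the simulated procedure is deterministic, the relation computed by $T_\varepsilon$ is a partial function, and hence the states reachable from the initial state $0^L$ form a single chain $0^L = s_0 \to s_1 \to \dots \to s_\ell$; this is what will make the uniqueness claim easy.

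First I would write the prefix: since $m$ is given in unary, one pass replaces the first $m$ symbols $0$ by $\$$ using a transducer that counts to $m$ in its control, producing $\$^m 0^{(M+1)M''}$. The main work, and the main obstacle, is then to place a $\#$ at every $(M+1)$-th cell of the remaining suffix, i.e.\ at exactly the positions $p \ge m$ with $p \equiv m-1 \pmod{M+1}$; this is delicate because $M$ is given in binary while $T_\varepsilon$ has only polynomially many control states and therefore cannot count up to $M$ in its control. I would resolve this by keeping a $\lceil\log(M+1)\rceil$-bit binary counter on the tape itself, inside a gadget of width $O(\log M)$ that travels rightward through the suffix with the working head. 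Starting the gadget at position $m$ with counter value $0$, the machine writes $0$ and increments the counter while the counter is below $M$, and writes $\#$ and resets the counter to $0$ once the counter equals $M$; in either case it then shifts the gadget one cell to the right, leaving the just-written final symbol behind. Both the binary increment and the comparison against the hard-coded binary constant $M$, as well as the one-cell shift of the gadget, are carried out by $O(\log M)$ local machine steps; each such step modifies only $O(1)$ cells and is realized by one pass of $T_\varepsilon$, which, inspecting only a bounded window and having $M$ hard-coded, is of size polynomial in $m+\log M$. Since $\lceil\log L\rceil$ and the full transducer are then polynomial in $m+\log M'$, the resulting $\dsNFA$ has polynomial size, as required.

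Finally, for uniqueness I would use determinism: the reachable states are exactly $s_0,\dots,s_\ell$, where $s_\ell = s$ is the halting configuration. The rightmost cell, at position $L-1$, is the closing $\#$ of the last block; the travelling gadget reaches and finalizes it only in the very last step, and until then that cell holds a $0$ or a gadget symbol, never $\#$. Consequently $\#$ appears at the last position in no reachable state except $s$, and since the machine halts at $s$ (so $T_\varepsilon$ has no move from $s$), $s$ is indeed the unique reachable state ending in $\#$. Degenerate cases (e.g.\ $M=0$, where the target is $\$^m\#^{M''}$) can be written out in a single pass and handled directly.
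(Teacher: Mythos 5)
Your proposal is correct in outline but reaches the target string by a genuinely different mechanism than the paper. The paper first manufactures a single block $0^M\#$ by iterated doubling (building $1^{2^{k_i}}$ for each set bit $k_i$ of $M$ and concatenating), and then replicates that block $M''$ times under the control of a second counter; uniqueness is argued, much as you do, from the fact that the last tape cell is only turned into $\#$ at the very last step. Your single left-to-right sweep with a travelling $\lceil\log(M+1)\rceil$-bit position counter compared against the hard-coded constant $M$ is more uniform: it needs neither the doubling gadget, nor the binary decomposition of $M$, nor a separate counter for $M''$, and your determinism argument (the reachable states form a single chain) is a cleaner route to the uniqueness claim.

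One concrete detail would fail as literally written. You place the counter gadget strictly to the right of the frontier (``leaving the just-written final symbol behind''), so once the frontier enters the last $\lceil\log(M+1)\rceil$ cells of the tape the gadget's scratch cells no longer exist; relatedly, since you keep no block counter, you never say how the process detects that the current block is the last one and terminates. Both points are fixable --- for instance, once the travelling counter reaches the hard-coded threshold $M+1-\lceil\log(M+1)\rceil$ the remaining $O(\log M)$ cells of the block can be finalized by counting in the finite control, and the transducer, which scans the entire state on every pass, can detect the right end of the tape so as to halt after writing the final $\#$ --- but you should say so explicitly. You should also state that the gadget's scratch symbols are decorated copies distinct from $\#$, since your uniqueness argument relies on $\#$ never occupying the last cell before the final step.
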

\begin{proof} 
Let us consider for simplicity the case when $m=0$.
We can think of the state of a $\dsNFA$ $\cB$ as the
 tape of a Turing machine $\cM$. While a transducer only reads its
 tape from left to right once, a Turing machine can read a particular
 cell and move either left or right. However, the transducer can
 simulate two-way movement as follows: \\ Let $\cM$ operate over
 alphabet $\Sigma$, then $\cB$ operates over alphabet
 $\Sigma \cup \tilde{\Sigma}$ where $\tilde{\Sigma}$ is a disjoint copy
 of $\Sigma$. The disjoint copy $\tilde{\Sigma}$ is used by $\cB$ to
 keep track of the position of the head. For example, if the tape
 contents of $\cM$ are $w_1abw_2$ for $w_1,w_2 \in \Sigma^*$ and
 $a,b \in \Sigma$ with the tape head on $b$, then the tape contents
 of $\cB$ are $w_1 a \tilde{b}w_2$. When $\cM$ executes a transition
 where $b$ is replaced by $c$ and the head moves left, $\cB$
 guesses, at the time of reading $a$, that the tape head is on the
 next symbol and thus replaces $a$ by $\tilde{a}$ and then replaces
 $\tilde{b}$ by $c$. (In case the transition simulated is one where
 the head is moved right, this guess is not required.) Thus in the
 rest of the proof, we will describe the actions of the required
 $\dsNFA$ as if it were a Turing machine.

  Using a constant number $n_1$ of states, one can implement doubling, i.e. 
   an input of the form $1^n0^{n'}$ can be converted to $1^{2n}0^{n'-n}$.
   The machine turns the first $1$ into a decorated copy $\overline{1}$, then moves to
   the right till it finds the first 0, converting it into a $\overline{1}$. It then moves
   all the way back to the beginning of the tape, making sure that it encounters a $1$. After $i$ such moves,
   the state is of the form $\overline{1}^i1^{n-i}\overline{1}^i0^{n'-i}$. Finally we reach a point
   when no more $1$s are present between the two blocks of $\overline{1}$s i.e. the state is 
   $\overline{1}^{2n}0^{n'-n}$. We now convert all $\overline{1}$s to $1$s.

   Let $N=(M+1)M''$. We initialize by converting $0^{N}$ to $10^{N-1}$. We then implement a counter
   which repeatedly implements the doubling module. In order to create a string $1^{2^k}0^{N-2^k}$,
   we need a counter which uses $k+1$ bits. Thus the counter can be implemented using $n_2$ many 
   states where $n_2$ is a number that is linear in $k$, by first creating a string $u=10^k$ and comparing
   the value of the counter with $u$ after each increment. For an arbitrary number 
   $M=\sum_{i=0}^\ell 2^{k_i}$,
   we can separately implement the above module for each power $2^{k_i}$. This takes roughly $n_2^2$
   many states. We can now create the string $0^{M} \#0^{M'-M-1}$. 

   As the last step to obtain the required string $(0^M\#)^{M''}$, we create another counter which counts up to $M''$ using $n_3$ states, where $n_3$ is polynomial in the
   bit size of $M''$. Using this, we can repeat the above process $M''$ times to get the string $(0^M\#)^{M''}$. We have used
   a total of $n_1+n_2^2+n_3$ states which is polynomial in the bit size of $M'$. Note that the last symbol remains a $0$
   until the very end when it is changed into a $\#$. Hence we also satisfy the condition that the target
   state is the only reachable state with a $\#$ as the last symbol. 

   Let us now consider the case that $m>0$. Then, the $\dsNFA$ can simply count up to
   $m$ using the states of the transducers since $m$ is given in
   unary. It converts the string $0^{m+(M+1)M''}$ to $\$^m0^{
   (M+1)M''}$ first and then runs the above procedure on the suffix $0^{(M+1)M''}$.
\end{proof}

\begin{proof}[Proof of \Cref{dcps-to-tcvass}]
  Application of Corollary~\ref{coro:succinct-task-dcl} to the
  initialised succinct task-$\PDA$
  $\cC_{(\vec{a},\gamma)}$ 
  gives us a $\dsNFA$ 
  $\cB_{(\vec{a},\gamma)}=((T_{\sigma,\gamma})_{\sigma \in
  \Sigma_{\varepsilon}},\Sigma,\Delta,w_0(\vec{a}),M)$, where
  each $T_{\sigma,\gamma}$ has state set $Q_{\sigma,\gamma}$. We call $\cB_{(\vec{a},\gamma)}$ a task-$\dsNFA$. 
   Note that for initialised succinct task-$\PDA$
  $\cC_{(\vec{a}',\gamma)}$, the corresponding $\cB_{(
  \vec{a}',\gamma)}$ only differs from $\cB_{(\vec{a},\gamma)}$ in its initial prefix $w_0(
  \vec{a}')$. The tuple $
  (\vec{a},\gamma)$ is called the \emph{type} of a task. 
   Since we need a counter for each stack symbol (to track the unstarted tasks), we define $I = \{1, \ldots, d\}$, where $d = |\Gamma|$. We also write $\Gamma=\{ \gamma_1,\gamma_2,\ldots,\gamma_d\}$.

  The construction of $\cV$ essentially involves running $N$ different 
  task-$\dsNFA$s in parallel while keeping track of the tasks using its counters. 
  The local information of the $N$ different threads are stored in the
  global state (henceforth also called the \emph{tape}) of $\cV$,  with separator
  symbols between consecutive threads. We call the space between two
  consecutive separators a \emph{segment}. This means that the tape
  contents are always of the form
  $u\Box_1w_1\Box_2\ldots\Box_Nw_N\Box_{N+1}$
  where $u$ is a string of length $m$ that is used to store the global
  state of the $\DCBPR$ at schedule points to help coordinate the
  communication between the $N$ different threads, the $\Box_i$ are
  \emph{separator} symbols which indicate whether a segment is
  currently
  \emph{occupied} by a thread (in which case the string $w_i$
  contains the configuration of the thread) or \emph{unoccupied} (in
  which case the segment is filled with $0$s).
  We briefly outline the salient features of $\cV=((T'_i)_{i \in \hatie},\Delta',M')$ below: 
\begin{itemize}
  \item $\Delta'=\Delta \cup \{ \$_1,\#_1,\dagger_1,\tilde\dagger \} \cup \Gamma
  \cup \{0,1,\bot\}$, where
  \begin{itemize}
    \item  $\#_1,\dagger_1$ are separators which are used to
    indicate occupied and unoccupied segments respectively.
    \item when
    the $\DCBPR$ currently has an active thread then $u = \$_1^m$;
    otherwise it is at a schedule point and $u=\vec{a}$ for some $
    \vec{a} \in \{0,1
    \}^m$,
    \item the symbols in $\Gamma \cup \{0,1\}$ are used to mark each
    thread with its state and top of stack in order to facilitate
    context switches,
    \item $\tilde\dagger$ is a separator used to mark a segment being initialised,
    \item $\bot$ is used to mark a thread meant to be
    terminated.
  \end{itemize}
  \item $M'=m+1+N(M+2m+3)$,
  \item The transducers $T'_i$ consist of different modules as outlined below.
  \begin{enumerate}
    \item For each $i \in I$, $T'_i$ has a single module which is used when a step performed
    by a particular thread spawns a new task $\gamma_i$. 
    \item  For each $\bari \in \barI$, $T'_{\bari}$ has a single module which used to 
    initialize a task of type $\gamma_i$.
    \item $T'_{\varepsilon}$ has the following modules:
    \begin{enumerate}
      \item\label{mod:init} $\init$, which creates the initial set of separators between segments,
      \item\label{mod:start} $\start$, which helps initialize a task-$\dsNFA$ correctly,
      \item\label{mod:inter} $\inter$, which interrupts a thread,
      \item\label{mod:resump} $\resump$, which resumes a thread,
      \item\label{mod:term} $\term$, which handles termination of threads,
      \item\label{mod:clean} $\clean$, which cleans up a segment after termination of a thread, 
      \item\label{mod:final} $\final$, which detects when the $\DCBPR$ has reached
      global state $\vec{a}_f$, and
      \item\label{mod:emp} $\emp$, which handles transitions on input $\varepsilon$.
    \end{enumerate}
  \end{enumerate}
\end{itemize}
The different modules in $T'_{\varepsilon}$ are nondeterministically chosen to be run. In case 
   the running of a module would cause a problem, this is prevented by appropriate flags in the global
   state on which
   no transitions are enabled in that particular module. We explain the working of $\cV'$ in detail below.

  \noindent\textbf{Initialization:}
  $\cV$ initially moves to a state of the form $\$_1^m\dagger_1(0^
  {M+2m+2}\dagger_1)^N$ where $\#_1,\$_1,\dagger_1$ are
   symbols not present in $\Delta$
  (this is feasible by Lemma \ref{lem:dsNFA-create-separators}). We
  then simulate the $\dsNFA$ $\cB_{(\vec{a}_0,\gamma_0)}$
  in the first segment and the state is of the form $\$_1^m\#_1
  \vec{a}_0\gamma_00^m10^M \dagger_1(0^{M+2m+2}\dagger_1)^{N-1}$. We
  explain the contents of a segment next.
 
  The first $2m+1$ letters in a segment are used to store context switch
  information: $m$
  letters for the global variables followed by the top of stack
  symbol, followed by $m$ more letters. The global state of the thread
  can be inferred from this initial segment. The $(2m+2)$nd letter is a
   $1$ or a $0$ based on whether the
  segment contains the active thread or not. The final $M$
  many
  symbols are used to store the state of
  the $\dsNFA$. 
Immediately after initialization, only
  the first segment contains a thread, which is active and starts
  running. Thus the separator symbol before the first segment at this
  point of time is $\#$. In contrast, the other (unoccupied) segments
  are preceded by a $\dagger$. These operations are handled by module $\init$ from (\ref{mod:init}).

  \noindent\textbf{Spawning:} The spawning of a new thread 
  $\gamma_i$ by a thread of type $(\vec{a},\gamma)$ corresponds to a 
  run of a transducer $T_{\gamma_i,\gamma}$ in a task-$\dsNFA$
  and is
   simulated by $T'_{i}$ in $\cV$. Steps in the $\DCBPR$ corresponding
   where the active thread does not spawn anything are handled by module $\emp$ from (\ref{mod:emp}).
   Having spawned some number of threads,  the active thread is 
   switched out for another.

  \noindent\textbf{Context Switching:}
  In order to understand the simulation of the context switch
  behaviour, recall that the $\dsNFA$ $\cB_{\vec{a},\gamma}$ which
  represents the succinct alphabet preserving downclosure has 
  input alphabet $\Sigma=\Gamma \cup \tilde{\Gamma} \cup 
  \{0,1,\bot\}$. Of
  these,
  $\Gamma$ is used to indicate spawned tasks, the initial state is
  indicated by a string
  $\vec{a}_0 \in \set{0,1
  }^m$,  context switches are
  indicated by strings of the form $\vec{a}_2\tilde{\gamma}_2
  \vec{a}_3$ where $
  \vec{a}_2,\vec{a}_3 \in \set{0,1}^m$ and $\tilde{\gamma} \in 
  \tilde{\Gamma}$; and termination is indicated by strings of the form
  $\vec{a}_2 \bot$. 

The module $\inter$ in (\ref{mod:inter}) contains the transducers $T_
{\sigma,\gamma}$
for $\sigma \in \tilde{\Gamma} \cup \set{0,1}$, for each $\gamma
\in \Gamma$. In order to simulate a context switch, $T'_{\varepsilon}$ can nondeterministically
choose to run the module $\inter$. Suppose the context switch is
represented by the string
$\vec{a}_2 \tilde\gamma_2 \vec{a}_3$. Since this string is output by the transducers,
it can be copied onto the first $2m+1$ cells of the segment $s$ containing the active thread.
 On running $\inter$, the global state is stored by rewriting $\$_1^m$
 to $\vec{a}_2$, indicating
 that the $\DCBPR$ is moving
 to a schedule
   point. The segment of the active thread has its $(2m+2)$nd symbol rewritten to $0$. At this point, all segments
  have $0$ as their $(2m+2)$nd symbol, indicating that there is no
  active thread. 

 Resumption of a different thread at the schedule point is executed as follows by
  the module $\resump$ in (\ref{mod:resump}). Two options are available at this point. The first option is to 
  wake up
  some inactive thread. The state $\vec{a}_2$ stored in the first $m$ cells
   is used to wake up a thread in segment $s'$ by comparing symbol by symbol the global state
   stored in cells $m+2$ to $2m+2$ of $s'$. Segment $s'$ is activated by
   writing $1$ on the $(2m+2)$nd symbol in the segment and the first $m$ cells are 
   again replaced with $\$_1^m$.
  
The second option is to invoke a $T'_{\bari}$ transducer, starting a new
  task in one of the unoccupied segments $s$. The transducer rewrites the separator symbol in front 
  of the segment $s$ to $\tilde\dagger$ and writes $\tilde\gamma_i$ on the $(m+1)$st and $1$ on the $(2m+2)$nd cells of 
  the segment. The $\start$ module in (\ref{mod:start}) then copies the global state $\vec{a}_2$
  present on the first $m$ cells of the tape onto the first $m$ 
  cells of segment $s$ and writes the corresponding initial prefix $w_0(\vec{a}_2)$ starting from the
  $(2m+3)$rd cell. At the end of this operation, the separating symbol $\tilde\dagger$ in front of
  segment $s$ is rewritten to $\#_1$, with the active thread contained in segment $s$ at this point.

  \noindent\textbf{Termination:}
  The termination of a thread is handled by module $\term$ in (\ref{mod:term}),
  which contains transducers $T_{\sigma,\gamma}$
   for  $\sigma \in \set{0,1,\bot}$. 
  Whenever a string $\vec{a}_3 \bot$ is output by the transducers, this string is copied onto
  the first $m+1$ cells of the segment containing the active thread.
  This signals that the task-$\dsNFA$ occupying this segment has terminated
  and now the $\clean$ module
  in (\ref{mod:clean}) is the only module
  allowed to run. This $\clean$ module replaces any segment containing $\bot$ with a string of $0$s and also
  changes the symbol prior to the segment from $\#_1$ to $\dagger_1$
  to indicate that the segment is now unoccupied. \\
  \noindent\textbf{Detecting $\vec{a}_f$ reachability of the
  $\DCBPR$:}
  We note that we can modify the $\DCBPR$ so that if there is a run reaching
  $\vec{a}_f$, then there is one where $\vec{a}_f$ occurs at a schedule point.
  To this end we add additional resumption rules whereby a context
  switch is allowed with any top of stack $\gamma \in \Gamma$ when the global
  state is $\vec{a}_f$.
  Thus it suffices to check if $\vec{a}_f$ occurs in the first $m$ cells,
  at which point $\cV$ moves to its final state. This is implemented by the module $\final$ 
   in (\ref{mod:final}).
\end{proof}

We can now prove Theorem~\ref{th:complexity-thread-pooling}.
Given an input $\DCBPR$ $\cD$, its global state $\vec{a}$ and numbers
$K,N$ in binary, we construct task-$\dsNFA$s
 $\cB_{\vec{a}',\gamma}$ to represent
the alphabet preserving downclosure $\dcl[\Theta, m+(2m+1)K+(m+1)]{
\langof{\cP_
{\vec{a}',\gamma}(\cD)}}$ of each task-$\PDA$ 
$\cC_{\vec{a}',\gamma}$ using \cref{coro:succinct-task-dcl}.
Note that the parameter $m+(2m+1)K+(m+1)$ occurs because in the language of an initialized task $\PDA$, the initial assignment to the global variables is represented by a string of length $m$, each context switch is represented by a string of length $2m+1$, and the termination is represented by a string of length $m+1$.
There are only polynomially many transducers corresponding to different $\dsNFA$ which need to be stored, one set for each
$\gamma$, since the assignment $\vec{a}$ to the global Boolean variables only affects the initial prefix. 
These transducers are then used in Theorem \ref{dcps-to-tcvass} to reduce the thread-pooled safety problem to coverability of $\TCVASS$.
Finally, we use Theorem \ref{thm:cover_tcvass} which gives us an $\EXPSPACE$ procedure for coverability of $\TCVASS$.

\section{Context-bounded Reachability is $\THREEEXPSPACE$-hard} %
\label{sec:context_bound_as_binary_input}

To put our results on thread-pooling in perspective, we show that we obtain $\THREEEXPSPACE$-completeness if we remove all restrictions on the thread-pool, 
as formalized by Theorem~\ref{th:complexity-no-thread-pooling}. 
We already mentioned that the upper bound of this theorem follows directly from \citet{AtigBQ2009}, and 
that for the lower bound we extend the $\TWOEXPSPACE$-hardness result of \citet{BaumannMajumdarThinniyamZetzsche2020a} by using succinctly encoded 
models in the intermediate proof steps.

The previous hardness result uses a more restrictive model called $\DCPS$. Compared to $\DCBPR$ it has no local Boolean variables, and uses a set of global states instead of global Boolean variables. This can be easily simulated by $\DCBPR$ with (i)~an empty set of local variables and (ii)~using global variables to store the global state. Then we ensure that there is always exactly one global variable valued $1$, corresponding to the current global state. Due to this simulation it suffices to only consider the model of $\DCPS$ for the lower bound, which we will do for the remainder of this section.

To also give a bit more detail on the upper bound, going from $\DCPS$ with unary context-switch bound $k$ to $\DCBPR$ with binary $k$ incurs two exponential blow-ups, one for Boolean variables instead of states and one for the encoding of $k$. However, both these blow-ups occur multiplicatively with respect to one another, meaning the combined blow-up is still only singly exponential. Therefore the $\TWOEXPSPACE$-procedure by \citet{AtigBQ2009} becomes a $\THREEEXPSPACE$-procedure in our setting.

We proceed by proving the $\THREEEXPSPACE$ lower bound. The full proof is technical, and provided in the full version. %
We sketch the main challenges in this section.

\subsubsection*{Previous result}

Let us begin by recalling some details in the proof of the $\TWOEXPSPACE$ lower bound of \citet{BaumannMajumdarThinniyamZetzsche2020a}.
The hardness proof goes through a series of reductions; 
the main step is a reduction from the coverability problem for \emph{transducer-defined Petri nets} ($\TDPN$) to context-bounded reachability
in $\DCPS$ for a unary context bound.
 
Recall that Petri nets are essentially a variant of $\VASS$ with just one state. 
$\TDPN$ are succinct representations of Petri nets.
Similarly to $\TCVASS$, the (exponentially many) counters of a $\TDPN$ are encoded by strings of polynomial length over an alphabet, and transitions 
are described by three transducers that encode the transitions between counters. 
Three transducers are needed because $\TDPN$ only consider three types of transitions: 
\emph{Fork} transitions that subtract $1$ from one counter and add $1$ to each of two counters, 
\emph{join} transitions that subtract $1$ from two counters and add $1$ to one counter, and 
\emph{move} transitions that add $1$ to and subtract $1$ from a single counter each.
It is known that any Petri net can be brought into a ``normal form'' in polynomial time, where each transition has one of these forms.
Like for $\VASS$, a configuration of a Petri net is a map from its counters to $\N$.
In lieu of more than one state, the coverability problem for Petri nets
takes a counter as input and asks whether a configuration with a nonzero value for that counter is reachable. 
Petri nets have an $\EXPSPACE$-complete coverability problem. 
\citet{BaumannMajumdarThinniyamZetzsche2020a} showed $\TWOEXPSPACE$-hardness for the coverability problem for $\TDPN$. 

The succinctness in $\TDPN$ is different from that in $\TCVASS$: in a $\TCVASS$, the control state is encoded succinctly but the dimension (the number of counters)
is maintained explicitly. 
In contrast, a $\TDPN$ has a dimension that is exponentially larger than the description.  

The main step of the lower bound reduces the coverability problem for $\TDPN$ to context-bounded reachability for $\DCPS$ 
by representing the explicit Petri net configurations as tasks. (Since there is no thread pooling, each task runs in its own thread,
simultaneously with all other spawned and partially executed tasks.)
Each counter $p$ of the $\TDPN$, which is encoded as a string of length $n$, is assigned a number $u(p)$ by such a configuration. 
To encode this configuration, the $\DCPS$ maintains exactly $u(p)$ many tasks that each have $p$ as their stack content. 
Transitions are then simulated by (i)~emptying stacks of tasks whose corresponding counters were decremented, 
and (ii)~spawning new tasks whose stacks are built up with corresponding counters that are incremented. 
During this, $n$ many new tasks are spawned to hold the information of the string encoding the corresponding counter. 
These $2n$ or $3n$ tasks in the task buffer are then used to verify that the corresponding transition existed, 
with the three transducers being kept as part of the global states of the $\DCPS$. 
Coverability for Petri nets then corresponds to reachability of a $\DCPS$ configuration that includes a task 
corresponding to a specific counter $p$. 
The $\DCPS$ can just check for stack content $p$ using $n$ global states, and 
then move to a specific global state $g$, completing the reduction to context-bounded reachability.

\subsubsection*{Challenges}

To lift the previous result to $\THREEEXPSPACE$ we introduce the model of \emph{succinct} $\TDPN$ ($\sTDPN$), 
which use exponentially long strings for their encoding, as opposed to polynomially long ones. 
In our reduction to $\DCPS$, we represent a configuration of (the underlying Petri net of) a $\sTDPN$ in the same way as before, 
the only difference being that stack contents are now of exponential length. 
This causes some challenges not present for the previous proof: 
Firstly, we cannot store the information of a single string of length $m$ via $m$ tasks in the task buffer. 
To be able to unambiguously reconstruct the string, each such task would need to remember the position of its letter, 
$1$ to $m$. 
Since $m$ is now exponential, and each such task just carries a single symbol, this would require an exponentially large alphabet. 
Secondly, we now also need to create a stack content of exponential length to initialize the whole simulation. 
For polynomial length $n$ this could just be done using $n$ many global states. 
Finally, the check for a specific counter $p$ at the end is also more problematic, since it again needs to check an exponentially
long string letter-by-letter.

However, since $K$ is provided in binary, in contrast to the previous proof, we can now make exponentially many context switches per task. 
This allows our proof to overcome the above challenges as follows. 

\subsubsection*{Verifying Transitions}

Since we have exponentially many context switches, our simulation of a single transition
can switch back and forth between the two or three tasks involved in a single transition of the $\sTDPN$, 
removing or adding one stack symbol each time. 
For each pair or triple of such symbols we advance the state of the transducer to verify the existence of this transition. 
Because the verification now happens at the same time as the simulation, we do not need to store additional information for it in the task buffer. 
The problem here is that the stacks that are built-up in this way are in \emph{reverse} order 
compared to the stacks that were emptied.
This presents another challenge: in order to continue the simulation, we have to reverse an exponentially long stack.

\subsubsection*{Reversing Stacks}

Another use case of exponentially many context switches is the transfer of an exponential length stack content from one task to another. 
This is done iteratively by popping a symbol from the first task, remembering it in the global state, 
switching to the second task, and then pushing the symbol before switching back. 
Doing so reverses the order of symbols on the stack, which is exactly what we need to remedy the issue mentioned above. 
If our stacks have exponential length $m$ then this process requires $m$ iterations and therefore $m$ context switches. 
With the simulation of a transition acting on the task beforehand it results in $2m$ context switches in total per task, 
which is still exponential in the binary context bound $K$, and therefore can be performed in the reduction.

\subsubsection*{Initialization and Finalization}

Starting the simulation was ``easy'' for $\TDPN$: we could explicitly add the first stack of polynomial size using the global state.
This is no longer possible as the stack is now exponential.
 
To create the exact exponentially long encoding of a specific counter $p_0$ at the start of the simulation, we do not encode it in the global states. 
Instead we simulate a context free grammar on the initial thread's stack. 
One can construct such a grammar of polynomial size that produces just a single exponentially long word, 
which in this case is $p_0$ (the grammar is a straight line program that performs the ``iterated doubling'' trick to accept exactly one word of
exponential size). 
Due to the grammar's size, we only need to add polynomially many stack symbols to facilitate its simulation.

For the final part of the simulation, where we check for a specific counter $p$, we could also try and utilize a context free grammar. However, due to some technical details in our definitions for $\sTDPN$, this is not even necessary: We only allow a certain shape for the $p$ used as input to coverability, and this shape can be easily described using a regular expression. Therefore we can check for the correct shape by using a finite automaton encoded in the global states of the $\DCPS$.

\section{Discussion}
\label{sec:discussion}

We have characterized the complexity of safety verification in the presence of thread pooling.
Surprisingly, thread pooling \emph{reduces} the complexity of verification by a double exponential
amount, even when all parameters are in binary.
Along the way, we have introduced succinct representations and manipulations of succinct machines
that may be of independent interest.

While we have focused on safety verification, we can consider liveness properties as well.
The \emph{thread-pooled context-bounded termination} problem asks, given a $\DCBPR$ $\cD$ and two numbers $K$ and $N$
in binary, does every $N$-thread-pooled, $K$-context switch bounded run terminate in a finite number of steps? 
Our constructions imply that checking termination is $\EXPSPACE$-complete.
The upper bound proceeds as with reachability, but we check termination rather than coverability for the constructed $\VASS$.
The lower bound follows from a ``standard trick'' of reducing reachability of a global state to non-termination: the program initially guesses the
number of steps to reach a global state (by spawning that many threads) and entering an infinite loop iff the global state is reached.
When $N = \infty$, the problem becomes $\THREEEXPSPACE$-complete, following the same ideas as for safety verification.

An infinite run is \emph{fair} if, intuitively, the scheduler eventually picks an instance of every spawned task
and any thread that can be executed is eventually executed. 
The thread-pooled, context-bounded \emph{fair termination} problem 
asks if there is a fair non-terminating run.
\citet{BaumannMajumdarThinniyamZetzsche2021a} study the problem in the case of explicitly specified states (no Boolean variables) without thread pooling and reduce it to $\VASS$ reachability. Here, the reduction can be performed in elementary time.
By turning Boolean variable assignments into explicit states and keeping a counter of threads in the thread pool (in the global state), one can easily turn a $\DCBPR$ into an exponentially large $\DCPS$.
Therefore, fair termination of $\DCBPR$ also admits an elementary-time reduction to $\VASS$ reachability.
On the other hand, $\VASS$ reachability can be reduced in polynomial time to fair non-termination of the special case ($K=0$, $N=1$)
of asynchronous programs \cite{GantyM12}.
Together with complexity results on $\VASS$-reachability \cite{LerouxSchmitz2019, CzerwinskiPNAckermann2021, LerouxPNAckermann2021}, we conclude the problem is Ackermann-complete with or without thread pooling.

In conclusion, we find it surprising that thread pooling results in a significant reduction in the theoretical complexity of safety verification.
Whether this observation has practical implications remains to be explored.

\begin{acks}                            %
This research was sponsored in part by
the Deutsche Forschungsgemeinschaft project 389792660 TRR 248--CPEC
and by the European Research Council under the
Grant Agreement 610150 (http://www.impact-erc.eu/) (ERC Synergy Grant ImPACT).
\end{acks}

\newpage

\label{beforebibliography}
\newoutputstream{pages}
\openoutputfile{main.pages.ctr}{pages}
\addtostream{pages}{\getpagerefnumber{beforebibliography}}
\closeoutputstream{pages}
\bibliography{bibliography}

\appendix

\section{Proof Details for Succinct PDA and Succinct Downward Closures} %
\label{sec:appendix_succinct_pda}

This section contains additional details for various proofs found in Section~\ref{sec:bounded_task_buffer}.

\subsection{Proof Details for Lemma~\ref{bounded-stack-dcl}}
Let us see in detail how the construction from \cite{MajumdarTZ21} can be applied when we start with a succinct 
  $\PDA$ $\cC=((T_c)_{c \in \scrC},\Sigma,\Gamma,\Delta, w_0,w_f)$. 
  We assume that $\{ 0,1\} \subseteq \Delta$. 
  We shall construct in polynomial time, a succinct $\PDA$ $\cC'=((T'_c)_{c \in \scrC'},\Sigma,\Gamma',\Delta', w'_0,w'_f)$ 
  and a number $h$, whose bit representation is polynomial in the size of $\cC$, such that $\langof[h]{\cC'}=\dcl{\langof{\cC}}$, where
  \begin{itemize}
    \item $\Gamma'=\Gamma \cup \{ [,],\#,\dagger,\$\}$,
    \item $\Delta'=\Delta'' \times \Delta''$ where $\Delta''=\Delta \cup \{ \# \}$,
    \item $w'_0=(1w_0\# 0^{n''},0^{3+4|w_0|})$ and $w'_f=(1w_f\# 0^{ n''},0^{3+4|w_0|})$ where $n''=3|w_0|+1$,
  \end{itemize}
  and we describe the transducers of $\cC'$ below. We will also reason below why a length of $3+4|w_0|$ is necessary for the states.
  \begin{enumerate}[(i)]
    \item The idea is to consider the state of $\cC'$ as two separate tapes on which polynomial space Turing machine 
    computations can be simulated. Clearly a single transition of a fixed Turing machine can be simulated by 
    a nondeterministic guess of the transducer. The two tapes can alternatively be seen as a tape alphabet 
    which is a cross product $\Delta'' \times \Delta''$. The first tape is used to simulate the transitions of
    $\cC$---let us called these the ``type 1'' edges---while the second tape is used to simulate transitions from the new edges 
    introduced in Equation (\ref{newedges})---we call these the ``type 2'' edges.
    Note that type 1 edges also include remaining edges of the augmented automaton from \cite{MajumdarTZ21}, so those defined outside of Equation (\ref{newedges}) (the remaining edges were the ones where we dropped the input letters, replacing them with $\varepsilon$).
    \item With the state being a word $w$ over $\Delta'$, it starts with either $\begin{pmatrix} 1 \\ 0 \end{pmatrix} \in \Delta'' \times \Delta''$ or with $\begin{pmatrix} 0 \\ 1 \end{pmatrix}$. 
    The first case indicates that the first tape is \emph{active} while the second indicates that the second tape is active. 
    The first (resp.\ second) tape is active when $\cC'$ makes a type 1 (resp. type 2) transition. 
    The automaton $\cC'$ nondeterminisically chooses which type of transition to make at any given point. 
    In general, we write $\begin{pmatrix} w \\ w' \end{pmatrix}$ to indicate that the first tape contains $w$ and the second $w'$, as its contents.
    \item From a state $\begin{pmatrix} 1w_1\# 0^{ n''} \\ 0^{3+4|w_0|} \end{pmatrix}$ of $\cC'$, a type 1 edge 
    taking $\cC$ from $w_1$ to $w_2$ corresponds to changing the state of $\cC'$ to $\begin{pmatrix} 1w_2\# 0^{ n''} \\ 0^{3+4|w_0|} \end{pmatrix}$. 
    We ensure that the state is always of the form $\begin{pmatrix} a w \# 0^{ n''} \\ w' \end{pmatrix}$ where $a \in \{ 0,1\}$ in general, and in particular, when $a=1$ then $w' \in 0^*$. 
    
    \item Next we consider the simulation of type 2 edges. 
    Let $u=u_1u_2\ldots u_k$ where $u_i \in \Sigma$. Suppose $M_{w_1,w_2} \cap \Sigma^* u_i \Sigma^* \neq \emptyset$ for each $i$, i.e.\ edge $w_1 \xrightarrow{u \mid [w_1,w_2]} w_2$ exists. 
    Let $\cC'$ be in state $\begin{pmatrix} 1w_1\# 0^{ n''} \\ 0^{3+4|w_0|} \end{pmatrix}$. 
    $\cC'$ would like to execute a transition of $\cC$ which pushes the whole word $[w_1\#w_2]$ including the square brackets `$[$' and `$]$' to the stack. It executes the code shown in Algorithm \ref{algo:sim-type2-trans}:
    \begin{algorithm}[t]
      \SetKwBlock{Begin}{begin}{end}
      \SetKw{Break}{break}
      {\small
        Switch active tape from tape 1 to tape 2.\\
        Copy $w_1$ onto tape 2.\\
        Guess $w_2$ and write it down on tape 2, adjacent to $w_1$.\\
        \While{*}{
          Guess $a$\\
          \eIf {$M_{w_1,w_2} \cap \Sigma^* a \Sigma^* \neq \emptyset$}{
            Make a transition reading $a$ with no other changes
          }
          {ERROR}
        }
        Push $[w_1\#w_2]$ onto the stack.\\
        Replace $w_1$ by $w_2$ on tape 1.\\
        Replace all symbols on tape 2 by $0$ and the first symbol on tape 1 by $1$.
      }
      \caption{Simulation of type 2 edges}\label{algo:sim-type2-trans}
    \end{algorithm}
    
    Line 1 involves changing the first letter of the state from $\begin{pmatrix} 1 \\0 \end{pmatrix}$ to $\begin{pmatrix} 0 \\ 1 \end{pmatrix}$. The $w_1$ and $w_2$ written down in Lines 2 and 3 are separated by means of a symbol $\#$. The while condition in line 4 is a nondeterministic guess where, if $|u|=k$ then the loop is run $k$ times. The condition $M_{w_1,w_2} \cap \Sigma^* a \Sigma^* \neq \emptyset$ is checked by the following sequence of steps:
    \begin{enumerate}
      \item Push a symbol $\$$ onto the stack.
      \item Guess a loop on state $w_1$ one step of $\cC$ at a time. Verification that state $w_1$ has been reached again can be done since since $w_1$ has been stored on tape 2. 
      \item At the end of the loop, push a symbol $\dagger$ onto the stack.
      \item Guess a path from $w_1$ to $w_2$ and verify that $w_2$ has been reached.
      \item Pop the symbol $\dagger$ from the stack.
      \item Guess a loop on state $w_2$ and ensure that at the end, the top of the stack is $\$$ which is then popped.
    \end{enumerate}
    The pushing of $[w_1 \# w_2]$ onto the stack in Line 12 can also be done by reading $w_1$ and $w_2$ from tape 2. This completes one type 2 transition and thus we execute Line 13 which replaces the state $w_1$ on tape 1  by $w_2$. Finally in Line 14, we clean up tape 2 and reset the control to tape 1.\\
    Similar to the above procedure, we also execute a procedure for a transition where $\cC$ pops a whole word $[w_1\#w_2]$ simulating the dual automaton instead.
    
    This simulation of type 2 edges stores up to four strings of length $|w_0|$ each on the second tape. The first two are $w_1$ and $w_2$, an additional two may need to be stored while going through paths and cycles of $\cC$. In total this needs $4|w_0| + 3$ space for the four strings plus separator symbols.
  \end{enumerate}
  We note that in the case of a succinct $\PDA$, the number of states is exponential and hence the corresponding height bound $h$ which is $O(|Q'|^2)$ is also exponential (here $Q' = (\Delta')^{|w'_0|}$ is the set of states of $\cE(\cC')$). However, the bit length of $h$ is polynomial in the size of the input.

\subsection{Proof Details for Lemma~\ref{bounded-stack-nfa}}
\label{sec:appendix_bounded-stack_nfa}

Given a succinct $\PDA$ $\cC=((T_c)_{c \in \scrC},\Sigma,\Gamma,\Delta, w_0,w_f)$ with $|w_0|=n$ and $T_c=(Q_c,\Delta,\delta_c,q_{0,c},q_{f,c})$ for each $c \in \scrC$,
as well as a stack height bound $h$ in binary,
we construct a doubly succinct NFA $\cB=((T_a)_{a \in \Sigma_{\varepsilon}},\Sigma,\Delta',M)$.
In detail, $\cB$ is given as:
\begin{enumerate}
  \item $\Delta'=\Gamma \cup \Delta \cup \{ \$, \#,\dagger\}$ where $\#,\dagger$ are new symbols,
  \item $M=|w_0|+h+3$
  \item For each $a \in \Sigma$, $T_a=(Q_a, \Delta, \delta_a,q_{0,a},q_{f,a})$ is constructed as follows:
  \begin{itemize}
    \item $Q_a=\{ q_{0,a},q_{f,a}\} \cup \left(\hat{\Gamma}\times \bigcup_{c=(a,*),* \in \hat{\Gamma}} Q_c \times \{ 0,1,2\}\right)$
    \item $\delta_a$ has the following transitions
    \begin{enumerate}
      \item For each $v \in \hat{\Gamma}$, $q_{0,a} \xrightarrow{\$ | \$} (v,q_{0,a,v},0)$,
      \item For each $v \in \hat{\Gamma}$, if $(q \xrightarrow{b|c} q') \in \delta_{a,v}$ then $((v,q,0) \xrightarrow{b|c} (v,q',0)) \in \delta_a$,
      \item For each $v \in \hat{\Gamma}$, $(v,q,0) \xrightarrow{\#|\#} (v,q,1)$,
      \item For each $v \in \hat{\Gamma}$ and each $b \in \Delta$, $(v,q,1) \xrightarrow{b|b} (v,q,1)$,
      \item For each $v \in \Gamma$ and each $b \in \Delta$, $(v,q,1) \xrightarrow{\dagger | v} (v,q,2)$,\\
      $(v,q,2) \xrightarrow{b | \dagger} q_{f,a}$,\\
      $q_{f,a} \xrightarrow{b|b} q_{f,a}$,
      \item For each $v \in \overline{\Gamma}$ and each $b \in \Delta$, $(v,q,1) \xrightarrow{v | \dagger} (v,q,2)$,\\
      $(v,q,2) \xrightarrow{\dagger|0} q_{f,a}$,\\
      $q_{f,a} \xrightarrow{b|b} q_{f,a}$,
      \item $(\varepsilon,q,1) \xrightarrow{\dagger | \dagger} q_{f,a}$, and\\
      $q_{f,a} \xrightarrow{b|b} q_{f,a}$.
    \end{enumerate} 
  \end{itemize}
  \item $T_{\varepsilon}=(Q_{\varepsilon}, \Delta, \delta_{\varepsilon},q_{0,\varepsilon},q_{f,\varepsilon})$ is constructed as follows:
  \begin{itemize}
    \item $Q_{\varepsilon}=Q_{\init} \cup Q_{\move} \cup Q_{\final}$ where $Q_{\init}=\{ q_i \mid 0 \leq i \leq n+3\}$, $Q_{\move}=\{ q_{0,\varepsilon},q_{f,\varepsilon}\} \cup \left(\hat{\Gamma}\times \bigcup_{c=(\varepsilon,*),* \in \hat{\Gamma}} Q_c \times \{ 0,1,2\}\right)$ similar to the construction of $T_a$ for $ a \in \Sigma$, and $Q_{\final}=\{ q'_i \mid 0 \leq i \leq n + 3\}$.
    \item $\delta_{\varepsilon}$ contains:
    \begin{enumerate}
      \item All transitions between states in $Q_{\move}$ as defined in the case of $T_a$ for $ a \in \Sigma$, just with $\varepsilon$ instead of $a$,
      \item For states in $Q_{\init}$:
      \begin{enumerate}
        \item $q_{0,\varepsilon} \xrightarrow{\varepsilon | \varepsilon} q_0$,
        \item $q_0 \xrightarrow{0 | \$} q_1$,
        \item For each $i, 1 \leq i \leq N$, $q_i \xrightarrow{0|w_{0,i}} q_{i+1}$ where $w_0=w_{0,1}\ldots w_{0,n}$,
        \item $q_{n+1} \xrightarrow{0|\#}q_{n+2}$
        \item $q_{n+2} \xrightarrow{0|\dagger} q_{n+3}$,
        \item $q_{n+3} \xrightarrow{0|0} q_{n+3}$ and
        \item $q_{n+3} \xrightarrow{\varepsilon|\varepsilon} q_{f,\varepsilon}$.
      \end{enumerate}
      
      \item For states in $Q_{\final}$:
      \begin{enumerate}
        \item $q_{0,\varepsilon} \xrightarrow{\varepsilon | \varepsilon} q'_0$,
        \item $q'_0 \xrightarrow{\$|1} q'_1$
        \item For each $i, 1 \leq i \leq N$, $q'_i \xrightarrow{w_{f,i}|1} q'_{i+1}$ where $w_f=w_{f,1}\ldots w_{f,n}$,
        \item $q'_{n+1} \xrightarrow{\#|1}q'_{n+2}$
        \item For each $b \in \Delta$, $q'_{n+2} \xrightarrow{b|1} q'_{n+3}$,
        \item $q'_{n+3} \xrightarrow{\dagger|1} q_{f,\varepsilon}$.
      \end{enumerate}

    \end{enumerate}
  \end{itemize}
\end{enumerate}
The transitions in (4b) are used to initialize the $\NFA$ $\cB$, converting the state $0^M$ to state $\$w_0\#\dagger0^{h}$. The string between the symbols and $\$$ and $\#$ is used to represent the stack, which is empty in the beginning. \\

Let us see how $\cB$ simulates one step of $\cC$. 
The transition in (3a) ensures that the initialization has taken place by checking that $\$$ is the first letter. 
Using (3c) we guess a value for $v$ and start simulating $T_c$ on its state where $c=(a,v)$. 
The simulation is completed when the symbol $\#$ is reached and in (3c) we move to the next stage where we simulate the action on the stack. 
The rule (3d) is used to maintain the stack contents which are not at the top of stack and using (3e),(3f), and (3g) we guess the 
top of the stack in the case of $ v \in \Gamma, v \in \hat{\Gamma}$ and $v=\varepsilon$ respectively.
If the step of $\cC$ does not read an input letter, then the simulation is very similar, but uses the transitions in (4a) instead.

At any point between two steps of the simulation, we use the transitions in (4c) to guess that state $w_f$ has been reached and verify it letter by letter, converting any string of the form $\$w_f \# w' \dagger$ to the string $1^M$, which is the accepting state of $\cB$.

\section{Proof Details for Succinct Downward Closures of Tasks} %
\label{sec:appendix_succinct_dcl_tasks}

This section contains additional details for various proofs found in Section~\ref{sec:succinct_dcl_tasks}.

\subsection{Proof Details for Lemma~\ref{theta-downclosure-extraction}} %
\label{sec:appendix_theta-dcl-extract}

Let $\ell$ be the bit-length of $k$, let $\cB=((T_a)_{a \in \Sigma_{\varepsilon}},\Sigma,\Delta,M)$ and let $T_a=(Q_a, \Delta, \delta_a,q_{0,a},q_{f,a})$. Then $\cB'=((T'_a)_{a \in \Sigma_{\varepsilon}},\Sigma,\Delta',M')$ is defined as follows:
\begin{itemize}
  \item $\Delta'=\Delta \cup \{ \$,\#\}$ where $\$,\#$ are new symbols, 
  \item $M'=M+2+\ell$ and 
  \item $T'_a=(Q'_a, \Delta', \delta'_a,q'_{0,a},q'_{f,a})$ is defined as follows: 
  \begin{enumerate}
    \item If $a \in \Theta$ then $Q'_a=\{q'_{0,a},q'_{f,a} \} \cup Q_a \cup Q_{\ctr}$ where $Q_{\ctr}=\{ q_0,q_1,q_2\}$\\
    else if $a \in \Sigma \setminus \Theta$ then $Q'_a=\{q'_{0,a},q'_{f,a},q \} \cup Q_a$\\
    else $a =\varepsilon$ and $Q'_{\varepsilon}=Q_{\init}\cup Q_{\final} \cup \{q'_{0,a},q'_{f,a},q \} \cup Q_a$ where $Q_{\init}=\{ q_i \mid 0 \leq i \leq \ell+1\}$ and $Q_{\final}=\{ q'_i \mid 0 \leq i \leq \ell+1\}$. 
    \item $\delta'_a$ is defined as follows: 
    \begin{enumerate}
      \item For each $a \in \Sigma_{\varepsilon}$, transitions on $Q_a$ are inherited from $T_a$. 
      \item If $a \in \Theta$, \\
      $q'_{0,a} \xrightarrow{\$|\$} q_0$ where $q_0 \in Q_{\ctr}$\\
      else $a \in \Sigma_{\varepsilon} \setminus \Theta$ and\\
      $q'_{0,a} \xrightarrow{\$|\$} q$. 
      \item $q'_{0,\varepsilon} \xrightarrow{0 | \$} q_0$ where $q_0 \in Q_{\init}$ and\\
      $q'_{0,\varepsilon} \xrightarrow{\$|1} q'_0$ where $q'_0 \in Q_{\final}$. 
      \item For each $a \in \Sigma_{\varepsilon}$, 
      $q_{f,a} \xrightarrow{\varepsilon|\varepsilon} q'_{f,a}$. 
      \item If $a \in \Theta$ then\\
      $q_0 \xrightarrow{0|1} q_1$\\
      $q_0 \xrightarrow{1|0} q_2$\\
      $q_1 \xrightarrow{0|0} q_1$\\
      $q_1 \xrightarrow{1|1} q_1$\\
      $q_2 \xrightarrow{0|1} q_1$\\
      $q_2 \xrightarrow{1|0} q_2$\\
      $q_1 \xrightarrow{\#|\#} q_{0,a}$. 
      \item If $a \in \Sigma \setminus \Theta$ then\\
      For each $b \in \Delta$, $q \xrightarrow{b|b} q$ and\\
      $q \xrightarrow{\#|\#} q_{0,a}$. 
      \item If $a=\varepsilon$ then\\
      \begin{enumerate}
        \item Transitions between states in $Q_{\init}$ are given as\\
        for each $i, 0 \leq i \leq \ell-1$, $q_i \xrightarrow{0|0} q_{i+1}$\\
        $q_{\ell} \xrightarrow{0|\#} q_{\ell+1}$\\
        $q_{\ell+1} \xrightarrow{0|0} q_{\ell+1}$\\
        $q_{\ell+1} \xrightarrow{\varepsilon|\varepsilon} q_{0,a}$
        \item Let $k=b_1b_2\ldots b_{\ell}$ in binary, where $b_1$ is the least significant digit. Transitions between states in $Q_{\final}$ are given as\\
        For each $i, 0 \leq i \leq \ell-1$, $q'_i \xrightarrow{b_{i+1}|1} q'_{i+1}$\\
        $q'_{\ell} \xrightarrow{\#|1} q'_{\ell+1}$\\
        $q'_{\ell+1} \xrightarrow{1|1} q'_{\ell+1}$\\
        $q'_{\ell+1} \xrightarrow{\varepsilon|\varepsilon} q'_{f,a}$. 
      \end{enumerate}
    \end{enumerate}
  \end{enumerate}
\end{itemize}
The construction is very similar to that used in Lemma \ref{bounded-stack-nfa}. 
$\cB'$ first converts its state $0^{M'}$ to $\$0^{\ell}\#0^M$ using the first transition in (2c) and the transitions in (2gi).\\
After initialization, $\cB$ begins simulating the behaviour of $\cB$ in the part of the state after $\#$ using (2a) and accepting using (2d). The part of the state between $\$$ and $\#$ is used to store a counter which is incremented each time an input letter from $\Theta$ is used using the transitions in (2e). In case the input letter is not from $\Theta$, then the counter is simply retained without change using (2f). As before, $Q_{\final}$ states and transitions are added to $T_{\varepsilon}$ in order to verify that the counter value is indeed $\ell$ before accepting. 

\section{Proof Details of Theorem~\ref{dcps-to-tcvass}} %
\label{sec:appendix_tcvass}

The details of $\cV=((T'_i)_{i \in \hatie},\Delta',M')$ are given below. We use $R$ (with subscripts, superscripts) for the set of transducer states and $F$ for the set of edges. In particular,
the task-$\dsNFA$ $\cB_{(g,\gamma)}$ contains transducers $T_{a,g,\gamma}=(R_{a,g,\gamma}, \Delta, F_{a,g,\gamma},r_{0,a,g,\gamma},r_{f,a,g,\gamma})$ for each symbol $a \in \Sigma$. In the following description, we refer to the global state of $\cV$ as the 'tape configuration' since we will need to talk about the global state of the DCPS and how it is stored and transferred. The state of a transducer is explicitly mentioned.

\begin{itemize}
  \item $\Delta'=\Delta \mdotcup \{ \$_1,\#_1,\dagger_1 \} \mdotcup G \mdotcup (G \times \Gamma \times \{ 0,1\}) \mdotcup \{ \bot\}$,
  \item $M'=N(M+2)+2$,
  \item $T'_i=(R'_i, \Delta', F'_i,r'_{0,i},r'_{f,i})$ is defined as follows:
  \begin{itemize}
    \item If $i \in I$ then $R'_i= \mdotbigcup_{g \in G,\gamma \in \Gamma} R_{\gamma_i,g,\gamma} \mdotcup 
    \{ r'_{0,i},r'_{1,i},r'_{2,i},r'_{f,i} \} $\\
    else if $i \in \barI$ then $R'_i=\{r'_{0,\bari},r'_{f,\bari} \}
    \cup ( \{ r_1,r_2\} \times G)$ \\
    else $i=\varepsilon$ and $R'_{\varepsilon}=R'_{\init} \mdotcup R'_{\final} \mdotcup R'_{\clean} \mdotcup R'_{\term} \mdotcup R'_{\inter} \mdotcup
    \{ r'_{0,\varepsilon},r'_{f,\varepsilon} \} \mathbin{\dot\cup} R'_{\emp} $.\\
    where $R'_{\emp}=\{ r'_1,r'_2,r'_3,r'_4\} \mdotcup G \mdotcup \mdotbigcup_{g \in G, \gamma \in \Gamma} R_{\varepsilon,g,\gamma}$,\\
    $R'_{\clean}=\{ r'_1,r'_2,r'_3,r'_4,r'_5\}$,\\
    $R'_{\final}=\{ r'_1,r'_2,r'_3,r'_4\}$,\\
    $R'_{\term}=(\{ r'_1,r'_2,r'_3\} \times G) \mdotcup \{ r'_4\} \mdotcup \mdotbigcup_{a \in \Sigma_T,g\in G, \gamma \in \Gamma} R_{a,g,\gamma}$ ,\\
    $R'_{\inter}=(\{ r'_1,r'_2,r'_3,\} \times G)\mdotcup \{ r'_4\} \mdotcup \mdotbigcup_{a \in \Sigma_I 
    ,g\in G, \gamma \in \Gamma} R_{a,g,\gamma}$,\\

    \item $F'_i$ is given as follows:
    \begin{enumerate}
      \item If $i \in I$ then we have
      \begin{enumerate}
        \item the transitions on $R_{\gamma_i,g,\gamma}$ inherited from $\cB_{(g,\gamma)}$,
        \item $r'_{0,i} \autstep[\$_1|\$_1] r'_{1,i}$%
        \item for each $b \in \Delta'\setminus\{ \bot\}$, $r'_{1,i} \xrightarrow{b|b} r'_{1,i}$,%
       \item $r'_{1,i} \xrightarrow{\#_1|\#_1} r'_{2,i}$,%
       \item for each $g \in G, \gamma \in \Gamma$, 
       $r'_{2,i} \autstep[(g,\gamma,1)|(g,\gamma,1)] r_{0,\gamma_i,g,\gamma}$, 
       \item for each $g \in G, \gamma \in \Gamma$, 
       $r_{f,\gamma_i,g,\gamma} \xrightarrow{\#_1|\#_1} r'_{f,i}$,\\
       $r_{f,\gamma_i,g,\gamma} \xrightarrow{\dagger_1|\dagger_1} r'_{f,i}$,
       \item for each $b \in \Delta'\setminus\{ \bot\}$, $r'_{f,i} \xrightarrow{b|b} r'_{f,i}$.
      \end{enumerate}
      \item If $i \in \barI$ then we have
      \begin{enumerate}
        \item for each $g \in G$, $r'_{0,\bari} \autstep[g|\$_1] (r'_1,g)$,%
        \item for each $g \in G$ and $b \in \Delta' \setminus \{ \bot\}$, $(r'_1,g) \autstep[b|b] (r'_1,g)$,%
        \item for each $g \in G$, $(r'_1,g) \autstep[\dagger_1|\#_1] (r'_2,g)$,%
        \item for each $g \in G$, $(r'_2,g) \autstep[0|(g,\gamma_i,1)] r'_{f,\bari}$,%
        \item for each $g \in G$ and $b \in \Delta'\setminus\{ \bot\}$, 
        $r'_{f,\bari} \autstep[b|b] r'_{f,\bari}$
      \end{enumerate}
      \item If $i=\varepsilon$ then we have
      \begin{enumerate}
        \item transitions on $R'_{\emp}$:
        \begin{enumerate}
          \item transitions on $\mdotbigcup_{g \in G, \gamma \in \Gamma} R_{\varepsilon,g,\gamma}$ inherited from the corresponding $\dsNFA$ $\cB_{(g,\gamma)}$ for each $g \in G, \gamma \in \Gamma$,
          \item $r'_{0,\varepsilon} \autstep[\varepsilon|\varepsilon] r'_1$
          \item $r'_1 \autstep[\$_1|\$_1] r'_2$
          \item for each $b \in \Delta'\setminus\{ \bot\}$, $r'_2 \autstep[b|b] r'_2$
          \item $r'_2 \autstep[\#_1 | \#_1] r'_3$
          \item for each $g \in G, \gamma \in \Gamma$, $r'_3 \autstep[(g,\gamma,1)| (g,\gamma,1)] r_{0,\varepsilon, g, \gamma}$
          \item $r_{f,\varepsilon, g, \gamma} \autstep[\#_1|\#_1] r'_4$ and \\
          $r_{f,\varepsilon, g, \gamma} \autstep[\dagger_1|\dagger_1] r'_4$
          \item for each $b \in \Delta'\setminus\{ \bot\}$ $r'_4 \autstep[b | b] r'_4$
          \item $r'_4 \autstep[\varepsilon|\varepsilon] r'_{f,\varepsilon}$
          \item for each $g \in G$,  $r'_1 \autstep[g | \$_1] g$
          \item for each $g \in G, b \in \Delta'\setminus\{ \bot\}$, $g \autstep[b | b] g$
          \item for each $g \in G, \gamma \in \Gamma$, $g \autstep[(g,\gamma,0)|(g,\gamma,1)] r'_4$
        \end{enumerate}
        
        \item transitions on $R'_{\init}$ which convert the string $0^{M'}$ to the string 
        $\$_1\#_1(g_0,\gamma_0,1)0^M\dagger_1 (0^{M+1} \dagger_1)^{N-1}$ by Lemma \ref{lem:dsNFA-create-separators},
        and then running the transducer initializing a thread of type $(g_0,\gamma_0)$.
        \item transitions on $R'_{\clean}$ given as follows:
        \begin{enumerate}
          \item $r'_{0,\varepsilon} \autstep[\varepsilon|\varepsilon] r'_1 $,
          \item for each $g \in G$, $r'_1 \autstep[g|g] r'_2 $,
          \item for each $b \in \Delta' \setminus \{ \bot\}$, $r'_2 \autstep[b|b] r'_2 $,
          \item $r'_2 \autstep[\#_1|\dagger_1] r'_3 $,%
          \item $r'_3 \autstep[\bot|0] r'_4 $,%
          \item for each $b \not\in \{ \#_1,\dagger_1\}$ $r'_4 \autstep[b|0] r'_4 $,
          \item $r'_4 \autstep[\#_1|\#_1] r'_5 $ and $r'_4 \autstep[\dagger_1|\dagger_1] r'_5$, %
          \item for each $b \in \Delta' \setminus \{ \bot\}$, $r'_5 \autstep[b|b] r'_5 $,
          \item $r'_5 \autstep[\varepsilon|\varepsilon] r'_{f,\varepsilon} $
        \end{enumerate}
  
        \item transitions on $R'_{\inter}$ given as follows:
        \begin{enumerate}
          \item  for each $g \in G$, $r'_{0,\varepsilon} \autstep[\varepsilon|\varepsilon] (r'_1,g) $,

          \item for each $g \in G$, $(r'_1,g) \autstep[\$_1|g] (r'_2,g)$%
          \item for each $g \in G, b \in \Delta'\setminus\{ \bot\}$, $(r'_2,g) \autstep[b|b] (r'_2,g)$
          \item for each $g \in G$,$(r'_2,g) \autstep[\#_1|\#_1] (r'_3,g) $
          \item for each $g,g',g'' \in G,\gamma,\gamma'' \in \Gamma$ and $a =(g,\gamma,g') \in \Sigma_I$,\\
          $(r'_3,g) \autstep[(g'',\gamma',1)|(g',\gamma,0)] r_{0,a,g'',\gamma'} $
          \item for each $a \in \Sigma_I, g'' \in G, \gamma' \in \Gamma$,\\
            $r_{f,a,g'',\gamma'} \autstep[\#_1,|\#_1] r'_4$ and \\
          $r_{f,a,g'',\gamma'} \autstep[\dagger_1,|\dagger_1] r'_4$
          \item for all $b \in \Delta'\setminus\{ \bot\}$, $r'_4 \autstep[b|b] r'_4$
          \item $r'_4 \autstep[\varepsilon|\varepsilon] r'_{f,\varepsilon}$
        \end{enumerate}
        \item transitions on $R'_{\term}$ given as follows:
        \begin{enumerate}
          \item for each $g \in G$, $r'_{0,\varepsilon} \autstep[\varepsilon|\varepsilon] (r'_1,g) $,

          \item for each $g \in G$, $(r'_1,g) \autstep[\$_1|g] (r'_2,g)$%
          \item for each $g \in G, b \in \Delta'\setminus\{ \bot\}$, $(r'_2,g) \autstep[b|b] (r'_2,g)$
          \item for each $g \in G$, $(r'_2,g) \autstep[\#_1|\#_1] (r'_3,g) $
          \item for each $g,g' \in G, \gamma \in \Gamma$, 
          $(r'_3,g) \autstep[(g',\gamma,1)|\bot] r_{0,(g,\bot),g',\gamma}$%

          \item for each $g,g' \in G, \gamma \in \Gamma$, 
          $r_{f,(g,\bot),g',\gamma} \autstep[\#_1|\#_1] r'_4$ and 
          $r_{f,(g,\bot),g',\gamma} \autstep[\dagger_1|\dagger_1] r'_4$

          \item for each $g \in G, b \in \Delta'\setminus\{ \bot\}$, $r'_4 \autstep[b|b] r'_4 $
          \item $r'_4 \autstep[\varepsilon|\varepsilon] r'_{f,\varepsilon} $
        \end{enumerate}
        \item transitions on $R'_{\final}$ given as follows:
        \begin{enumerate}
          \item  $r'_{0,\varepsilon} \autstep[\varepsilon|\varepsilon] r'_1 $,
          \item $r'_1 \autstep[g_f|1] r'_2$
          \item for each $b \in \Delta' \setminus \{ \bot\}$, $r'_2 \autstep[b|1] r'_2$
          \item $r'_2 \autstep[\varepsilon|\varepsilon] r'_{f,\varepsilon}$
        \end{enumerate}
      \end{enumerate}

    \end{enumerate}
  \end{itemize}
\end{itemize}

Suppose there is a run of the  $\DCPS$ $\cD=(G,\Gamma,E_D,g_0,\gamma_0,1)$ which reaches the global state $g_f$ using at most $N$ threads
each of which is switched at most $K$ times. Let us see how $\cV'$
simulates such a run.  $\cV'$ first initialises
itself to the tape configuration $\$_1\#_1(g_0,\gamma_0,1)0^M\dagger_1 (0^{M+1} \dagger_1)^{N-1}$ and with all counter values being $0$. In general, the tape configuration is of the form $a\Box_1w_1\Box_2w_2\ldots \Box_N w_N \Box_{N+1}$ where:
\begin{itemize}
  \item $a$ is a single symbol which is either $\$_1$ indicating that there is an active thread which is running, or $a$ is an element of $G$ in which case the $\DCPS$ is at a schedule point and thus all threads are inactive. 
  \item Each $w_i$ is a string which either represents the configuration of a thread (in which case we say the segment is \emph{occupied}) or filled with blanks (in which case we say the segment is \emph{unoccupied}); and $w_i$ starts with a symbol of the form $(g,\gamma,j)$ where $g \in G, \gamma \in \Gamma, j \in \{ 0,1\}$. We call the $w_i$'s \emph{segments}. At most one of the segments starts with a symbol where $j=1$ and this indicates the active thread.  
  \item Each $\Box_i$ is a symbol which is either $\#_1$ or $\dagger_1$. A $\#_1$ (resp. $\dagger_1$) indicates that the segment immediately after is occupied (resp. unoccupied).
\end{itemize}

The simulation
of steps taken by the active thread is done using (1a) to (1g) for the
case when the step creates a new spawn and by using (3ai) to (3aviii) 
when the step does not create a spawn. This is done as follows:
\begin{enumerate}[(A)]
  \item Check if the first symbol on the tape is a $\$_1$ which means that there is an active thread and retain it (rules 1b,3aiii).
  \item Nondeterministically guess the segment containing the active thread, retaining the symbols prior to it (rules 1c and 1d, 3aiv and 3av).
  \item Verify that the first symbol of the guessed segment is indeed of the form $(g,\gamma,1)$ , signifying the active segment (rules 1e, 3av). At this point, control is passed to the corresponding transducer $\cB_{(g,\gamma)}$. 
  \item This is followed by the 'local' computation inside the segment (rules 1a, 3ai). Recall that the task-$\PDA$ has states of the form $(g,\gamma)$ which stores the surface configuration when the thread makes a context switch. As we describe later, $\cV$ ensures that
  the state information in the first symbol corresponds to that stored internally by $\cB_{g,\gamma}$ while simulating the context switches.  
  \item Control is then passed out of $\cB_{g,\gamma}$ from the final state of the appropriate transducer on reading a $\#_1$ or $\dagger_1$ which indicate the end of the segment (rules 1f,3avii).
  \item The rest of the segments are ignored (rules 1g, 3aviii and 3aix). 
\end{enumerate}

\noindent \textbf{Context switching} works in two phases: the first phase is used to transfer the global state information from the active thread to the first symbol of the state of $\cV$. In the second phase, this information is used to wake up one of the inactive threads and make it the active thread. Note that while the task-$\PDA$ reads a symbol while doing a context switch, this is no longer the case in $\cV$. Thus context switching is done in the transducer $T'_{\epsilon}$.\\
 \underline{Phase 1:} 
  $\cV$ uses the transitions of the $\inter$ module in (3d). Let $a=(g,\gamma,g') \in \Sigma_I$ be the context switch. The active thread is made inactive as follows:
  \begin{enumerate}[(A)]
    \item $\cV$ guesses $g$ and stores it in the state of the transducer (3di).
    \item The first symbol $\$_1$ is rewritten to $g$ using (3dii).
    \item As before, we guess the active segment (3div) while ignoring the ones preceding it (3diii).
    \item The first symbol $(g'',\gamma',1)$ is replaced by $(g',\gamma,0)$ and control is passed to the transducer $T_{a,g'',\gamma'}$ (3dv).
    \item $T_{a,g'',\gamma'}$ applies steps corresponding to the transition $g_1 \autstep[(g,\gamma,g')|\gamma'''/w] (g',\gamma)$ in (4) of \cref{sub:threads} for the task-$\PDA$.
    \item Control is passed back out of  $T_{a,g'',\gamma'}$ at the end of the segment (3dvi).
    \item The rest of the segments are ignored (3dvii,3dviii).
  \end{enumerate}
Note that at this point, all threads have 
 been made inactive and we are at a schedule point.\\ 
 \underline{Phase 2:} We have two choices at a schedule point: \\
 \underline{Choice 1:} We may load a new task into an empty segment. 
 This is accomplised by transitions in (2) which decrement a counter
 $i$ and load it as the active thread in the following sequence of steps:
 \begin{enumerate}[(A)]
   \item Transfer the global state information $g$ from the first symbol into the state of the transducer (2a).
   \item Guess an unoccupied segment (2c) while ignoring the segments prior to it (2b).
   \item Rewrite the first 0 of the segment by $(g,\gamma_i,1)$, making it the active segment (2d). Note that the first symbol is a 0 if and only if the segment is unoccupied, which serves to verify our guess.
   \item Ignore the rest of the segments (2e).
 \end{enumerate}
 \underline{Choice 2:} Alternately, we may load in an existing thread
 which had previously been switched out. This is done by transitions
 in  (3aviii) to (3axii) as follows:
\begin{enumerate}[(A)]
   \item Transfer the information $g \in G$ in the first symbol of the tape to the state of the transducer (3ax). Note that we replace $g$ by $\$$ since at the end of this step, we will once again have an active thread.
   \item Choose an inactive thread to activate (3axii) while ignoring the previous ones (3axi).
   \item Ignore the rest of the segments (3aviii) and accept (3aix).
 \end{enumerate} 

\noindent\textbf{Termination} of the active thread corresponds to the reading of a symbol $(g,\bot) \in \Sigma_T$ in the task-$\PDA$. Similar to context switching, this is carried out in $T'_{\varepsilon}$ using the following steps:
\begin{enumerate}[(A)]
   \item Guess the current state $g$ of the active thread (3ei) and store it in the first symbol of the tape (3eii).
   \item Guess the active segment (3eiv) and replace its first symbol $(g',\gamma',1)$ by $\bot$ (3ev) while passing control to the transducer $T_{(g,\bot),g',\gamma'}$. At this point, the active thread should already be in the state $(g,\bot)$ using a rule $g' \autstep[\varepsilon|\gamma/w] (g,\bot)$ in (5) of \cref{sub:threads} for the task-$\PDA$. This allows the use of the rule $(g,\bot) \xrightarrow{(g,\bot)|\gamma_2/\gamma_2} \mathsf{end}$ from (5) of \cref{sub:threads} in order to terminate the thread.
   \item The control is passed back from $T_{(g,\bot),g',\gamma'}$ (3evi) and the rest of the segments are ignored (3evii) before acceptance (3eviii).
 \end{enumerate} 
 Note that the 
crucial difference between context switching and termination is that
the $\bot$ symbol that is placed by the module $\term$ in the segment which contained the terminated thread cannot be read by any module other than $\clean$.
This ensures that the transitions in (3c) are used immediately after
thread termination to ensure clean up of the segment as follows:
\begin{enumerate}[(A)]
  \item Choose the $\clean$ module (3ci)
   \item Check that the $\DCPS$ at a schedule point and the first symbol on the tape is some $g \in G$ (3cii).
   \item Guess the segment in which the thread has just been terminated (3civ) while ignoring the  previous segments (3ciii).
   \item Replace the $\bot$ symbol by $0$ (3cv) and then all other symbols in the current segment also by $0$ (3cvi).
   \item Ignore the rest of the segments (3cvii,3cviii) and accept (3cix). 
 \end{enumerate} 
 The continuation
of the simulation after this cleanup process is basically the Phase 2 of the context
switching process. 

\noindent\textbf{Checking State Reachability}
The transitions in (3f) can be used at any point to check if the $\DCPS$ has reached the state $g_f$. The $\final$ module is constructed to only check if the global state is $g_f$ at a schedule point. This is because the $\DCPS$ can be easily modified to include swap rules for every possible top of stack for the target state $g_f$. Thus there is always a corresponding run where $g_f$ occurs in the first cell of the tape if there is a run of the $\DCPS$ where a particular thread reaches the state $g_f$, by performing a context switch immediately after this happens. Thus all we need to do is:
\begin{enumerate}[(A)]
  \item Choose the $\final$ module (3fi).
\item Check if the first symbol is $g_f$  and replace it by $1$ (3fii).
\item Continue to convert all symbols to $1$ (3fiii) and accept (3fiv).
\end{enumerate}
At the end of a successful run of the $\final$ module, $\cV$ reaches its final state consisting of all $1$'s on the tape.

By construction, if there is a run of the $\DCPS$ which reaches the state $g_f$, then there is also a run of $\cV$ which reaches its final state. Conversely, suppose $\cV$ has a run reaching its final state. This means that the active segment contains a symbol of the form $(g_f,c)$ which implies that the global state $g_f$ has been reached by the $\DCPS$. The two points of concern regarding whether this run corresponds to a valid run of the $\DCPS$ are at the places where control is passed from one thread to another i.e. context switching and termination. The rest of the simulation is internal to a particular segment and its correctness is guaranteed by previous constructions. \\
Context switching: Steps (D) and (E) in Phase 1 of context switching ensure that the internal state $g$ of the active thread after applying a context switch rule is the same as the first component of the context switch symbol $(g,\gamma,g')$. Further, step (E) ensures that the internal state $(g',\gamma)$ of the thread which has just been made inactive corresponds to the third and second components of $(g,\gamma,g')$. The state $g$ which is stored in the first cell of the tape is used in the second phase to wake up an inactive thread.\\
In (B) of Phase 2 of context switching, rule (3axii) is used to wake up a thread and the rule ensures that the state $(g,\gamma)$ of the thread being woken up has the same first component as the $g$ stored in the first cell of the tape. In the previous phase, we have already seen that the first symbol $(g,\gamma,0)$ is made to match the internal state $(g,\gamma)$ of the thread. Internal to the thread, the rule $(g, \gamma) \xrightarrow{\varepsilon|\gamma/\gamma} g$ from (4) in \cref{sub:threads} is applied and ensures that the top of stack is as specified. Alternately, we can load a new thread in Phase 2, but this is more straightforward and no consistency check is required as in the case of switching to a running thread.\\
Termination: The guess of the state $g$ stored in the active segment in (A) of termination is verified in (B). Termination is only possible if the active thread has already moved to the state 
$(g,\bot)$ prior to applying the termination step to reach the state $\mathsf{end}$. Further, the symbol $\bot$ can only be read by the module $\clean$ and this ensures that immediately after termination, the segment is made unoccupied. 

\section{Proof Details for Complexity of Context-bounded Reachability} %
\label{sec:appendix_binary_bound}

Like the previous proof of a lower bound by \citet{BaumannMajumdarThinniyamZetzsche2020a}, our proof proceeds in three steps: We reduce from termination of \emph{bounded counter programs} to termination of \emph{succinct recursive net programs} to coverability of \emph{succinct transducer-defined Petri nets} to finally context-bounded reachability for $\DCPS$ (without thread-pooling). The various problems and models involved in these reductions are explained in their respective subsections.

\subsection{Recursive net programs}

A \emph{counter program} is a finite sequence of \emph{labelled commands} separated by semicolons. 
Let $l, l_1, l_2$ be \emph{labels} and $x$ be a \emph{variable} (also called a \emph{counter}). 
The labelled commands have one of the following five forms:
\begin{alignat*}{2}
  (1)~l&: \text{\textbf{inc }}x; &&~\text{ \texttt{//} increment}\\
  (2)~l&: \text{\textbf{dec }}x; &&~\text{ \texttt{//} decrement}\\
  (3)~l&: \text{\textbf{halt}}\\
  (4)~l&: \text{\textbf{goto }} l_1; &&~\text{ \texttt{//} unconditional jump}\\
  (5)~l&: \text{\textbf{if }} x = 0 \text{\textbf{ then goto }} l_1 \text{\textbf{ else goto }} l_2; &&~\text{ \texttt{//} conditional jump}
\end{alignat*}

Variables can hold values over the natural numbers, labels have to be
pairwise distinct, but can otherwise come from some arbitrary set.
For convenience, we require each program to contain exactly one \textbf{halt} command at the very end. 
The \emph{size} $|C|$ of a counter program $C$ is the number of its labelled commands.

During execution, all variables start with initial value $0$. 
The semantics of programs follows from the syntax, except for the case of decrementing a variable whose value is already~$0$. 
In this case, the program aborts, which is different from proper termination, i.e., the execution of the \textbf{halt} command. 
It is easy to see that each counter program has only one execution, meaning it is deterministic. 
This execution is \emph{$k$-bounded} if none of the variables ever
reaches a value greater than $k$ during it. 

Let $\exp^{m+1}(x) := \exp(\exp^m(x))$ and $\exp^1(x) = \exp(x) := 2^x$.
The $M$-fold exponentially bounded \emph{halting problem} (also called \emph{termination}) for counter programs ($\HP[M]$) is given by:
\begin{description}
  \item[Given] A unary number $n \in \nats$ and a counter program $C$.
  \item[Question] Does $C$ have an $\exp^{M}(n)$-bounded execution that reaches the \textbf{halt} command?
\end{description}
We make use of the following well-known result regarding this problem:
\begin{theorem}
  For each $M > 0$, the problem $\HP[M+1]$ is $M$-$\EXPSPACE$-complete.
\end{theorem}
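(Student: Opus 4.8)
The plan is to prove the two directions separately: the containment $\HP[M+1] \in M$-$\EXPSPACE$ by a configuration-tracing simulation, and $M$-$\EXPSPACE$-hardness by arithmetizing a space-bounded Turing machine computation into bounded counters. Both directions hinge on the exact quantitative match between a counter bound of $\exp^{M+1}(n)$ and a space bound of $\exp^{M}(n)$, via the relation $\log \exp^{M+1}(n) = \exp^{M}(n)$.

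For the upper bound, I would observe that a configuration of a counter program $C$ consists of the current label together with the values of its at most $|C|$ variables. In an $\exp^{M+1}(n)$-bounded execution every variable value lies in $\{0, \ldots, \exp^{M+1}(n)\}$ and hence occupies $\lceil \log(\exp^{M+1}(n)+1)\rceil = O(\exp^{M}(n))$ bits, so an entire configuration is stored in $S := O(|C|\cdot \exp^{M}(n)) = \exp^{M}(\mathrm{poly}(n+|C|))$ bits. Since $C$ is deterministic, its unique execution is traced step by step in space $O(S)$: at each step we update the stored configuration, reject immediately if an \textbf{inc} would exceed $\exp^{M+1}(n)$ or a \textbf{dec} aborts on value $0$, and accept upon reaching \textbf{halt}. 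To force termination of the simulation I would keep a step counter of $O(S)$ bits and reject once it surpasses $2^{O(S)}$, the number of in-bound configurations, since exceeding this count certifies a non-halting loop. The total space is $O(S)$, placing $\HP[M+1]$ in $M$-$\EXPSPACE$.

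For the lower bound I would reduce, under logspace reductions, from the word problem of a fixed deterministic Turing machine $T$ deciding an arbitrary language $L \in M$-$\EXPSPACE$, say within space $s(N) = \exp^{M}(p(N))$ on inputs of length $N$ for some polynomial $p$. Given an input $x$ with $|x| = N$, the reduction emits a unary number $n = \mathrm{poly}(N)$ with $\exp^{M+1}(n) \ge |\Gamma|^{s(N)}$ (since $\exp^{M}$ grows very fast, $n = p(N) + c$ already suffices for a constant $c$ depending only on the tape alphabet $\Gamma$), together with a counter program $C_x$ of size polynomial in $N$ simulating $T$ on $x$. The crux of the encoding is to represent the tape of $T$ by two counters, $\mathit{left}$ and $\mathit{right}$, read as base-$|\Gamma|$ numbers, with the least significant digit of $\mathit{right}$ being the scanned symbol. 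Reading and overwriting the current symbol inspect and adjust this least significant digit; moving the head transfers one digit between $\mathit{left}$ and $\mathit{right}$. Each such operation reduces to multiplying or dividing a counter by the constant $|\Gamma|$ and adding or subtracting a constant, all implementable in the counter-program syntax (using the conditional zero-test jump) by fixed subroutines with auxiliary counters. The control state of $T$, of which there are constantly many, is tracked by the program counter of $C_x$; the input $x$ is loaded into $\mathit{right}$ by a polynomial-length initialization block; and $C_x$ is arranged to reach its unique \textbf{halt} exactly when $T$ enters its accepting state. Since every counter stays below $|\Gamma|^{s(N)} \le \exp^{M+1}(n)$, the program $C_x$ has an $\exp^{M+1}(n)$-bounded halting execution iff $x \in L$.

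The hard part will be the lower-bound construction: one must design the multiply/divide-by-$|\Gamma|$ and digit-transfer subroutines to be deterministic, to reset all auxiliary counters to $0$ between steps, and never to drive any counter above the $(M+1)$-fold exponential bound, and then verify that the chosen unary $n$ is simultaneously large enough for $|\Gamma|^{s(N)}$ to fit and small enough to keep the reduction logspace-computable. By contrast, the upper-bound direction is essentially bookkeeping once the configuration-size estimate $S = \exp^{M}(\mathrm{poly})$ is in hand.
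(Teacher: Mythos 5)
The paper does not actually prove this theorem: it is invoked in the appendix as a ``well-known result'' (it is the classical correspondence, going back to Fischer--Meyer--Rosenberg-style arguments, between $k$-bounded counters and Turing-machine space $\log k$), so there is no in-paper proof to compare against. Your argument is the standard one and is correct: the upper bound works because a configuration of a deterministic, $\exp^{M+1}(n)$-bounded program fits in $\exp^{M}(\mathrm{poly}(n+|C|))$ bits and cycling is detected with a step counter of the same size, and the lower bound works because the two-counter, base-$|\Gamma|$ tape encoding keeps all counters below $|\Gamma|^{\exp^{M}(p(N))}\le\exp^{M+1}(p(N)+c)$. The only detail your sketch glosses over is that the syntax of counter programs has no procedure calls, so the multiply/divide-by-$|\Gamma|$ and digit-extraction routines must be inlined (constant-size per simulated step, and $O(N)$ inlined copies for loading the input), which keeps $|C_x|$ polynomial; with that noted, the proof goes through.
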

The definition of \emph{recursive net programs} ($\rnp$) also involves sequences of labelled commands separated by semicolons. Let $l, l_1, l_2$ be labels, $x$ be a variable, and \texttt{proc} be a procedure name. Then the labelled commands can still have one of the previous forms (1) to (4). However, form (5) changes from a conditional to a nondeterministic jump, and there are two new forms for procedure calls:
\begin{alignat*}{2}
  (1)~l&: \text{\textbf{inc }}x; &&~\text{ \texttt{//} increment}\\
  (2)~l&: \text{\textbf{dec }}x; &&~\text{ \texttt{//} decrement}\\
  (3)~l&: \text{\textbf{halt}}\\
  (4)~l&: \text{\textbf{goto }} l_1; &&~\text{ \texttt{//} unconditional jump}\\
  (5)~l&: \text{\textbf{goto }} l_1 \text{\textbf{ or goto }} l_2; &&~\text{ \texttt{//} nondeterministic jump}\\
  (6)~l&: \text{\textbf{call} \texttt{proc}}; &&~\text{ \texttt{//} procedure call}\\
  (7)~l&: \text{\textbf{return}}; &&~\text{ \texttt{//} end of procedure}
\end{alignat*}

In addition to labelled commands, these programs consist of a finite set $\mathsf{PROC}$ of procedure names and also a \emph{maximum recursion depth} $k \in \mathbb{N}$. Furthermore, they not only contain one sequence of labelled commands to serve as the main program, but also include two additional sequences of labelled commands for each procedure name $\text{\texttt{proc}} \in \mathsf{PROC}$. The second sequence for each \texttt{proc} is not allowed to contain any \textbf{call} commands and serves as a sort of ``base case'' only to be called at the maximum recursion depth. Each label has to be unique among all sequences and each jump is only allowed to target labels of the sequence it belongs to. Each $\rnp$ contains exactly one \textbf{halt} command at the end of the main program. For $\text{\texttt{proc}} \in \mathsf{PROC}$ let $\#c(\texttt{proc})$ be the number of commands in both of its sequences added together and let $\#c(\texttt{main})$ be the number of commands in the main program. Then the \emph{size} of an $\rnp$ $R$ is defined as $|R| = \lceil\log{k}\rceil + \#c(\texttt{main}) + \sum_{\text{\texttt{proc}} \in \mathsf{PROC}} \#c(\texttt{proc})$.

The semantics here is quite different compared to counter programs: If
the command ``$l:\text{\textbf{call} \texttt{proc}}$'' is executed, the label $l$ gets pushed onto the call stack. Then if the stack contains less than $k$ labels, the first command sequence pertaining to \texttt{proc}, which we now call $\text{\texttt{proc}}_{<\text{max}}$, is executed. If the stack already contains $k$ labels, the second command sequence, $\text{\texttt{proc}}_{=\text{max}}$, is executed instead. Since $\text{\texttt{proc}}_{=\text{max}}$ cannot call any procedures by definition, the call stack's height (i.e.\ the \emph{recursion depth}) is bounded by $k$. On a $\text{return}$ command, the last label gets popped from the stack and we continue the execution at the label occurring right after the popped one.

How increments and decrements are executed depends on the current recursion depth $d$ as well. For each variable $x$ appearing in a command, $k + 1$ copies $x_0$ to $x_k$ are maintained during execution. The commands \textbf{inc}~$x$ resp.\ \textbf{dec}~$x$ are then interpreted as increments resp.\ decrements on $x_d$ (and not $x$ or any other copy). As before, all these copies start with value~$0$ and decrements fail at value~$0$, which is different from proper termination.

Instead of a conditional jump, we now have a nondeterministic one, that allows the program execution to continue at either label. Regarding termination we thus only require there to be at least one execution that reaches the \textbf{halt} command. This gives us the following halting problem for $\rnp$:
\begin{description}
  \item[Given] An $\rnp$ $R$
  \item[Question] Is there an execution of $R$ that reaches the \textbf{halt} command?
\end{description}

Using a reduction from $\HP[M]$, \citet{BaumannMajumdarThinniyamZetzsche2020a} have shown that the halting problem is $\EXPSPACE$-hard resp.\ $\TWOEXPSPACE$-hard for $\rnp$ with unary-encoded resp.\ binary-encoded recursion depth. Since we want to show $\THREEEXPSPACE$-hardness we need to encode the recursion depth even more succinctly:

A \emph{succinct recursive net program at level $M$} ($M$-$\srnp$) for $M \in \N\backslash\{0\}$ is defined in the same way as an $\rnp$, except that its recursion depth is now $\exp^M(k)$ instead of just $k$. Additionally, its \emph{size} is defined as $M + k + \#c(\texttt{main}) + \sum_{\text{\texttt{proc}} \in \mathsf{PROC}} \#c(\texttt{proc})$ ($M$ and $k$ are unary-encoded).

It follows from the construction in \cite{BaumannMajumdarThinniyamZetzsche2020a} that the halting problem for $M$-$\srnp$ is $(M+1)$-$\EXPSPACE$-hard, meaning in particular it is $\THREEEXPSPACE$-hard for $2$-$\srnp$. Note that $M$-$\srnp$ are quite similar to counter system with chained counters at level $M$ from \cite{DBLP:conf/lics/DemriFP13}, where they also establish an $(M+1)$-$\EXPSPACE$-hardness result for their model.

\subsection{Transducer-defined Petri nets}

\begin{definition}
  \label{def:petri_net} 
  A \textbf{Petri net} is a tuple $N = (P,E,F,p_0,p_f)$ where
  $P$ is a finite
  set of \emph{places}, $E$ is a finite set of \emph{transitions} with
  $E \cap P = \varnothing$, $F \subseteq (P \times E) \cup (E \times
  P)$ is its \emph{flow relation}, and $p_0 \in P$ (resp. $p_f \in P$)
  its
  \emph{initial place} (resp. \emph{final place}). 
  A \emph{marking} of $N$ is a multiset $\mmap\in \multiset{P}$. 
  For a marking $\mmap$ and a place $p$ we say that there
  are $\mmap(p)$ \emph{tokens} on $p$. Corresponding to the initial (resp.\ final) 
  place we have the initial marking $\mmap_0=\multi{p_0}$ 
  (resp.\ final marking $\mmap_f=\multi{p_f}$).
  The \emph{size} of $N$ is defined as $|N| = |P|+|E|$.
  
  A transition $e \in E$ is enabled at a marking $\mmap$ if $ \set{p \mid (p,e) \in F} \preceq \mmap $. If $e$ is enabled in $\mmap$, $e$ can be fired, which leads to a marking $\mmap'$ with $\mmap' = \mmap \oplus \set{p \mid (e,p) \in F} \ominus \set{p \mid (p,e) \in F}$. In this case we write $\mmap \autstep[e] \mmap'$.
  A marking $\mmap$ is \emph{coverable} in $N$ if there is a sequence
  $\mmap_0 \autstep[e_1] \mmap_1 \autstep[e_2] \ldots \autstep[e_l] \mmap_l$ such that $\mmap \preceq \mmap_l$. We call such a sequence a \emph{run} of $N$.
\end{definition}
The \emph{coverability problem} for Petri nets is defined as:
\begin{description}
  \item[Given] A Petri net $N$.
  \item[Question] Is $\mmap_f$ coverable in $N$?
\end{description}

Note that Petri nets are essentially $\VASS$ without states (and just the operations $1, -1 \in \mathbb{Z}$ on counters). However, since Petri net places can simulate $\VASS$ states and counters, the two models are equivalent. We use $\VASS$ in other parts of the paper because we need the states as a notion of \emph{control}. In contrast we use Petri nets here because \citet{BaumannMajumdarThinniyamZetzsche2020a} did so as well and because they make the next definition simpler:

\begin{definition}
  \label{def:tdpn}
  A \textbf{transducer-defined Petri net} $\mathcal{N} = (w_{
    \mathit{init}},$ $w_{
    \mathit{final}},$ $
  T_{move},$ $T_{fork},$ $T_{join})$
  consists of two words $w_{\mathit{init}},w_{
    \mathit{final}} \in \Sigma^l$ for some $l \in \nats$, a binary
  transducer $
  T_{move}$ and two ternary transducers $T_
  {fork}$ and $T_{join}$. Additionally, all three
  transducers share $\Sigma$ as their alphabet. This
  defines an \emph{explicit} Petri net $\cE(\cN) = 
  (P,E,F,p_0,p_f)$ :
  \begin{itemize}
    \item $P=\Sigma^l$. 
    \item $E$ is the disjoint union of $E_{\mathit{move
    }}$, $E_{\mathit{fork}}$ and $E_{\mathit{join}}$\footnote{Note that a tuple $
      (w,w',w'') \in E_{\mathit{join}}$ is different from the
      same tuple in $E_{\mathit{fork}}$. In the interest of readability,
      we have chosen not to introduce
      a $4^{th}$ coordinate to distinguish the two.} where
    \begin{itemize}
      \item $E_{\mathit{move}} =\{(w,w') \in \Sigma^l \times \Sigma^l \mid 
      (w,w') \in L(T_{\mathit{move}}) \}$,
      \item $E_{\mathit{fork}} =\{(w,w',w'') \in \Sigma^l \times \Sigma^l \times \Sigma^l \mid
      (w,w',w'') \in L(T_{\mathit{fork}}) \}$, and
      \item $E_{
        \mathit{join}}= \{(w,w',w'') \in \Sigma^l \times \Sigma^l \times
      \Sigma^l \mid
      (w,w',w'') \in L(T_{\mathit{join}}) \}$.
    \end{itemize}
    \item $ p_0 = w_{\mathit{init}}$ and $p_f=w_{\mathit{final}}$.
    \item $\forall t \in T\colon$
    \begin{itemize}
      \item If $t = (p_1,p_2) \in E_{\mathit{move}}$ then $(p_1,t),
      (t,p_2) \in
      F$.
      \item If $t = (p_1,p_2,p_3) \in E_{\mathit{fork}}$ then $(p_1,t),
      (t,p_2),
      (t,p_3) \in F$.
      \item If $t = (p_1,p_2,p_3) \in E_{\mathit{join}}$ then $(p_1,t),
      (p_2,t),
      (t,p_3) \in F$.
    \end{itemize}
  \end{itemize}
  An accepting run of one of the transducers, which corresponds to
  a single transition of $\cE(\cN)$, is called a \emph{transducer-move}.
  The \emph{size} of $\mathcal{N}$ is defined as $|\mathcal{N}| = l + |T_{move}| + |T_{fork}| + |T_{join}|$.
\end{definition}

\begin{figure}[t]
  \def \dist{.4}
  \centering
  \begin{tikzpicture}[]
    \node[place,label=left:$p_1$] (p1) at (0,0) {};
    \node[place,label=right:$p_2$] (p2) at (2,0) {};
    \node[transition,label=below:$t$] at (1,0) {}
    edge[pre] (p1)
    edge[post] (p2);
    \node at (1, -1.2) {$t = (p_1,p_2) \in L(T_{move})$};
    
    \node[place,label=left:$p_1$] (q1) at (4.5,0) {};
    \node[place,label=right:$p_2$] (q2) at (6.5,\dist) {};
    \node[place,label=right:$p_3$] (q3) at (6.5,-\dist) {};
    \node[transition,label=below:$t$] at (5.5,0) {}
    edge[pre] (q1)
    edge[post] (q2)
    edge[post] (q3);
    \node at (5.5, -1.2) {$t = (p_1,p_2,p_3) \in L(T_{fork})$};
    
    \node[place,label=left:$p_1$] (r1) at (9,\dist) {};
    \node[place,label=left:$p_2$] (r2) at (9,-\dist) {};
    \node[place,label=right:$p_3$] (r3) at (11,0) {};
    \node[transition,label=below:$t$] at (10,0) {}
    edge[pre] (r1)
    edge[pre] (r2)
    edge[post] (r3);
    \node at (10, -1.2) {$t = (p_1,p_2,p_3) \in L(T_{join})$};
  \end{tikzpicture}
  \caption{The types of transitions defined by the three transducers of a $\TDPN$.}
  \label{fig:PNtransducers}
\end{figure}
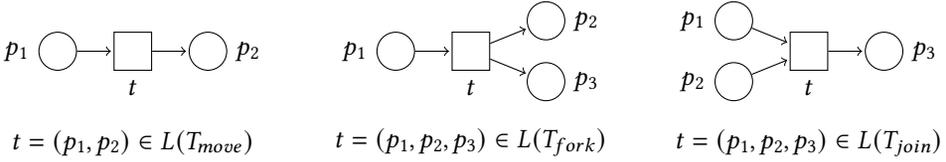

A Petri net defined by transducers in this way can only contain three different types of transitions, each type corresponding to one of the three transducers. These transition types are depicted in \cref{fig:PNtransducers}. 
The \emph{coverability problem} for $\TDPN$ is given by:
\begin{description}
  \item[Given] A $\TDPN$ $\cN$.
  \item[Question] Is $\mmap_f=\multi{w_{\mathit{final}}}$ coverable in
  the
  corresponding explicit
  Petri net $\cE(\cN)$?
\end{description}

Note that $\TDPN$ are quite different from $\TCVASS$ in that for the latter only the control is succinctly defined, while we apply succinctness to the whole model for the former.

Using a reduction from termination for $\rnp$, \citet{BaumannMajumdarThinniyamZetzsche2020a} have shown that the coverability problem is $\TWOEXPSPACE$-hard for $\TDPN$. Like before, we define a more succinct version of $\TDPN$ to show $\THREEEXPSPACE$-hardness:

\begin{definition} \label{def:stdpn}
  A \textbf{succinct transducer-defined Petri net} ($\sTDPN$) $\mathcal{N} = (k,$ $
  T_\mathit{move},$ $T_\mathit{fork},$ $T_\mathit{join})$ consists of a (binary-encoded) number $k \in \N$, a binary
  transducer $T_\mathit{move}$ and two ternary transducers $T_\mathit{fork}$ and $T_\mathit{join}$.
  It is defined in the same way as the $\TDPN$ $(w_{\mathit{init}},$ $w_{\mathit{final}},$
  $T_\mathit{move},$ $T_\mathit{fork},$ $T_\mathit{join})$, where $w_{\mathit{init}} = 0^k$ and $w_{\mathit{final}} = 10^{k - 1}$.
  Additionally, all three
  transducers share $\{0,1\}$ as their alphabet.
  
  The \emph{size} of $\mathcal{N}$ is defined as $|\mathcal{N}| = \lceil\log{k}\rceil + |T_\mathit{move}| + |T_\mathit{fork}| + |T_\mathit{join}|$.
\end{definition}

Given a $2$-$\srnp$ $R$ with maximum recursion depth $\exp^2(k)$ we construct an
$\sTDPN$ $\cN = (k',$ $T_{\mathit{move}},$ $T_{\mathit{fork}},$ $T_{\mathit{join}})$, which defines the Petri net $\cE(\cN) = (P,T,F,p_0,p_f)$, such that $\multi{p_f}$ is coverable in $\cE(\mathcal{N})$ iff there is a terminating execution of $R$.
We can reuse the Petri net construction from \cite{BaumannMajumdarThinniyamZetzsche2020a} to argue about the shape of $\cE(\mathcal{N})$:

The idea is for $\cE(\mathcal{N})$ to have one place per variable of $R$ and one place per label of $R$, as well as at most one auxiliary place for each command of $R$, including an auxiliary place $w_{\mathit{halt}}$ for the \textbf{halt} command. Let the number of all these places be $h$. Then each such place gets copied $\exp^2(k)+1$ times, so that a copy exists for each possible recursion depth. Transitions are added to simulate the commands of $R$ and only act on places of equal or off-by-one recursion depth, due to the semantics of $\srnp$.

Regarding the transducers, we need to identify every place with an address in $\{0,1\}$. To this end, each such address $w = u.v$ shall consist of a prefix $u$ of length $\lceil\log{h}\rceil$ and a postfix $v$ of length $\lceil\log\big(\exp^2(k) + 1\big)\rceil \approx \exp^{1}(k)$. We assign each of the $h$ places that $\cE(\mathcal{N})$ started with (before copying) a number from $0$ to $h-1$. The binary representation of this number (with leading zeros) is used for the $u$-part of its address, and we make sure that the place for the initial label gets an address in $0^*$, whereas the place for the \textbf{halt} command gets an address in $10^*$. For the $v$-part, we use the binary representation of the recursion depth $d$ (also with leading zeros), that a particular copy of this place corresponds to. This way the address of the place of the initial label at recursion depth $0$ matches $w_{\mathit{init}}$ from the $\srnp$-definition, and the one corresponding to the \textbf{halt} command at recursion depth $0$ matches $w_{\mathit{final}}$.

For the $u$-parts of the addresses, the transducers just branch over all possible pairs (in case of move) or triples (fork or join). Then for the $v$-parts, each transducer just needs to check for equality, if all places correspond to the same recursion depth, or for off-by-one, otherwise. A special case are places for labels from a $\text{\texttt{proc}}_{=\text{max}}$-definition, where we also need to check that the $v$-part encodes the maximum recursion depth $\exp^2(k)$. How a binary transducer can perform these checks is depicted in \cref{fig:depthChecks}, and the case of a ternary transducer is very similar. If more than one check needs to be performed on a tuple of addresses, one can just use a product construction.

\begin{figure}[t]
  \centering
  \def \dist{.6cm}
  \begin{tikzpicture}[initial text={}]
    \node[state, initial] (q0) {$q_0$};
    \node[left=\dist of q0] {\textbf{1.)}};
    \node[state, accepting, right=\dist of q0] (q1) {$q_1$};
    \path[->] (q0) edge[loop above] node{$\left( \begin{smallmatrix} 0 \\ 0 \end{smallmatrix} \right), \left( \begin{smallmatrix} 1 \\ 1 \end{smallmatrix} \right)$} (q0)
      (q0) edge node[above]{$\left( \begin{smallmatrix} 0 \\ 1 \end{smallmatrix} \right)$} (q1)
      (q1) edge[loop above] node{$\left( \begin{smallmatrix} 1 \\ 0 \end{smallmatrix} \right)$} (q1);

    \node[right=\dist of q1] (l2) {\textbf{2.)}};
    \node[state, initial, accepting, right=\dist of l2] (r0) {$q_0$};
    \path[->] (r0) edge[loop above] node{$\left( \begin{smallmatrix} 0 \\ 0 \end{smallmatrix} \right), \left( \begin{smallmatrix} 1 \\ 1 \end{smallmatrix} \right)$} (r0);

    \node[right=\dist of r0] (l3) {\textbf{3.)}};
    \node[state, initial, right=\dist of l3] (s0) {$q_0$};
    \node[state, accepting, right=\dist of s0] (s1) {$q_1$};
    \path[->] (s0) edge node[above]{$\left( \begin{smallmatrix} 1 \\ 1 \end{smallmatrix} \right)$} (s1)
      (s1) edge[loop above] node{$\left( \begin{smallmatrix} 0 \\ 0 \end{smallmatrix} \right)$} (s1);
  \end{tikzpicture}
  \caption{How to perform different checks on pairs of addresses over $\{0,1\}$ using a binary transducer: \textbf{1.)} off-by-one, \textbf{2.)} equality, \textbf{3.)} whether one address encodes the maximum recursion depth $\exp^2(k)$, which for addresses of length $\lceil\log\big(\exp^2(k) + 1\big)\rceil$ is equivalent to checking membership in $10^*$.}
  \label{fig:depthChecks}
\end{figure}
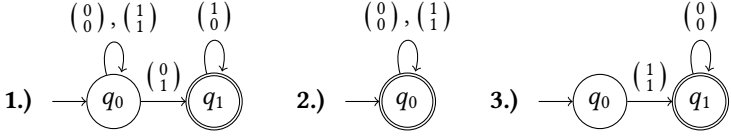

Since it is clear from the $u$-parts, whether the recursion depths should all match or not and whether we need to check for maximum recursion depth, we can just connect the correct $u$-branch to the appropriate checks on $v$ in the transducer, which then lead to a final state. Of course we only connect those $u$-branches that actually correspond to transitions in $\cE(\cN)$, which is feasible since we distinguish all possibilities.

The branches for the $u$-parts in the transducers are of length $\lceil\log{h}\rceil$ and need to distinguish between up to $2^{3\lceil\log{h}\rceil} \leq 2^{3(\log{h} + 1)} = 8h^3$ possibilities, meaning they require polynomially in $h$ many states. Since $\TDPN$s and by extension $\sTDPN$s only consider addresses of a fixed length $k'$ by definition, the checks on the $v$-parts can be performed by making use of cycles in the transducers and require only a constant number of states (see \cref{fig:depthChecks}).
Since $h$ is linear in the number of commands in $R$, the transducers are of polynomial size compared to $R$. We choose $k' = \lceil\log{h}\rceil + \lceil\log\big(\exp^2(k) + 1\big)\rceil$, which is exponential in $k + h$. With $k'$ being encoded in binary this means $\mathcal{N}$ is of polynomial size compared to $R$. Furthermore, the construction is also feasible in polynomial time, because we do not need to construct the whole Petri net $\cE(\mathcal{N})$ first, one copy of each of the $h$ places (for only a single recursion depth) suffices. Thus, the coverability problem for $\sTDPN$ is $\THREEEXPSPACE$-hard.

\subsection{Reduction to $\DCPS$}

\subsubsection*{Idea} Given an $\sTDPN$ $\cN = (k,$ $T_\mathit{move},$ $T_\mathit{fork},$ $T_\mathit{join})$, we want capture every possible marking of the Petri net $\cE(\cN)$ as a configuration of the $\DCPS$ $\cD$ that we construct. Note that since there is no bound on the thread pool of $\cD$ we always have new thread available to execute any given task.

The general idea for this is essentially the same as in \cite{BaumannMajumdarThinniyamZetzsche2020a}: For a marking $\mmap$ of $\cE(\cN)$ and each place $w \in \Sigma^k$ the corresponding configuration of $\cD$ contains exactly $\mmap(w)$ many tasks that all have stack content $w$. However, there are several challenges not handled by the previous approach, all stemming from the fact that we now have exponentially big stack contents, but only polynomially many global states:

Firstly, we need to initially create one task that has the stack content $w_\mathit{init} = 0^{k}$. This is done by using the stack to simulate a context-free grammar of polynomial size that produces just this one word. Secondly, to properly reduce to context-bounded reachability with global state $\mathit{halt}$ as input, we should be able to reach this state iff a task with stack content $w_\mathit{final} = 10^{k-1}$ can occur. For this we just make sure that stack contents always have exactly length $k$, so that this reduces to checking membership in the regular language $10^*$, for which we can just encode a finite automaton in the global states.

Finally, we need to simulate the transitions defined by the three transducers. To this end, we encode all three of them in the global states, and whenever one of their transitions would be fired, we simulate their corresponding run step-by-step. For each step, we context switch to the 1-2 tasks corresponding to the transition inputs and pop a symbol each, then we context switch to 1-2 newly created tasks corresponding to the outputs and push a symbol each. After $k$ context switches per task we have then emptied the stacks of the input tasks and built-up the stacks of the output tasks. Since we need to be able to use each task for both output at first, and input later, our context switch bound is $2k$, which is of bit-length polynomial in the size of $\cN$.

\subsubsection*{Initialization}

We construct a part of our $\DCPS$ $\cD$ that initially creates a task with stack content $0^k\bot$ and context switch number $\leq k$. With our bound of $2k$, this context switch number makes sure that we still have one switch per stack symbol remaining. This does not include the symbol $\bot$, which just serves as a marker for the bottom of the stack.

We define $(\mathit{init},0)$ and $\bot$ to be the initial state and initial stack symbol of $\cD$, respectively. Let $a_0\ldots a_{\ell-1} \in \{0,1\}^\ell$ be the binary, most significant bit first encoding of $k$. At first, $\cD$ creates the stack content $\tilde{B}_{\ell-1}\ldots\tilde{B}_0\bot$ for the initial task, where $\tilde{B}_i = B_i$ if $a_i = 1$ and $\tilde{B}_i = \varepsilon$ otherwise. This uses the following transition rules:
\newcounter{saveEnum}
\begin{enumerate}
  \item $(\mathit{B},0) \mapsto (\mathit{B},0) \lhd \bot \in E_\mathsf{r}$,
  \item $(\mathit{B},i)|\gamma \hookrightarrow (\mathit{B},i+1)|\tilde{B}_i\gamma \in E_\mathsf{c}$ for all $i \in \{0,\ldots,\ell-2\}$, and all $\gamma \in \{B_0,\ldots,B_{\ell-1},\bot\}$, and
  \item $(\mathit{B},\ell-1)|\gamma \hookrightarrow (\mathit{init},\bar{0})|\tilde{B}_{\ell-1}\gamma \triangleright \bot \in E_\mathsf{c}$. \label{rule:createWinit}
  \setcounter{saveEnum}{\value{enumi}}
\end{enumerate}
Rule (\ref{rule:createWinit}) creates a new task with symbol $\bot$, and this task will be used to build the stack content of $0^k\bot$. To this end, each $B_i$ on the stack of the initial task represents $2^i$ many $0$s, for a total of $k$ many, that all need to be pushed to the new task. We facilitate this by introducing rules that replace each $B_{i+1}$ by $B_iB_i$ on the stack, with the rule for $B_0$ replacing it by just $0$. This is essentially simulating a context free grammar with nonterminals $B_0$ to $B_{\ell-1}$. Furthermore, there are rules to transfer a single $0$ from one task's stack to the other, and then switching back. Finally, the initial task needs to terminate upon reaching~$\bot$:
\begin{enumerate}
  \setcounter{enumi}{\value{saveEnum}}
  \item $(\mathit{init},\bar{0})|B_i \hookrightarrow (\mathit{init},\bar{0})|B_{i+1}B_{i+1} \in E_\mathsf{c}$ for all $i \in \{1,\ldots,\ell-1\}$, \label{rule:grammar1}
  \item $(\mathit{init},\bar{0})|B_0 \hookrightarrow (\mathit{init},\bar{0})|0 \in E_\mathsf{c}$, \label{rule:grammar2}
  \item $(\mathit{init},\bar{0})|0 \hookrightarrow (\mathit{init},\varepsilon)|\varepsilon \in E_\mathsf{c}$, \label{rule:pop0}
  \item $(\mathit{init},\varepsilon)|\gamma \mapsto (\mathit{init},0)|\gamma \in E_\mathsf{i}$ for all $\gamma \in \{B_0,\ldots,B_{\ell-1}\}$,
  \item $(\mathit{init},0) \mapsto (\mathit{init},0) \lhd \gamma \in E_\mathsf{r}$ for all $\gamma \in \{0,\bot\}$,
  \item $(\mathit{init},0)|\gamma \mapsto (\mathit{init},\bar{0})|0\gamma \in E_\mathsf{i}$ for all $\gamma \in \{0,\bot\}$,
  \item $(\mathit{init},\bar{0}) \mapsto (\mathit{init},\bar{0}) \lhd \gamma \in E_\mathsf{r}$ for all $\gamma \in \{B_0,\ldots,B_{\ell-1}\}$, \label{rule:switchBack}
  \item $(\mathit{init},\varepsilon)|\bot \hookrightarrow (\mathit{init},\varepsilon)|\varepsilon \in E_\mathsf{c}$, \label{rule:popBot}
  \item $(\mathit{init},\varepsilon) \mapsto (\mathit{main},0) \in E_\mathsf{t}$,
  \item $(\mathit{main},0) \mapsto (\mathit{main},0) \lhd \gamma \in E_\mathsf{r}$ for all $\gamma \in \{0,\bot\}$, and
  \item $(\mathit{main},0)|\gamma \hookrightarrow \mathit{main}|0\gamma \in E_\mathsf{c}$ for all $\gamma \in \{0,\bot\}$. \label{rule:finishInit}
  \setcounter{saveEnum}{\value{enumi}}
\end{enumerate}
Rules (\ref{rule:grammar1}-\ref{rule:grammar2}) replace the $B_i$ on the stack, (\ref{rule:pop0}-\ref{rule:switchBack}) transfer a $0$ from the initial to the new task and then switch back to the former, and (\ref{rule:popBot}-\ref{rule:finishInit}) terminate the initial task and push the last $0$ to the new task. The resume rules correctly distinguish between the two tasks by choosing the one with a $B_i$ on top or the one with a $0$ or $\bot$. Upon finishing this process, $\cD$ ends up in a state $\mathit{main}$, from which the rest of the simulation continues.

\subsubsection*{Simulating Transducer-Moves}

A transducer-move removes a token from up to two places and additionally puts a token on up to two places. We call the former its inputs and the latter its outputs. To simulate this our $\DCPS$ $\cD$ first creates as many new tasks as there are output tokens, which are initially marked with $\bot_\mathit{out}$ and $\widetilde{\bot}_\mathit{out}$, respectively. Then it switches between all tasks for the input tokens and pops a symbol each to figure out a suitable transition in the transducer. With such a transition found it then switches between all tasks for the output tokens and pushes the corresponding symbols. This repeats until the input token tasks have emptied their stack. Since those stacks initially contained $k$ symbols, this uses exactly $k$ context switches for each involved task. The tasks involved in this procedure are marked with $\top_\mathit{in}$, $\widetilde{\top}_\mathit{in}$, $\top_\mathit{out}$ and $\widetilde{\top}_\mathit{out}$, respectively on top of their stacks to correctly identify them throughout. Finally, the current state in the run of the transducer is kept in the $\cD$'s global state to check that a final transducer state is reached at the end (otherwise the $\DCPS$ does not continue).

Let $T_\mathit{move} = ( Q_\mathit{move}, \{0,1\}, q_{\mathit{move}0} , F_\mathit{move}, E_\mathit{move} )$. The following transition rules simulate a transducer move involving $T_\mathit{move}$:
\begin{enumerate}
  \item $\mathit{main} \mapsto \mathit{main} \lhd \gamma \in E_\mathsf{r}$ for all $\gamma \in \{0,1\}$, \label{rule:maincs}
  \item $\mathit{main}|\gamma \hookrightarrow (q_{\mathit{move}0},\#,\#)|\top_\mathit{in}\gamma \triangleright \bot_\mathit{out} \in E_\mathsf{c}$ for all $\gamma \in \{0,1\}$, \label{rule:move1}
  \item $(q,\#,\#)|\top_\mathit{in} \hookrightarrow (q,\#,\#)|\varepsilon \in E_\mathsf{c}$ for all $q \in Q_\mathit{move}$, \label{rule:moveRead1}
  \item $(q,\#,\#)|\gamma \hookrightarrow (q,\gamma,\#)|\varepsilon \in E_\mathsf{c}$ for all $q \in Q_\mathit{move}$, and all $\gamma \in \{0,1\}$, \label{rule:moveRead2}
  \item $(q,\gamma_1,\#)|\gamma_2 \mapsto (q,\gamma_1,\#)|\top_\mathit{in}\gamma_2 \in E_\mathsf{i}$ for all $q \in Q_\mathit{move}$, and all $\gamma_1,\gamma_2 \in \{0,1\}$, \label{rule:moveNotBot}
  \item $(q,\gamma_1,\#) \mapsto (q',\#,\gamma_2) \lhd \bot_\mathit{out} \in E_\mathsf{r}$ for all $(q,\gamma_1,\gamma_2,q') \in E_\mathit{move}$, \label{rule:movePick1}
  \item $(q,\gamma_1,\#) \mapsto (q',\#,\gamma_2) \lhd \top_\mathit{out} \in E_\mathsf{r}$ for all $(q,\gamma_1,\gamma_2,q') \in E_\mathit{move}$, \label{rule:movePick2}
  \item $(q,\#,\gamma)|\bot_\mathit{out} \hookrightarrow (q,\#,\gamma)|\top_\mathit{out}\bot \in E_\mathsf{c}$ for all $q \in Q_\mathit{move}$, and all $\gamma \in \{0,1\}$, \label{rule:moveMarkOut}
  \item $(q,\#,\gamma)|\top_\mathit{out} \hookrightarrow (q,\#,\#)|\top_\mathit{out}\gamma \in E_\mathsf{c}$ for all $q \in Q_\mathit{move}$, and all $\gamma \in \{0,1\}$, \label{rule:moveWrite}
  \item $(q,\#,\#)|\top_\mathit{out} \mapsto (q,\#,\#)|\top_\mathit{out} \in E_\mathsf{i}$ for all $q \in Q_\mathit{move}$, \label{rule:moveBack1}
  \item $(q,\#,\#) \mapsto (q,\#,\#) \lhd \top_\mathit{in} \in E_\mathsf{r}$ for all $q \in Q_\mathit{move}$,  \label{rule:moveBack2}
  \item $(q,\gamma,\#)|\bot \hookrightarrow (\mathit{end},q,\gamma,\#)|\varepsilon \in E_\mathsf{c}$ for all $q \in Q_\mathit{move}$, and all $\gamma \in \{0,1\}$, \label{rule:moveEnd1}
  \item $(\mathit{end},q,\gamma,\#) \mapsto (\mathit{end},q,\gamma,\#) \in E_\mathsf{t}$ for all $q \in Q_\mathit{move}$, and all $\gamma \in \{0,1\}$, \label{rule:moveEnd2}
  \item $(\mathit{end},q,\gamma_1,\#) \mapsto (\mathit{end},q_f,\#,\gamma_2) \lhd \top_\mathit{out} \in E_\mathsf{r}$ for all $(q,\gamma_1,\gamma_2,q_f) \in E_\mathit{move}$ with $q_f \in F_\mathit{move}$, \label{rule:moveEnd3}
  \item $(\mathit{end},q_f,\#,\gamma)|\top_\mathit{out} \hookrightarrow (\mathit{end},q_f,\#,\#)|\top_\mathit{out}\gamma \in E_\mathsf{c}$ for all $q_f \in F_\mathit{move}$, and all $\gamma \in \{0,1\}$, and \label{rule:moveEnd4}
  \item $(\mathit{end},q_f,\#,\#)|\top_\mathit{out} \mapsto \mathit{rev}1|\top_\mathit{in} \in E_\mathsf{i}$ for all $q_f \in F_\mathit{move}$. \label{rule:moveFinal}
\end{enumerate}
Rule (\ref{rule:maincs}) is shared for all three transducers and makes sure that we begin in $\mathit{main}$ with an active task, (\ref{rule:move1}) starts the simulation for $T_\mathit{move}$ in $q_{\mathit{move}0}$ as well as marking the input token task with $\top_\mathit{in}$ and spawning the output token task, (\ref{rule:moveRead1}-\ref{rule:moveRead2}) remove a symbol from the input token task and save it in the global state, (\ref{rule:moveNotBot}) switches out the input token task if its stack is not yet empty, (\ref{rule:movePick1}-\ref{rule:movePick2}) simulate a suitable transition of $E_\mathit{move}$ in the global state and switch to the output token task, (\ref{rule:moveMarkOut}) marks the output token task with $\top_\mathit{out}$ and gives it a proper bottom-of-stack symbol $\bot$, (\ref{rule:moveWrite}) transfers a symbol from the global state to the output token task, (\ref{rule:moveBack1}-\ref{rule:moveBack2}) switch back to the input token task, (\ref{rule:moveEnd1}-\ref{rule:moveEnd2}) begin the end of the simulation upon reaching the bottom of the stack for the input token task and terminating said task, (\ref{rule:moveEnd3}-\ref{rule:moveEnd4}) check that the simulated run of $T_\mathit{move}$ has reached a final state and transfer the last symbol to the output token task, and (\ref{rule:moveFinal}) ends the simulation for $T_\mathit{move}$ by moving to global state $\mathit{rev1}$ and marking the output token task with now $\top_\mathit{in}$ on top of the stack.

During the simulation for $T_\mathit{move}$ we read symbols top-down from the input token task's stack, but write them bottom up to the output token task's stack. This means that in the end the order of symbols is the reverse of what it should be, which is why the global state $\mathit{rev1}$ is used to signify that one stack needs to be reversed. Because of this, we also used the marker $\top_\mathit{in}$ at the end, as the corresponding task will be used as input for the reversing process. We consider said process another step in the overall $\TDPN$-simulation and explain it below. For the moment, we continue with the simulation of the transducer-moves involving $T_\mathit{fork}$ and $T_\mathit{join}$.

Let $T_\mathit{fork} = ( Q_\mathit{fork}, \{0,1\}, q_{\mathit{fork}0} , F_\mathit{fork}, E_\mathit{fork} )$. To not cause any overlap with $T_\mathit{move}$, we assume $Q_\mathit{move} \cap Q_\mathit{fork} = \emptyset$. The following transition rules simulate a transducer move involving $T_\mathit{fork}$:
\begin{enumerate}
  \setcounter{enumi}{1}
  \item $\mathit{main}|\gamma \hookrightarrow (\mathit{start},q_{\mathit{fork}0},\#,\#,\#)|\top_\mathit{in}\gamma \triangleright \bot_\mathit{out} \in E_\mathsf{c}$,
  \item $(\mathit{start},q_{\mathit{fork}0},\#,\#,\#)|\top_\mathit{in} \hookrightarrow (q_{\mathit{fork}0},\#,\#,\#)|\top_\mathit{in} \triangleright \widetilde{\bot}_\mathit{out} \in E_\mathsf{c}$,
  \item $(q,\#,\#,\#)|\top_\mathit{in} \hookrightarrow (q,\#,\#,\#)|\varepsilon \in E_\mathsf{c}$ for all $q \in Q_\mathit{fork}$,
  \item $(q,\#,\#,\#)|\gamma \hookrightarrow (q,\gamma,\#,\#)|\varepsilon \in E_\mathsf{c}$ for all $q \in Q_\mathit{fork}$, and all $\gamma \in \{0,1\}$,
  \item $(q,\gamma_1,\#,\#)|\gamma_2 \mapsto (q,\gamma_1,\#,\#)|\top_\mathit{in}\gamma_2 \in E_\mathsf{i}$ for all $q \in Q_\mathit{fork}$, and all $\gamma_1,\gamma_2 \in \{0,1\}$,
  \item $(q,\gamma_1,\#,\#) \mapsto (q',\#,\gamma_2,\gamma_3) \lhd \bot_\mathit{out} \in E_\mathsf{r}$ for all $(q,\gamma_1,\gamma_2,\gamma_3,q') \in E_\mathit{fork}$,
  \item $(q,\gamma_1,\#,\#) \mapsto (q',\#,\gamma_2,\gamma_3) \lhd \top_\mathit{out} \in E_\mathsf{r}$ for all $(q,\gamma_1,\gamma_2,\gamma_3,q') \in E_\mathit{fork}$,
  \item $(q,\#,\gamma_2,\gamma_3)|\bot_\mathit{out} \hookrightarrow (q,\#,\gamma_2,\gamma_3)|\top_\mathit{out}\bot \in E_\mathsf{c}$ for all $q \in Q_\mathit{fork}$, and all $\gamma_2,\gamma_3 \in \{0,1\}$,
  \item $(q,\#,\gamma_2,\gamma_3)|\top_\mathit{out} \hookrightarrow (q,\#,\#,\gamma_3)|\top_\mathit{out}\gamma_2 \in E_\mathsf{c}$ for all $q \in Q_\mathit{fork}$, and all $\gamma_2,\gamma_3 \in \{0,1\}$,
  \item $(q,\#,\#,\gamma)|\top_\mathit{out} \mapsto (q,\#,\#,\gamma)|\top_\mathit{out} \in E_\mathsf{i}$ for all $q \in Q_\mathit{fork}$, and all $\gamma \in \{0,1\}$,
  \item $(q,\#,\#,\gamma) \mapsto (q,\#,\#,\gamma) \lhd \widetilde{\bot}_\mathit{out} \in E_\mathsf{r}$ for all $q \in Q_\mathit{fork}$, and all $\gamma \in \{0,1\}$,
  \item $(q,\#,\#,\gamma) \mapsto (q,\#,\#,\gamma) \lhd \widetilde{\top}_\mathit{out} \in E_\mathsf{r}$ for all $q \in Q_\mathit{fork}$, and all $\gamma \in \{0,1\}$,
  \item $(q,\#,\#,\gamma)|\widetilde{\bot}_\mathit{out} \hookrightarrow (q,\#,\#,\gamma)|\widetilde{\top}_\mathit{out}\bot \in E_\mathsf{c}$ for all $q \in Q_\mathit{fork}$, and all $\gamma \in \{0,1\}$,
  \item $(q,\#,\#,\gamma)|\widetilde{\top}_\mathit{out} \hookrightarrow (q,\#,\#,\#)|\widetilde{\top}_\mathit{out}\gamma \in E_\mathsf{c}$ for all $q \in Q_\mathit{fork}$, and all $\gamma \in \{0,1\}$,
  \item $(q,\#,\#,\#)|\widetilde{\top}_\mathit{out} \mapsto (q,\#,\#,\#)|\widetilde{\top}_\mathit{out} \in E_\mathsf{i}$ for all $q \in Q_\mathit{fork}$,
  \item $(q,\#,\#,\#) \mapsto (q,\#,\#,\#) \lhd \top_\mathit{in} \in E_\mathsf{r}$ for all $q \in Q_\mathit{fork}$,
  \item $(q,\gamma,\#,\#)|\bot \hookrightarrow (\mathit{end},q,\gamma,\#,\#)|\varepsilon \in E_\mathsf{c}$ for all $q \in Q_\mathit{fork}$, and all $\gamma \in \{0,1\}$,
  \item $(\mathit{end},q,\gamma,\#,\#) \mapsto (\mathit{end},q,\gamma,\#,\#) \in E_\mathsf{t}$ for all $q \in Q_\mathit{fork}$, and all $\gamma \in \{0,1\}$,
  \item $(\mathit{end},q,\gamma_1,\#,\#) \mapsto (\mathit{end},q_f,\#,\gamma_2,\gamma_3) \lhd \top_\mathit{out} \in E_\mathsf{r}$ for all $(q,\gamma_1,\gamma_2,\gamma_3,q_f) \in E_\mathit{fork}$ with $q_f \in F_\mathit{fork}$,
  \item $(\mathit{end},q_f,\#,\gamma_2,\gamma_3)|\top_\mathit{out} \hookrightarrow (\mathit{end},q_f,\#,\#,\gamma_3)|\top_\mathit{out}\gamma_2 \in E_\mathsf{c}$ for all $q_f \in F_\mathit{fork}$, and all $\gamma_2,\gamma_3 \in \{0,1\}$,
  \item $(\mathit{end},q_f,\#,\#,\gamma)|\top_\mathit{out} \mapsto (\mathit{end},q_f,\#,\#,\gamma)|\top_\mathit{in} \in E_\mathsf{i}$ for all $q_f \in F_\mathit{fork}$, and all $\gamma \in \{0,1\}$,
  \item $(\mathit{end},q_f,\#,\#,\gamma) \mapsto (\mathit{end},q_f,\#,\#,\gamma) \lhd \widetilde{\top}_\mathit{out} \in E_\mathsf{r}$ for all $q_f \in F_\mathit{fork}$, and all $\gamma \in \{0,1\}$,
  \item $(\mathit{end},q_f,\#,\#,\gamma)|\widetilde{\top}_\mathit{out} \hookrightarrow (\mathit{end},q_f,\#,\#,\#)|\widetilde{\top}_\mathit{out}\gamma \in E_\mathsf{c}$ for all $q_f \in F_\mathit{fork}$, and all $\gamma \in \{0,1\}$, and
  \item $(\mathit{end},q_f,\#,\#,\#)|\widetilde{\top}_\mathit{out} \mapsto \mathit{rev}2|\widetilde{\top}_\mathit{in} \in E_\mathsf{i}$ for all $q_f \in F_\mathit{fork}$.
\end{enumerate}
Rule (\ref{rule:maincs}) is shared with $T_\mathit{move}$ and therefore does not need to be defined again. The main difference to the simulation for $T_\mathit{move}$ is that we need to spawn an additional output token task (marked with $\widetilde{\bot}_\mathit{out}$ and later $\widetilde{\top}_\mathit{out}$), which we also need to switch to and transfer a symbol to for each $E_\mathit{fork}$-transition that we simulate. We use $\mathit{start}$ in the global state to remember that this task still needs to be spawned at the beginning. Having two output token tasks also means that at the end there are two tasks marked with $\top_\mathit{in}$ and $\widetilde{\top}_\mathit{in}$, respectively, whose stacks need to be reversed. We use global state $\mathit{rev}2$ to signify that the reversing process needs to be performed twice.

Let $T_\mathit{join} = ( Q_\mathit{join}, \{0,1\}, q_{\mathit{join}0} , F_\mathit{join}, E_\mathit{join} )$. To not cause any overlap with the other transducers, we assume $Q_\mathit{move} \cap Q_\mathit{join} = \emptyset = Q_\mathit{fork} \cap Q_\mathit{join}$. The following transition rules simulate a transducer move involving $T_\mathit{join}$:
\begin{enumerate}
  \setcounter{enumi}{1}
  \item $\mathit{main}|\gamma \hookrightarrow (\mathit{start},q_{\mathit{join}0},\gamma,\#,\#)|\top_\mathit{in} \triangleright \bot_\mathit{out} \in E_\mathsf{c}$ for all $\gamma \in \{0,1\}$,
  \item $(\mathit{start},q_{\mathit{join}0},\gamma,\#,\#)|\top_\mathit{in} \mapsto (\mathit{start},q_{\mathit{join}0},\gamma,\#,\#)|\top_\mathit{in} \in E_\mathsf{i}$ for all $\gamma \in \{0,1\}$,
  \item $(\mathit{start},q_{\mathit{join}0},\gamma_1,\#,\#) \mapsto (\mathit{start},q_{\mathit{join}0},\gamma_1,\#,\#) \lhd \gamma_2 \in E_\mathsf{r}$ for all $\gamma_1,\gamma_2 \in \{0,1\}$, \label{rule:joinPickInput2}
  \item $(\mathit{start},q_{\mathit{join}0},\gamma_1,\#,\#)|\gamma_2 \hookrightarrow (q_{\mathit{join}0},\gamma_1,\#,\#)|\widetilde{\top}_\mathit{in} \in E_\mathsf{c}$ for all $\gamma_1,\gamma_2 \in \{0,1\}$,
  \item $(q,\#,\#,\#)|\top_\mathit{in} \hookrightarrow (q,\#,\#,\#)|\varepsilon \in E_\mathsf{c}$ for all $q \in Q_\mathit{join}$,
  \item $(q,\#,\#,\#)|\gamma \hookrightarrow (q,\gamma,\#,\#)|\varepsilon \in E_\mathsf{c}$ for all $q \in Q_\mathit{join}$, and all $\gamma \in \{0,1\}$,
  \item $(q,\gamma_1,\#,\#)|\gamma_2 \mapsto (\mathit{next},q,\gamma_1,\#,\#)|\top_\mathit{in}\gamma_2 \in E_\mathsf{i}$ for all $q \in Q_\mathit{join}$, and all $\gamma_1,\gamma_2 \in \{0,1\}$, \label{rule:joinNext}
  \item $(\mathit{next},q,\gamma,\#,\#) \mapsto (\mathit{next},q,\gamma,\#,\#) \lhd \widetilde{\top}_\mathit{in} \in E_\mathsf{r}$ for all $q \in Q_\mathit{join}$, and all $\gamma \in \{0,1\}$,
  \item $(\mathit{next},q,\gamma,\#,\#)|\widetilde{\top}_\mathit{in} \hookrightarrow (\mathit{next},q,\gamma,\#,\#)|\varepsilon \in E_\mathsf{c}$ for all $q \in Q_\mathit{join}$, and all $\gamma \in \{0,1\}$,
  \item $(\mathit{next},q,\gamma_1,\#,\#)|\gamma_2 \hookrightarrow (q,\gamma_1,\gamma_2,\#)|\varepsilon \in E_\mathsf{c}$ for all $q \in Q_\mathit{join}$, and all $\gamma_1,\gamma_2 \in \{0,1\}$, \label{rule:joinRead2}
  \item $(q,\gamma_1,\gamma_2,\#)|\gamma_3 \mapsto (q,\gamma_1,\gamma_2,\#)|\widetilde{\top}_\mathit{in}\gamma_3 \in E_\mathsf{i}$ for all $q \in Q_\mathit{join}$, and all $\gamma_1,\gamma_2,\gamma_3 \in \{0,1\}$,
  \item $(q,\gamma_1,\gamma_2,\#) \mapsto (q',\#,\#,\gamma_3) \lhd \bot_\mathit{out} \in E_\mathsf{r}$ for all $(q,\gamma_1,\gamma_2,\gamma_3,q') \in E_\mathit{join}$,
  \item $(q,\gamma_1,\gamma_2,\#) \mapsto (q',\#,\#,\gamma_3) \lhd \top_\mathit{out} \in E_\mathsf{r}$ for all $(q,\gamma_1,\gamma_2,\gamma_3,q') \in E_\mathit{join}$,
  \item $(q,\#,\#,\gamma)|\bot_\mathit{out} \hookrightarrow (q,\#,\#,\gamma)|\top_\mathit{out}\bot \in E_\mathsf{c}$ for all $q \in Q_\mathit{join}$, and all $\gamma \in \{0,1\}$,
  \item $(q,\#,\#,\gamma)|\top_\mathit{out} \hookrightarrow (q,\#,\#,\#)|\top_\mathit{out}\gamma \in E_\mathsf{c}$ for all $q \in Q_\mathit{join}$, and all $\gamma \in \{0,1\}$,
  \item $(q,\#,\#,\#)|\top_\mathit{out} \mapsto (q,\#,\#,\#)|\top_\mathit{out} \in E_\mathsf{i}$ for all $q \in Q_\mathit{join}$,
  \item $(q,\#,\#,\#) \mapsto (q,\#,\#,\#) \lhd \top_\mathit{in} \in E_\mathsf{r}$ for all $q \in Q_\mathit{join}$,
  \item $(q,\gamma,\#,\#)|\bot \hookrightarrow (\mathit{end},q,\gamma,\#,\#)|\varepsilon \in E_\mathsf{c}$ for all $q \in Q_\mathit{join}$, and all $\gamma \in \{0,1\}$,
  \item $(\mathit{end},q,\gamma,\#,\#) \mapsto (\mathit{end},q,\gamma,\#,\#) \in E_\mathsf{t}$ for all $q \in Q_\mathit{join}$, and all $\gamma \in \{0,1\}$,
  \item $(\mathit{end},q,\gamma,\#,\#) \mapsto (\mathit{end},q,\gamma,\#,\#) \lhd \widetilde{\top}_\mathit{in} \in E_\mathsf{r}$ for all $q \in Q_\mathit{join}$, and all $\gamma \in \{0,1\}$,
  \item $(\mathit{end},q,\gamma,\#,\#)|\widetilde{\top}_\mathit{in} \hookrightarrow (\mathit{end},q,\gamma,\#,\#)|\varepsilon \in E_\mathsf{c}$ for all $q \in Q_\mathit{join}$, and all $\gamma \in \{0,1\}$,
  \item $(\mathit{end},q,\gamma_1,\#,\#)|\gamma_2 \hookrightarrow (\mathit{end},q,\gamma_1,\gamma_2,\#)|\varepsilon \in E_\mathsf{c}$ for all $q \in Q_\mathit{join}$, and all $\gamma_1,\gamma_2 \in \{0,1\}$,
  \item $(\mathit{end},q,\gamma_1,\gamma_2,\#)|\bot \hookrightarrow (\mathit{end},q,\gamma_1,\gamma_2,\#)|\varepsilon \in E_\mathsf{c}$ for all $q \in Q_\mathit{join}$, and all $\gamma_1,\gamma_2 \in \{0,1\}$,
  \item $(\mathit{end},q,\gamma_1,\gamma_2,\#) \mapsto (\mathit{end},q,\gamma_1,\gamma_2,\#) \in E_\mathsf{t}$ for all $q \in Q_\mathit{join}$, and all $\gamma \in \{0,1\}$,
  \item $(\mathit{end},q,\gamma_1,\gamma_2,\#) \mapsto (\mathit{end},q_f,\#,\#,\gamma_3) \lhd \top_\mathit{out} \in E_\mathsf{r}$ for all $(q,\gamma_1,\gamma_2,\gamma_3,q_f) \in E_\mathit{join}$ with $q_f \in F_\mathit{join}$,
  \item $(\mathit{end},q_f,\#,\#,\gamma)|\top_\mathit{out} \hookrightarrow (\mathit{end},q_f,\#,\#,\#)|\top_\mathit{out}\gamma \in E_\mathsf{c}$ for all $q_f \in F_\mathit{join}$, and all $\gamma \in \{0,1\}$, and
  \item $(\mathit{end},q_f,\#,\#,\#)|\top_\mathit{out} \mapsto \mathit{rev}1|\top_\mathit{in} \in E_\mathsf{i}$ for all $q_f \in F_\mathit{join}$.
\end{enumerate}
Again, Rule (\ref{rule:maincs}) has already been defined. Since $T_\mathit{join}$ defines transitions with two input tokens, we need to mark two tasks with $\top_\mathit{in}$ and $\widetilde{\top}_\mathit{in}$ in the beginning. Our $\DCPS$ remembers that a second input task needs to be marked via the $\mathit{start}$ in the global state, while it already reads the first symbol on the stack of the first input token task, to make sure that each context switch corresponds to one symbol read or written. The $\mathit{next}$ in the global state is to remember that the first input task has been already handled, which makes sure that rules (\ref{rule:joinNext}) and (\ref{rule:joinRead2}) do not overlap. Other than that this part of the simulation is very similar to the one for $T_\mathit{move}$.

Our simulation ensures that each context switch of a task corresponds to one symbol popped or pushed. In general our tasks experience $k$ context switches when used as input tokens, and another $k$ when used as output tokens later. This results in $2k$ context switches in total, which is exactly our bound. Furthermore all transducer-move simulations assume that $k \geq 2$, which is most apparent in that the final part of each simulation ($\mathit{end}$ in the global state) assumes that the marker $\bot_\mathit{out}$ has already been replaced by $\top_\mathit{out}$ for the first output token task. It is however easy to see that the simulation for $k = 1$ is simple, as the entire transducer move then consists of only $3$ symbols an can therefore be kept in the global state.

\subsubsection*{Reversing Stacks}

The simulation of transducer-moves results in tasks, whose stack contents are reversed. By this we mean that each resulting task is supposed to correspond to a place $w = a_1\ldots a_k \in \Sigma^k$, but it has stack content $a_k\ldots a_1$, which is the reverse of $w$. To remedy this, we can transfer the stack symbols of such a task to a newly spawned task one by one, which reverses the stack content. We call the former the input task of the reversing process, and the latter the output task. After the transducer-move simulation the input task is at $k$ context switches and the reversing process requires another $k$, one per stack symbol, which is in line with our bound of $2k$. Similarly the newly spawned output task will also experience $k$ context switches, meaning it can still context switch $k$ more times, which is exactly the amount required for it to be used in a transducer-move simulation later.

The following transition rules reverse the stack content of one task, starting in $\mathit{rev}1$:
\begin{enumerate}
  \item $\mathit{rev}1 \mapsto \mathit{rev}1 \lhd \top_\mathit{in} \in E_\mathsf{r}$ for all $\gamma \in \{0,1\}$, \label{rule:rev1Start}
  \item $\mathit{rev}1|\top_\mathit{in} \hookrightarrow (\mathit{rev}1,\#)|\top_\mathit{in} \triangleright \bot_\mathit{out} \in E_\mathsf{c}$, \label{rule:rev1Spawn}
  \item $(\mathit{rev}1,\#)|\top_\mathit{in} \hookrightarrow (\mathit{rev}1,\#)|\varepsilon \in E_\mathsf{c}$, \label{rule:rev1Read1}
  \item $(\mathit{rev}1,\#)|\gamma \hookrightarrow (\mathit{rev}1,\gamma)|\varepsilon \in E_\mathsf{c}$ for all $\gamma \in \{0,1\}$, \label{rule:rev1Read2}
  \item $(\mathit{rev}1,\gamma_1)|\gamma_2 \mapsto (\mathit{rev}1,\gamma_1)|\top_\mathit{in}\gamma_2 \in E_\mathsf{i}$ for all $\gamma_1,\gamma_2 \in \{0,1\}$, \label{rule:rev1NotBot}
  \item $(\mathit{rev}1,\gamma) \mapsto (\mathit{rev}1,\gamma) \lhd \bot_\mathit{out} \in E_\mathsf{r}$ for all $\gamma \in \{0,1\}$, \label{rule:rev1Switch1}
  \item $(\mathit{rev}1,\gamma) \mapsto (\mathit{rev}1,\gamma) \lhd \top_\mathit{out} \in E_\mathsf{r}$ for all $\gamma \in \{0,1\}$, \label{rule:rev1Switch2}
  \item $(\mathit{rev}1,\gamma)|\bot_\mathit{out} \hookrightarrow (\mathit{rev}1,\gamma)|\top_\mathit{out}\bot \in E_\mathsf{c}$ for all $\gamma \in \{0,1\}$, \label{rule:rev1MarkOut}
  \item $(\mathit{rev}1,\gamma)|\top_\mathit{out} \hookrightarrow (\mathit{rev}1,\#)|\top_\mathit{out}\gamma \in E_\mathsf{c}$ for all $\gamma \in \{0,1\}$, \label{rule:rev1Write}
  \item $(\mathit{rev}1,\#)|\top_\mathit{out} \mapsto (\mathit{rev}1,\#)|\top_\mathit{out} \in E_\mathsf{i}$, \label{rule:rev1Back1}
  \item $(\mathit{rev}1,\#) \mapsto (\mathit{rev}1,\#) \lhd \top_\mathit{in} \in E_\mathsf{r}$ for all $q \in Q_\mathit{fork}$, \label{rule:rev1Back2}
  \item $(\mathit{rev}1,\gamma)|\bot \hookrightarrow (\mathit{end},\mathit{rev}1,\gamma)|\varepsilon \in E_\mathsf{c}$ for all $\gamma \in \{0,1\}$, \label{rule:rev1End1}
  \item $(\mathit{end},\mathit{rev}1,\gamma) \mapsto (\mathit{end},\mathit{rev}1,\gamma) \in E_\mathsf{t}$ for all $\gamma \in \{0,1\}$, \label{rule:rev1End2}
  \item $(\mathit{end},\mathit{rev}1,\gamma) \mapsto (\mathit{end},\mathit{rev}1,\gamma) \lhd \top_\mathit{out} \in E_\mathsf{r}$ for all $\gamma \in \{0,1\}$, \label{rule:rev1End3}
  \item $(\mathit{end},\mathit{rev}1,\gamma)|\top_\mathit{out} \hookrightarrow (\mathit{end},\mathit{rev}1,\#)|\gamma \in E_\mathsf{c}$ for all $\gamma \in \{0,1\}$, \label{rule:rev1End4}
  \item $(\mathit{end},\mathit{rev}1,\#)|\gamma \mapsto \mathit{main}|\gamma \in E_\mathsf{i}$ for all $\gamma \in \{0,1\}$. \label{rule:rev1Final}
\end{enumerate}
Rules (\ref{rule:rev1Start}-\ref{rule:rev1Spawn}) switch to the input task and spawn the output task with $\bot_\mathit{out}$, (\ref{rule:rev1Read1}-\ref{rule:rev1Read2}) remove a symbol from the input task and save it in the global state, (\ref{rule:rev1NotBot}) switches out the input task if its stack is not yet empty, (\ref{rule:rev1Switch1}-\ref{rule:rev1Switch2}) switch to the output task, (\ref{rule:rev1MarkOut}) marks the output task with $\top_\mathit{out}$ and gives it a proper bottom-of-stack symbol $\bot$, (\ref{rule:rev1Write}) transfers a symbol from the global state to the output task, (\ref{rule:rev1Back1}-\ref{rule:rev1Back2}) switch back to the input task, (\ref{rule:moveEnd1}-\ref{rule:moveEnd2}) begin the end of the simulation upon reaching the bottom of the stack for the input token task and terminating said task, (\ref{rule:rev1End1}-\ref{rule:rev1End2}) begin the end of the reversing process upon reaching the bottom of the stack for the input task and terminating said task, (\ref{rule:rev1End3}-\ref{rule:rev1End4}) switch to and transfer the last symbol to the output task, and (\ref{rule:rev1Final}) ends the reversing process by moving to global state $\mathit{main}$ and switching out the output task.

The input task was already marked with $\top_\mathit{in}$ at the end of the transducer-move simulation, and the output task is spawned with $\bot_\mathit{out}$ and later marked with $\top_\mathit{out}$. These markers are used to correctly identify both tasks throughout the reversing process. All in all the transition rules are very similar to the simulation of a transducer-move for $T_\mathit{move}$, except that there is no need to simulate the run of a transducer.

When our $\DCPS$ $\cD$ simulates a transducer-move involving $T_\mathit{join}$ it results in two tasks, marked with $\top_{in}$ and $\widetilde{\top}_{in}$, whose stacks need to be reversed. We define a second set of transition rules starting from $\mathit{rev}2$ to reverse the stack of the latter task. This second reversal process then finishes in state $\mathit{rev}1$, from where the previously defined transition rules are used to reverse the stack of the former task. The rules for the second reversal process are almost the exact same as the ones defined above, with the main difference being that all stack symbols marking the various tasks are now decorated with a $\sim$:
\begin{enumerate}
  \item $\mathit{rev}2 \mapsto \mathit{rev}2 \lhd \widetilde{\top}_\mathit{in} \in E_\mathsf{r}$ for all $\gamma \in \{0,1\}$,
  \item $\mathit{rev}2|\widetilde{\top}_\mathit{in} \hookrightarrow (\mathit{rev}2,\#)|\widetilde{\top}_\mathit{in} \triangleright \widetilde{\bot}_\mathit{out} \in E_\mathsf{c}$,
  \item $(\mathit{rev}2,\#)|\widetilde{\top}_\mathit{in} \hookrightarrow (\mathit{rev}2,\#)|\varepsilon \in E_\mathsf{c}$,
  \item $(\mathit{rev}2,\#)|\gamma \hookrightarrow (\mathit{rev}2,\gamma)|\varepsilon \in E_\mathsf{c}$ for all $\gamma \in \{0,1\}$,
  \item $(\mathit{rev}2,\gamma_1)|\gamma_2 \mapsto (\mathit{rev}2,\gamma_1)|\widetilde{\top}_\mathit{in}\gamma_2 \in E_\mathsf{i}$ for all $\gamma_1,\gamma_2 \in \{0,1\}$,
  \item $(\mathit{rev}2,\gamma) \mapsto (\mathit{rev}2,\gamma) \lhd \widetilde{\bot}_\mathit{out} \in E_\mathsf{r}$ for all $\gamma \in \{0,1\}$,
  \item $(\mathit{rev}2,\gamma) \mapsto (\mathit{rev}2,\gamma) \lhd \widetilde{\top}_\mathit{out} \in E_\mathsf{r}$ for all $\gamma \in \{0,1\}$,
  \item $(\mathit{rev}2,\gamma)|\widetilde{\bot}_\mathit{out} \hookrightarrow (\mathit{rev}2,\gamma)|\widetilde{\top}_\mathit{out}\bot \in E_\mathsf{c}$ for all $\gamma \in \{0,1\}$,
  \item $(\mathit{rev}2,\gamma)|\widetilde{\top}_\mathit{out} \hookrightarrow (\mathit{rev}2,\#)|\widetilde{\top}_\mathit{out}\gamma \in E_\mathsf{c}$ for all $\gamma \in \{0,1\}$,
  \item $(\mathit{rev}2,\#)|\widetilde{\top}_\mathit{out} \mapsto (\mathit{rev}2,\#)|\widetilde{\top}_\mathit{out} \in E_\mathsf{i}$,
  \item $(\mathit{rev}2,\#) \mapsto (\mathit{rev}2,\#) \lhd \widetilde{\top}_\mathit{in} \in E_\mathsf{r}$ for all $q \in Q_\mathit{fork}$,
  \item $(\mathit{rev}2,\gamma)|\bot \hookrightarrow (\mathit{end},\mathit{rev}2,\gamma)|\varepsilon \in E_\mathsf{c}$ for all $\gamma \in \{0,1\}$,
  \item $(\mathit{end},\mathit{rev}2,\gamma) \mapsto (\mathit{end},\mathit{rev}2,\gamma) \in E_\mathsf{t}$ for all $\gamma \in \{0,1\}$,
  \item $(\mathit{end},\mathit{rev}2,\gamma) \mapsto (\mathit{end},\mathit{rev}2,\gamma) \lhd \widetilde{\top}_\mathit{out} \in E_\mathsf{r}$ for all $\gamma \in \{0,1\}$,
  \item $(\mathit{end},\mathit{rev}2,\gamma)|\widetilde{\top}_\mathit{out} \hookrightarrow (\mathit{end},\mathit{rev}2,\#)|\gamma \in E_\mathsf{c}$ for all $\gamma \in \{0,1\}$,
  \item $(\mathit{end},\mathit{rev}2,\#)|\gamma \mapsto \mathit{rev}1|\gamma \in E_\mathsf{i}$ for all $\gamma \in \{0,1\}$.
\end{enumerate}

\subsubsection*{Checking Coverability}

The purpose of our $\DCPS$ $\cD$ is to simulate the $\sTDPN$ $\cN$ and check whether $w_\mathit{final}$ is coverable. To be more specific, we reduce to reachability of the global state $\mathit{halt}$ for $\cD$. This means that the part of $\cD$ that checks whether $w_\mathit{final}$ is covered should move to $\mathit{halt}$ if said check is successful.

Whenever $\cD$ is in global state $\mathit{main}$, it can pick a task to become active (via rule (\ref{rule:maincs}) for simulating a transducer-move involving $T_\mathit{move}$). From there it just needs to check that the stack content is in $10^*$, since $w_\mathit{final} = 10^{k-1}$ and our simulation ensures that all tasks have a stack content of length $k$, not counting the symbol $\bot$ at the bottom. To this end, the check starts with popping a $1$ and then continuously pops $0$s. If $\bot$ is encountered this way, the check was successful and $\cD$ can move to the global state $\mathit{halt}$, ending the whole simulation. This is facilitated by the following transition rules:
\begin{enumerate}
  \item $\mathit{main}|1 \hookrightarrow \mathit{check}|\varepsilon \in E_\mathsf{c}$,
  \item $\mathit{check}|0 \hookrightarrow \mathit{check}|\varepsilon \in E_\mathsf{c}$, and
  \item $\mathit{check}|\bot \hookrightarrow \mathit{halt}|\varepsilon \in E_\mathsf{c}$.
\end{enumerate}

\subsubsection*{Correctness}

From our description of the various steps of the simulation, it is clear that for each run of the $\sTDPN$ $\cN$, there is a run of the $\DCPS$ $\cD$ that corresponds to it in the desired way.

For the other direction, note that most of our simulation steps are deterministic. We made sure that for the majority of global states and top of stack symbols there is only one applicable rule of the $\DCPS$. The exceptions are the following: Firstly, whenever there is a schedule point in global state $\mathit{main}$, any of the inactive threads can be resumed, using rule~(\ref{rule:maincs}) for $T_\mathit{move}$. This is by design, as this guesses a token that can either be removed by a transition of the $\sTDPN$, or is already a token on the place we want to cover. A similar guess is made for a second token to be removed using rule~(\ref{rule:joinPickInput2}) for $T_\mathit{join}$. Secondly, all rules that depend on the existence of transitions in one of the three transducers are also nondeterministic. This is because the transducers are nondeterministic themselves, and the $\DCPS$ just guesses the next transition leading to an accepting run at the end.

These exceptions do not cause the $\DCPS$ to reach the state $\mathit{halt}$ if the performed guess is wrong. Our determinism in the rest of the construction causes the system to have no applicable rule at some point, if we ever guess a wrong token or transducer transition: The simulation of transducer moves will either get stuck with no applicable transducer transition, or with no final state at the end. The check for coverability of $w_\mathit{final}$ can also not advance, if we resumed a thread with the wrong stack content. All in all, the $\DCPS$ $\cN$ can only reach $g_\mathit{halt}$ if its run faithfully simulated a run of the $\sTDPN$ $\cN$.

\end{document}